\newcommand{\pmaz}[1]{ \textcolor{black}{#1}}
\newcommand{\Tr}{\mathrm{Tr}}
\renewcommand{\v}[1]{\ensuremath{\boldsymbol #1}}
\newtheorem{thm}{Theorem}
\newtheorem{lemma}[thm]{Lemma}
\newtheorem{cor}[thm]{Corollary}
\newtheorem{prop}[thm]{Proposition}
\newtheorem{defn}[thm]{Definition}
\title{Extraction of ergotropy: free energy bound and application to open cycle engines}
\author[1,]{Tanmoy Biswas}
\author[1,]{Marcin {\L}obejko}
\author[1]{Pawe{\l} Mazurek}
\author[3]{Konrad Ja{\l}owiecki}
\author[1]{Micha{\l} Horodecki}
\affil[1]{International Centre for Theory of Quantum Technologies, University of Gdansk, Wita Stwosza 63, 80-308 Gdansk, Poland}
\affil[3]{Institute of Theoretical and Applied Informatics, Polish Academy of Sciences, Ba{\l}tycka 5, 44-100 Gliwice, Poland}
\begin{document}
\maketitle

\begin{abstract}

  The second law of thermodynamics uses change in free energy of macroscopic systems to set a bound on performed work. Ergotropy plays a similar role in microscopic scenarios, and is defined as the maximum amount of energy that can be extracted from a system by a unitary operation. In this analysis, we quantify how much ergotropy can be induced on a system as a result of  system's interaction with a thermal bath, with a perspective of using it as a source of work performed by microscopic machines. We provide the fundamental bound on the amount of ergotropy which can be extracted from environment in this way. The bound is expressed in terms of the non-equilibrium free energy difference and can be saturated in the limit of infinite dimension of the system's Hamiltonian. The ergotropy extraction process leading to this saturation is numerically analyzed for finite dimensional systems. Furthermore, we apply the idea of extraction of ergotropy from environment in a design of a new class of stroke heat engines, which we label open-cycle engines. Efficiency and work production of these machines can be completely optimized for systems of dimensions 2 and 3, and numerical analysis is provided for higher dimensions.

\end{abstract}

\section{Introduction}

The second law of thermodynamics governs what state transformations are allowed if work and heat are exchanged with the environment. These transformations are described in terms of the thermodynamic potentials, which, regardless of the microscopic details, provides fundamental limitations for energy transformations in thermodynamic processes. In particular, these limitations set the upper limits on power and efficiency. \pmaz{One of the most important thermodynamic potentials is the free energy, whose non-equilibrium generalisation is given as }
\begin{eqnarray} \label{free_energyN}
F(\hat \rho) = E(\hat \rho) - \frac{1}{\beta} S(\hat \rho),
\end{eqnarray}
where $E(\hat \rho)$ is the average energy, $S(\hat \rho)$ is von Neumann entropy, and $\beta$ is the inverse temperature of the bath.  
In multiple of physical scenarios, free energy properly describes the directionality of state transitions if the system is in contact with a thermal bath. In particular, without any additional external driving, the system always thermalizes by decreasing its free energy, with the potential achieving its minimum at equilibrium. Moreover, if the system is able to perform work, the free energy dictates its maximal value, i.e., 
\begin{eqnarray} \label{second_law}
W \le - \Delta F.
\end{eqnarray} 
This powerful relation, known from classical thermodynamics, was also derived for driven systems that are weakly \cite{Aberg2013} or strongly coupled \cite{Seifert2016, Strasberg2019} to the thermal bath, as well as within the framework of the so-called thermal operations, where an explicit work storage is introduced \cite{Brandao2015,Skrzypczyk2014, BiswasFDR2022}. Remarkably, the inequality \eqref{second_law} can also be formulated as the (Jarzynski) equality if the free energy is treated as the fluctuating quantity \cite{Jarzynski1997, Esposito2009, Campisi2011, Alhambra2016}.  

\pmaz{On the other hand, if the system is isolated from a thermal bath, or the bath itself is finite, the inequality \eqref{second_law} cannot be saturated for processes leading to maximal change in free energy (reached in the Gibbs state), unless the initial state is Gibbs.} Considering an isolated system driven by a cyclic force, Allahverdyan et al. \cite{Allahverdyan2004} have replaced the free energy with a new thermodynamic quantity: ergotropy, defined as
\begin{eqnarray} \label{Defn:ergotropy}
R(\hat \rho) = E(\hat \rho) - \min_{\hat U} E(\hat U \hat \rho \hat U^\dag):= E(\hat{\rho})-E(\hat{\rho}_P),
\end{eqnarray}
where the minimum is taken over all unitary operators $\hat U$. If the state $\hat{\rho}$ is diagonal one can restrict to minimization over the set of permutations \cite{Ruch1976}. The minimum is achieved for the state $\hat{\rho}_P$ referred to as the passive state of the system, and $E(\hat\rho_P)$ is called the passive energy of the state $\hat{\rho}$. As it was shown in the seminal paper \cite{Allahverdyan2004}, ergotropy can be directly related to the free energy (cf. Section \ref{Chap2} below).  

The concept of the ergotropy appears in many areas of quantum thermodynamics. Firstly, it characterizes the optimal energy extracted from the so-called quantum batteries (see, e.g., \cite{Alicki2013, Binder2015, Campaioli2017, Monsel2020, Barra2020, Mir_passive_charging}). 
Recently, it was related to measures of entanglement \cite{Mir_Biparty, Mir_multiparty, Mir_independence}, coherences \cite{Francica2020} and relative entropy \cite{Sone2021}. It is worth noticing that ergotropy is in general not additive. This leads to the concept of complete passivity: the state is completely passive if the state composed of arbitrary number of its copies is passive. Remarkably, this links the idea of ergotropy with the equilibrium state, since any completely passive state has necessarily the Gibbs form \cite{Pusz1978}. Moreover, the complete passivity was recently related to the inability of extracting ergotropy in a presence of a catalyst \cite{Sparaciari2017}.

Ergotropy is also connected with an explicit model of the translationally-invariant energy-storage, called the ideal weight. The ideal weight model was introduced in \cite{Skrzypczyk2014} to formulate a thermodynamic framework which satisfies the Second Law. In particular, the authors proved that work $W$ performed via lifting the weight (increasing its average energy) is always smaller than the non-equilibrium free energy of the quantum state $\hat \rho$ (remaining in contact with a thermal reservoir): 
\begin{eqnarray} \label{skrzypczyk_work_extraction}
   W \le F(\hat \rho) - F(\hat \gamma_{\beta}) = \frac{1}{\beta}S(\hat \rho \| \hat \gamma_{\beta}),
\end{eqnarray}
where $\hat \gamma_{\beta}$ is the Gibbs state at inverse temperature of the bath $\beta$ and relative entropy $S(\hat \rho \| \hat \gamma_{\beta})$ is given by
\begin{equation}\label{eq:Defn_Relative_Entropy}
S(\hat \rho \| \hat \gamma_{\beta}) = \Tr\big(\hat{\rho}(\log \hat{\rho} -\log \hat\gamma)\big).
\end{equation}
 This result was subsequently generalized to an arbitrary correlated initial state of the weight and the system, and the extracted work was expressed as  \cite{Lobejko2020}:
\begin{eqnarray} \label{work_control_marginal}
    W = - \Delta R = R(\hat \sigma_i) - R(\hat \sigma_f),
\end{eqnarray}
where $\Delta R$ is a change in ergotropy of the so-called effective control-marginal state of the system $\hat \sigma_{i,f}$ (with indices $i$ and $f$ standing for initial and final states, respectively), containing information about coherences and correlations influencing the work extraction process \cite{Lobejko2020}. This reveals the connection between ergotropy and translational symmetry of the ideal weight model. Eq. \eqref{work_control_marginal} was further used in a formulation of the ``tight Second Law'', involving the ultimate bound given in Eq. \eqref{skrzypczyk_work_extraction}, but lowered by corrections expressing the effect of initial coherences and finite-size of the heat bath \cite{Lobejko2021}.

Since ergotropy can be understood as work done by the system, it is worth to explore its relevance in the operation of quantum thermal machines. 
Quantum thermal machines were characterized from multiple perspectives: starting from quantum optics \cite{Scovil1959, Scully2002,Jacobs2012,Goold2016}, thermalisation hypothesis in many-body systems \cite{Wilming2016,Perarnau2018}, dynamical description in quantum open systems \cite{Alicki1979}, and, most recently, theory of information \cite{ Rio2011} and resource theories \cite{Horodecki2003,Horodecki2013,Aberg2014,Brandao2015,Ng2015}. From the theoretical perspective, the family of heat engines in quantum regime can be categorized into two fundamental classes: autonomous and minimal coupling engines. Autonomous engines are the ones where the working body is simultaneously coupled to both heat baths, as well as to a  system storing energy  \cite{BrunnerVirtualqubit2012,PopescuSmallestHeatEngine2010}. On the other hand, minimal coupling engines act in discrete strokes \cite{Lobejko2020}. A simple construction of minimal coupling heat engines would enable us to emphasise and investigate the role of ergotropy extraction.  On the contrary, for autonomous engines, the difficulty in optimizing work production and efficiency lies in the necessity for taking into account all possible simultaneous interactions between working body and heat baths, thus restricting the analysis only to particular physical models \cite{JMosnel2018,Nimmrichter2017,KosloffLevy2014,Niedenzu2019,Lindenfels2019,SinghPRR2020}.


In this paper, we address the problem of ergotropy extraction via an energy conserving process. \pmaz{While recent interest in storing energy in quantum systems (\textit{quantum batteries}) focuses mostly on cases when ergotropy is extracted via the time-dependent driving  \cite{Andolina2018, Andolina2019}, here we concentrate solely on a fundamental, thermal contact scenario.} In other words, we ask how much ergotropy can be extracted via a heat flow through the system coupled to the thermal bath.  Since the ergotropy quantifies the amount of inversion of the population, one should expect that this is only possible via the non-Markovian evolution. Consequently, we use thermal operation framework \cite{Janzing2000,streater2009statistical, Horodecki2013} to optimize over such evolutions, emerging from energy conserving interactions. Our goal is to establish a connection between the ergotropy extraction and the Second Law of thermodynamics, as well as to characterize quantitatively the process of ergotropy extraction for finite and infinite dimensional systems. \pmaz{Alternatively, the problem of battery charging was recently formulated in terms of repeated interactions with a subsystem of  environment \cite{Barra_2019}, where a dissipative process leading to an active steady state of the battery was identified}. Secondly, we focus on the application of ergotropy extraction process in a design of thermal machines. We introduce a simple model of quantum heat engines under a name of open-cycle engines. Open-cycle engines form a subclass of the aforementioned minimal coupling engines. The simple operational idea of the open-cycle engine is to use the gradient of temperatures to extract ergotropy from the hot bath using the state taken from the cold reservoir. The name open-cycle is motivated by the demand to discard or thermalize the used resource after the energy is stored in the work reservoir. The complete analytical analysis of these machines in terms of efficiency and power production per cycle is performed for two- and three-dimensional working body systems. 

Here are the most important results that we obtain:
\begin{itemize}
    \item We formulate the Second Law of thermodynamics for a task of ergotropy extraction via any complete positive trace preserving (CPTP) map which preserves the Gibbs state of the system. The formulation is through an upper bound on the value of extracted ergotropy, expressed solely in terms of the non-equilibrium free energy; 
    \item We construct the process which saturates the bound for the quantum harmonic oscillator in the zero-temperature limit. 
    \item  
    On the opposite side of the spectrum of the dimensionality of the system, we identify optimal operations for ergotropy extraction for qubits and qutrits. We obtain this desrciption within the toy model of open-cycle engines, in which ergotropy extraction powers work production.
    \item We prove that optimal work production and efficiency of open-cycle engines is achieved for extremal thermal processes, disregarding the dimension of the working body.  
\end{itemize}

This paper is organized as follows. We start by introducing ergotropy of an isolated system and a system coupled with a bath in Section \ref{Chap2}. We derive upper bounds on ergotropy in terms free energy and relative entropy, and discuss the cases when these bounds can be saturated. In Section \ref{CHap3} we define the amount of ergotropy that can be extracted via any CPTP map that preserves the Gibbs state of the system and derive the upper bound in terms of quantum relative entropy (non-equilibrium free energy difference).  We prove that the upper bound can be saturated for a harmonic oscillator system initialized in the ground state. We also provide the numerical simulation for systems with finite number of levels which shows how the value of ergotropy extraction tends to the general bounds. We proceed to describe the open cycle heat engines in Section \ref{Chap4}, where we characterize a set of finite number of states over which optimal work production and efficiency can be achieved. Finally, we provide a complete analysis of the work production and efficiency for an open cycle heat engine with a qubit and qutrit working body. We conclude with the discussion in Section \ref{sec:5}.

\section{Ergotropy and free energy}\label{Chap2}

We start from discussing the established bounds on ergotropy of an isolated system \cite{Alicki2013} and system coupled with a bath \cite{Lobejko2021} and their saturation. 

From the definition of free energy and ergotropy given in Eq. \eqref{free_energyN} and Eq. \eqref{Defn:ergotropy}, and using the fact that entropy is invariant under unitary evolution,  we immediately have the following relation
\begin{eqnarray}
 R(\hat\rho) &=& E(\hat\rho)-E(\hat{\rho}_P) =  E(\hat\rho)-\frac{1}{\beta}S(\hat\rho)+\frac{1}{\beta}S(\hat\rho_P)-E(\hat{\rho}_P)\nonumber\\
 &=& F(\hat\rho)-F(\hat\rho_P)\label{eq:Ergotropy3}=\frac{1}{\beta}\big(S(\hat{\rho}\|\hat{\gamma}_{\beta})-S(\hat{\rho}_P\|\hat{\gamma}_{\beta})\big),
\end{eqnarray}
where $\hat{\rho}_P$ is the passive state of the system as given in Eq. \eqref{Defn:ergotropy}, and relative entropy is defined in Eq. \eqref{eq:Defn_Relative_Entropy}.
Note that the above holds for $\beta$ taking arbitrary values.

Among states with fixed entropy, energy of a system is minimized for the Gibbs state. This can be used in Eq. \eqref{eq:Ergotropy3} to minimize the free energy component: 
\begin{equation}\label{eq:Bound1}
    R(\hat{\rho}) = F(\hat{\rho})-F(\hat{\rho}_P) \leq F(\hat \rho) - F(\hat \gamma_{\beta^*}) = \frac{1}{\beta^{*}}S(\hat \rho \| \hat \gamma_{\beta^*}),
\end{equation}
where the inverse temperature $\beta^{*}$ is now uniquely specified by the relation $S(\hat{\rho})=S(\hat \gamma_{\beta^*})$. We see that the bound cannot be achieved in a single shot scenario if spectrum of $\hat{\rho}$ is different than that of $\hat\gamma_{\beta^*}$, as in this case it is impossible to achieve $\gamma_{\beta^{*}}$ from $\hat\rho_P$ using an unitary transformation. However, Alicki and Fannes showed in \cite{Alicki2013} that this bound can be achieved by processing asymptotically many number of copies of the state $\hat\rho$, i.e.,
\begin{equation}\label{Alicki}
    \lim_{N\rightarrow\infty}\frac{1}{N}R(\hat{\rho}^{\otimes N}) = F(\hat \rho) - F(\hat \gamma_{\beta^*}) = \frac{1}{\beta^{*}}S(\hat \rho \| \hat \gamma_{\beta^*}).
\end{equation}\\

We proceed by reporting a bound on the ergotropy of a system coupled with a bath at inverse temperature $\beta$ \cite{Lobejko2021}. Consider the system in the state $\hat{\rho}$ coupled with a thermal bath in Gibbs state $\hat{\tau}_{\beta}$ at inverse temperature $\beta$. The composite state of the system and bath is expressed as a product $\hat{\rho}\otimes\hat{\tau}_{\beta}$. Based on Eq. \eqref{eq:Ergotropy3}, one can provide the following upper bound on the ergotropy for a system in contact with a bath:
 \begin{equation}\label{eq:Bound2}
\begin{split}
    R(\hat \rho \otimes \hat{\tau}_{\beta}) &= F(\hat \rho \otimes \hat{\tau}_{\beta}) - F((\hat \rho \otimes \hat{\tau}_{\beta})_P) \\
    &= F(\hat \rho \otimes \hat{\tau}_{\beta})  - F( \hat{\gamma}_{\beta}\otimes\hat \tau_{\beta} ) + F( \hat{\gamma}_{\beta}\otimes\hat \tau_{\beta} ) - F((\hat \rho \otimes \hat{\tau}_{\beta})_P)\\
    &= \frac{1}{\beta} S(\hat \rho \| \hat \gamma_{\beta}) - \frac{1}{\beta} S((\hat \rho \otimes \hat{\tau}_{\beta})_P \| \hat{\gamma}_{\beta}\otimes\hat \tau_{\beta})\\&\leq \frac{1}{\beta} S(\hat \rho \| \hat \gamma_{\beta})= F(\hat{\rho})-F(\hat \gamma_{\beta}),
\end{split}
\end{equation}
where $\hat{\gamma}_{\beta}$ is the Gibbs state of the system in the inverse temperature of the bath $\beta$. This is similar to the bound given in Eq. \eqref{eq:Bound1}, with the crucial difference that the inverse temperature $\beta$ is now fixed and inherited from the bath. In \cite{Skrzypczyk2014,Lobejko2021}, the saturation of the bound is established for a quasi-reversible process consisting of infinitely many steps conserving the average energy, under the assumption of presence of an infinitely large heat bath.  


\section{Free energy bound for ergotropy extraction}\label{CHap3}

In the previous section, we have discussed bounds on ergotropy of an isolated system and a system coupled with a bath in equilibrium. These bounds and their saturation relate the ergotropy of the system with the non-equilibrium free energy (see Eq. \eqref{eq:Bound1} and Eq. \eqref{eq:Bound2}). This suggest that the ergotropy can be understood as a generalization of the free energy for finite-dimensional quantum systems \cite{Lobejko2021}. 

In this section, we move to the question of how much ergotropy can be extracted through system's coupling with a heat bath. We define ergotropy extraction  for a state $\rho$ evolving via a CPTP map $\Phi$ as the change of ergotropy in the process, i.e.:
\begin{equation}
    R(\Phi(\hat\rho))-R(\hat\rho).
\end{equation}
 We are interested in inducing ergotropy on a system through an interaction with a single heat bath, and we assume the total energy of the system and bath to be conserved. As a consequence, we consider CPTP maps $\Phi$ of the form:
\begin{equation}\label{eq:DefnTO}
    \Phi(\hat{\rho})=\Tr_{B}(U(\hat{\rho}\otimes\hat\tau_{\beta})U^{\dagger}),
\end{equation}
such that $[U,H_S+H_B]=0$, where $H_S$ and $H_B$ are the Hamiltonian of the system and the bath, respectively. As before, $\hat\tau_{\beta}$ is the Gibbs state of the bath at inverse temperature $\beta$. These maps are known as thermal operations \cite{Horodecki2013, Janzing2000, streater2009statistical}.
\pmaz{Let us stress that the defining property (\ref{eq:DefnTO}) is not restrictive, and serves only as a formulation of the first law of thermodynamics in the spirit of collision theory. Namely, one should consider the above unitaries $U$ as generated by an arbitrary interaction Hamiltonian, with the interaction negligible at first and last stages of the process, when the systems remain spatially separated.}
Thermal operations constitute a subclass of a bigger set of transformations called Gibbs-preserving maps, which preserve the Gibbs state of the system:  $\Phi(\hat\gamma_{\beta})=\hat\gamma_{\beta}$. The sets of Gibbs preserving maps and thermal operations are equal when restricted to states $\hat\rho$ which are diagonal in the energy eigenbasis i.e., $[\hat\rho,H_S]=0$.

Before addressing the question of ergotropy extraction via thermal operations, we propose a bound on the extraction by a general CPTP map.
\begin{prop}
 Consider an arbitrary state $\hat{\rho}$ that is evolving via a CPTP map $\Phi$. Then, the ergotropy of the final state $\Phi(\hat\rho)$ can be decomposed into:
 \begin{eqnarray} \label{ergotropy_relative_entropy}
 R(\Phi(\hat\rho)) = \frac{1}{\beta} S(\hat \rho\|\hat{\gamma}_{\beta}) - \frac{1}{\beta} S(\Phi(\hat \rho)_P\|\hat{\gamma}_{\beta}) - \frac{1}{\beta} \left( S(\hat \rho\|\hat{\gamma}_{\beta})-S(\Phi(\hat \rho)\|\hat{\gamma}_{\beta})\right),
 \end{eqnarray}
 where $\hat \gamma_{\beta}$ is a Gibbs state of the system at some inverse temperature $\beta$. 
\end{prop}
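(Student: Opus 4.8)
The statement is an algebraic identity that follows directly from the ergotropy--free energy relation \eqref{eq:Ergotropy3}, so the plan is to apply that relation to the \emph{output} state and then rearrange. Concretely, recall that \eqref{eq:Ergotropy3} reads, for an arbitrary state $\hat\sigma$ and an arbitrary inverse temperature $\beta$,
\begin{equation}
R(\hat\sigma) = \frac{1}{\beta}\bigl(S(\hat\sigma\|\hat\gamma_{\beta}) - S(\hat\sigma_P\|\hat\gamma_{\beta})\bigr),
\end{equation}
which in turn rests only on two facts already established in Section~\ref{Chap2}: von Neumann entropy is invariant under unitaries, and the passive state $\hat\sigma_P$ is obtained from $\hat\sigma$ by a unitary, hence $S(\hat\sigma_P)=S(\hat\sigma)$ and $E(\hat\sigma)-E(\hat\sigma_P)=F(\hat\sigma)-F(\hat\sigma_P)$, with $F(\hat\sigma)-F(\hat\gamma_\beta)=\tfrac1\beta S(\hat\sigma\|\hat\gamma_\beta)$.

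\textbf{Key steps.} First I would set $\hat\sigma=\Phi(\hat\rho)$ in the displayed relation, obtaining $R(\Phi(\hat\rho)) = \tfrac1\beta\bigl(S(\Phi(\hat\rho)\|\hat\gamma_{\beta}) - S(\Phi(\hat\rho)_P\|\hat\gamma_{\beta})\bigr)$. Second, I would write the trivial identity
\begin{equation}
S(\Phi(\hat\rho)\|\hat\gamma_{\beta}) = S(\hat\rho\|\hat\gamma_{\beta}) - \bigl(S(\hat\rho\|\hat\gamma_{\beta}) - S(\Phi(\hat\rho)\|\hat\gamma_{\beta})\bigr),
\end{equation}
and substitute it into the previous line. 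Collecting the three resulting terms gives exactly \eqref{ergotropy_relative_entropy}. Finally, although not needed for the identity itself, I would remark that the third bracketed term is non-negative by the data-processing (monotonicity) inequality for quantum relative entropy under the CPTP map $\Phi$, and that the second term is manifestly non-negative; this is what makes the decomposition useful as a bound, $R(\Phi(\hat\rho))\le \tfrac1\beta S(\hat\rho\|\hat\gamma_\beta)$, and sets up the discussion of saturation in the rest of the section.

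\textbf{Main obstacle.} There is essentially no analytic difficulty here — the result is a one-line consequence of \eqref{eq:Ergotropy3}. The only point requiring a moment's care is bookkeeping: making explicit that \eqref{eq:Ergotropy3} is valid for \emph{any} reference inverse temperature $\beta$ (here the bath temperature), so that the Gibbs state $\hat\gamma_\beta$ appearing in all three terms is the same fixed object and no hidden dependence on the spectrum of $\hat\rho$ or $\Phi(\hat\rho)$ is introduced. Once that is noted, the regrouping is immediate.
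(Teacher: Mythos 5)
Your proof is correct and follows essentially the same route as the paper: apply the identity \eqref{eq:Ergotropy3} to the output state $\Phi(\hat\rho)$ and rearrange by adding and subtracting $\tfrac{1}{\beta}S(\hat\rho\|\hat\gamma_\beta)$. One minor caution on your closing aside: the non-negativity of the third term does not follow from data processing for an arbitrary CPTP map $\Phi$ (monotonicity gives $S(\Phi(\hat\rho)\|\Phi(\hat\gamma_\beta))\le S(\hat\rho\|\hat\gamma_\beta)$, which yields the claim only when $\Phi(\hat\gamma_\beta)=\hat\gamma_\beta$), which is exactly why the paper reserves that step for the corollary on Gibbs-preserving maps.
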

The formula \eqref{ergotropy_relative_entropy} follows from the application of the identities given in Eq. \eqref{eq:Ergotropy3} :
\begin{eqnarray}
R(\Phi(\hat\rho)) = F(\Phi(\hat\rho))-F(\Phi(\hat\rho)_P), \ \ \frac{1}{\beta} S(\hat \rho \| \hat \gamma_{\beta}) = F(\hat \rho) - F(\hat \gamma_{\beta}).
\end{eqnarray}
In Eq. \eqref{ergotropy_relative_entropy} we identify three different contributions to the ergotropy of the final state $\Phi(\hat\rho)$. The first one $\frac{1}{\beta} S(\hat \rho\|\hat{\gamma}_{\beta})$ is the free energy difference of the initial state $\hat\rho$ and the Gibbs state $\hat\gamma_{\beta}$. This expression has appeared in the derivations of the bounds on ergotropy (see Eq. \eqref{eq:Bound1} and \eqref{eq:Bound2}), and provides the ultimate bound for the work extraction from a single heat bath (see Eq. \eqref{skrzypczyk_work_extraction}). The second contribution $-\frac{1}{\beta} S(\Phi(\hat \rho)_P\|\hat{\gamma}_{\beta})$ is always non-positive (due to the non-negativity of the relative entropy) and characterizes a thermodynamic distance of the final passive state to the equilibrium. In particular, this term is zero if and only if the final state $\Phi(\hat \rho)$ has the same set of eigenvalues as the Gibbs state, so that $\Phi(\hat \rho)_P = \hat \gamma_{\beta}$. Finally, the last term:
\begin{eqnarray}
\frac{1}{\beta} \left(S(\Phi(\hat \rho)\|\hat{\gamma}_{\beta}) - S(\hat \rho\|\hat{\gamma}_{\beta}) \right) = F(\Phi(\hat \rho)) - F(\hat \rho),
\end{eqnarray}
can be interpreted as the entropy production for a CPTP map $\Phi$. Indeed, for the maps $\Phi$ being thermal operations as given by (\ref{eq:DefnTO}), the change of the system energy can be identified with heat, i.e., $\Delta E = Q$, and then $F(\Phi(\hat \rho)) - F(\hat \rho) = Q - \frac{1}{\beta} \Delta S$. Considering the map $\Phi$ to be Gibbs-preserving leads to the crucial corollary
\begin{cor}If the CPTP map $\Phi$ is Gibbs-preserving, i.e., $\Phi(\hat \gamma_{\beta}) = \hat \gamma_{\beta}$, then ergotropy of the final state $\Phi(\hat\rho)$ is upper-bounded as
\begin{eqnarray}\label{second_law_ergotropy}
R(\Phi(\hat\rho)) \le \frac{1}{\beta} S(\hat \rho\|\hat{\gamma}_{\beta}).
\end{eqnarray}
\end{cor}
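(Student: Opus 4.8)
The plan is to read off the bound directly from the decomposition \eqref{ergotropy_relative_entropy} established in the preceding Proposition, by arguing that the two ``correction'' terms on its right-hand side are each non-positive under the Gibbs-preserving hypothesis, so that only the term $\frac{1}{\beta}S(\hat\rho\|\hat\gamma_\beta)$ survives as an upper bound.

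First I would dispose of the term $-\frac{1}{\beta}S(\Phi(\hat\rho)_P\|\hat\gamma_\beta)$. Since $\beta>0$ and the quantum relative entropy is non-negative (Klein's inequality), this term is $\le 0$ with no assumption on $\Phi$ at all; this is exactly the observation already recorded below \eqref{ergotropy_relative_entropy}. The second, and only substantive, step is to control the entropy-production term $-\frac{1}{\beta}\bigl(S(\hat\rho\|\hat\gamma_\beta)-S(\Phi(\hat\rho)\|\hat\gamma_\beta)\bigr)$. Here I would invoke the data-processing inequality (monotonicity of the quantum relative entropy under CPTP maps): for every CPTP map $\Phi$ and all states $\hat\rho,\hat\sigma$, one has $S(\Phi(\hat\rho)\|\Phi(\hat\sigma))\le S(\hat\rho\|\hat\sigma)$. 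Specializing $\hat\sigma=\hat\gamma_\beta$ and using the hypothesis $\Phi(\hat\gamma_\beta)=\hat\gamma_\beta$ gives $S(\Phi(\hat\rho)\|\hat\gamma_\beta)=S(\Phi(\hat\rho)\|\Phi(\hat\gamma_\beta))\le S(\hat\rho\|\hat\gamma_\beta)$, so that the entropy-production term is also $\le 0$ — equivalently, $F(\Phi(\hat\rho))-F(\hat\rho)\le 0$, a Gibbs-preserving map never raises the non-equilibrium free energy.

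Putting the two estimates into \eqref{ergotropy_relative_entropy} then yields $R(\Phi(\hat\rho))\le\frac{1}{\beta}S(\hat\rho\|\hat\gamma_\beta)$, which is \eqref{second_law_ergotropy}. I do not expect a genuine obstacle: the only non-elementary ingredient is the data-processing inequality for relative entropy, which is a standard fact (Lindblad--Uhlmann monotonicity), and the rest is direct substitution. If anything, the point worth emphasizing in the write-up is conceptual rather than technical — the Gibbs-preserving condition $\Phi(\hat\gamma_\beta)=\hat\gamma_\beta$ is precisely what converts generic monotonicity of relative entropy into the thermodynamically meaningful statement that the free-energy distance to equilibrium cannot increase, which is the content of the Second Law used here.
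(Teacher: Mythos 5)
Your proposal is correct and follows exactly the paper's own argument: both start from the decomposition \eqref{ergotropy_relative_entropy}, drop the term $-\frac{1}{\beta}S(\Phi(\hat\rho)_P\|\hat\gamma_\beta)$ by non-negativity of relative entropy, and drop the entropy-production term by monotonicity of relative entropy under the Gibbs-preserving map. The only difference is that you spell out the data-processing step $S(\Phi(\hat\rho)\|\Phi(\hat\gamma_\beta))\le S(\hat\rho\|\hat\gamma_\beta)$ explicitly, which the paper leaves implicit.
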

\begin{proof}
The proof of the corollary immediately follows from two properties of the relative entropy and Gibbs-preserving map: the non-negativity of the relative entropy and monotonicity of the relative entropy under Gibbs preserving maps: 
\begin{eqnarray}
\frac{1}{\beta} S(\Phi(\hat \rho)_P\|\hat{\gamma}_{\beta}) &\geq& 0, \label{eq:Nonnegativity}\\
\frac{1}{\beta} \left( S(\hat \rho\|\hat{\gamma}_{\beta})-S(\Phi(\hat \rho)\|\hat{\gamma}_{\beta})\right) &\geq& 0, \label{eq:Monotonicity}
\end{eqnarray}
Inserting these Eq. \eqref{eq:Nonnegativity} and Eq. \eqref{eq:Monotonicity} in Eq. \eqref{ergotropy_relative_entropy}, we obtain the bound.
\end{proof}
 The inequality \eqref{second_law_ergotropy} shows that the final ergotropy cannot be higher than the initial thermodynamic resource given by the non-equilibrium free energy. This remains in agreement with the bound (\ref{skrzypczyk_work_extraction}), derived for a model which defined work through changes of the avarage energy of the quantum weight. As mentioned earlier, in case of no quantum correlations between the ideal weight and the system, its ergotropy can be associated with maximal change of average energy of the weight. Then the bound \eqref{second_law_ergotropy} is just another statement of the Second Law. 

This result can be generalised to encapsulate ergotropy extraction for non-passive initial states via Gibbs-preserving maps.  
\begin{thm}
[Ultimate bound on extraction of ergotropy]The change of the ergotropy of the state $\hat\rho$ evolving via a Gibbs preserving map $\Phi$ 
(i.e. satisfying $\Phi(\hat{\gamma}_{\beta})=\hat{\gamma}_{\beta}$) is upper bounded as
\begin{equation}\label{eq:BOEforGPTransformationthm}
    R(\Phi(\hat\rho))-R(\rho) \leq \frac{1}{\beta}S(\hat\rho_P\|\hat{\gamma}_{\beta}).
\end{equation}
\end{thm}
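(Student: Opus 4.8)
The plan is to read off \eqref{eq:BOEforGPTransformationthm} directly from the Corollary above combined with the exact identity \eqref{eq:Ergotropy3} applied to the \emph{initial} state. First, since $\Phi$ is Gibbs-preserving, the Corollary gives $R(\Phi(\hat\rho)) \le \frac{1}{\beta} S(\hat\rho\|\hat{\gamma}_{\beta})$. Next, \eqref{eq:Ergotropy3} written for $\hat\rho$ itself reads $R(\hat\rho) = \frac{1}{\beta}\big(S(\hat\rho\|\hat{\gamma}_{\beta}) - S(\hat\rho_P\|\hat{\gamma}_{\beta})\big)$, i.e. $\frac{1}{\beta} S(\hat\rho\|\hat{\gamma}_{\beta}) = R(\hat\rho) + \frac{1}{\beta} S(\hat\rho_P\|\hat{\gamma}_{\beta})$. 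Substituting this into the Corollary's bound and transposing $R(\hat\rho)$ to the left-hand side yields the claim. So the whole argument is two lines once the Corollary and \eqref{eq:Ergotropy3} are available.

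For transparency I would also exhibit the route through the full decomposition \eqref{ergotropy_relative_entropy} of the Proposition, which makes the structure of the bound explicit. Subtracting $R(\hat\rho)$ in the form above from \eqref{ergotropy_relative_entropy} and cancelling the $S(\hat\rho\|\hat{\gamma}_{\beta})$ terms gives
\[
R(\Phi(\hat\rho)) - R(\hat\rho) = \frac{1}{\beta} S(\hat\rho_P\|\hat{\gamma}_{\beta}) - \frac{1}{\beta} S(\Phi(\hat\rho)_P\|\hat{\gamma}_{\beta}) - \frac{1}{\beta}\big( S(\hat\rho\|\hat{\gamma}_{\beta}) - S(\Phi(\hat\rho)\|\hat{\gamma}_{\beta})\big).
\]
The second term on the right is non-positive by non-negativity of the relative entropy \eqref{eq:Nonnegativity}, and the third term is non-positive by monotonicity of the relative entropy under the Gibbs-preserving map $\Phi$, i.e. $S(\Phi(\hat\rho)\|\hat{\gamma}_{\beta}) \le S(\hat\rho\|\hat{\gamma}_{\beta})$ as in \eqref{eq:Monotonicity}. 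Dropping both non-positive contributions leaves $R(\Phi(\hat\rho)) - R(\hat\rho) \le \frac{1}{\beta} S(\hat\rho_P\|\hat{\gamma}_{\beta})$, and equality requires $\Phi(\hat\rho)_P = \hat{\gamma}_{\beta}$ together with vanishing entropy production.

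I do not expect a genuine obstacle: the only non-trivial ingredient — the data-processing inequality for the relative entropy under Gibbs-preserving maps — has already been invoked in proving the Corollary, and everything else is bookkeeping with \eqref{eq:Ergotropy3}. The points worth a remark are, first, that the bound genuinely sharpens the Corollary for non-passive inputs, where $R(\hat\rho)>0$ and hence $\frac{1}{\beta} S(\hat\rho_P\|\hat{\gamma}_{\beta}) < \frac{1}{\beta} S(\hat\rho\|\hat{\gamma}_{\beta})$; and second, that all relative entropies appearing here are finite since $\hat{\gamma}_{\beta}$ is full rank, so no support subtleties arise even when $\hat\rho$ (and hence $\hat\rho_P$) is rank-deficient.
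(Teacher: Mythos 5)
Your argument is correct and is essentially the paper's own proof: the second route you write out is exactly the paper's chain of identities and the same two inequalities (non-negativity of $S(\Phi(\hat\rho)_P\|\hat\gamma_\beta)$ and monotonicity of the relative entropy, i.e.\ of the free energy, under the Gibbs-preserving map), merely expressed in relative-entropy rather than free-energy notation. Your first route is a slightly tidier packaging of the same facts — the theorem is the Corollary's bound minus the exact identity \eqref{eq:Ergotropy3} for $R(\hat\rho)$ — but it introduces no new ingredient beyond what the paper already uses.
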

\begin{proof}
We begin by writing the change in ergotropy using Eq. \eqref{eq:Ergotropy3} as 
\begin{eqnarray}
R(\Phi(\rho))-R(\rho)
                    &=& \Big(F(\Phi(\hat{\rho}))-F(\Phi(\hat\rho)_P)\Big)-\Big(F(\hat{\rho})-F(\hat{\rho}_P)\Big)\nonumber\\&=& \Big(F(\hat{\rho}_P)-F(\Phi(\hat{\rho})_P)\Big)-\Big(F(\rho)-F(\Phi(\hat\rho))\Big)\nonumber\\
                    &\leq&  \Big(F(\hat\rho_P)-F(\Phi(\hat\rho)_P)\Big) = \Big(F(\hat\rho_P)-F(\hat\gamma_{\beta})+F(\hat\gamma_{\beta})-F(\Phi(\hat\rho)_P)\Big)\nonumber\\
                 &=&\frac{1}{\beta}\Big(S(\hat\rho_P\|\hat\gamma_{\beta})-S(\Phi(\hat\rho)_P\|\hat\gamma_{\beta})\Big) \leq \frac{1}{\beta}S(\hat\rho_P\|\hat\gamma_{\beta}),
\end{eqnarray}
where the first inequality follows from the monotonicity of free energy under the Gibbs-preserving map  and the second inequality follows from non-negativity of relative entropy.
\end{proof}  
Note that the bound given in Eq. \eqref{eq:BOEforGPTransformationthm} does not depend on a particular  CPTP Gibbs-preserving map $\Phi$, but only on the initial state $\hat\rho$. This bound is consistent with the previous bound in Eq. \eqref{second_law_ergotropy}, since $S(\hat\rho_P\|\hat\gamma_{\beta})\leq S(\hat\rho\|\hat\gamma_{\beta})$. 
If the initial state $\hat\rho$ is already passive $R(\hat\rho)=0$, then inequality in Eq. \eqref{eq:BOEforGPTransformationthm} boils down to Eq. \eqref{second_law_ergotropy}. Therefore, we can interpret Eq. \eqref{eq:BOEforGPTransformationthm} as the second law for ergotropy extraction that sets the upper bound on the amount of ergotropy that can be extracted via a Gibbs preserving transformation. 

In the next section, we design an example showing saturation of the bound for $\Phi$ given by a thermal operation \eqref{eq:DefnTO}. 

\subsection{The maximal-energy thermal process and saturation of the bound on ergotropy extraction}\label{saturation}

\pmaz{In order to identify a scenario in which the general bound (\ref{eq:BOEforGPTransformationthm}) is saturated, we restrict ourselves to a subset of Gibbs preserving maps, generated by thermal operations. We recall the notion of thermal processes \cite{Horodecki2013}, which completely characterise the action of a thermal operation on the diagonal of a quantum state.}
\begin{lemma}\label{HO_IMP_LEM}
 Consider a given state $\hat\rho$ that is diagonal in the energy basis, i.e.,
 \begin{equation}\label{eq:Diagonal state}
     \hat{\rho}=\sum_{i=1}^d \langle \epsilon_i|\hat{\rho}|\epsilon_i\rangle|\epsilon_i\rangle\langle \epsilon_i|,
 \end{equation}
where $\{|\epsilon\rangle\}_{i=1}^d$ are energy eigenvectors, and a thermal operation $\Phi$ which transforms the state $\hat{\rho}$ to $\hat{\sigma}$. This is equivalent to 
existence of a stochastic matrix $A_\Phi$ such that 
\begin{eqnarray}\label{GPcondition}
A_\Phi\v{p}&=&\v{q},\\
A_\Phi\v{\gamma}_{\beta}&=&\v{\gamma}_{\beta},
\end{eqnarray}
where $\v{p}$, $\v{q}$ and $\v{\gamma}_{\beta}$ are probability vectors generated from diagonal entries of $\hat{\rho}$, $\hat{\sigma}$, and $\hat{\gamma}_{\beta}$ respectively, i.e.,
 \begin{eqnarray}
 \v{p} &=& \Big(\langle \epsilon_1|\hat{\rho}|\epsilon_1\rangle  \ldots \langle \epsilon_d|\hat{\rho}|\epsilon_d\rangle \Big),\nonumber \\
 \v{q} &=&  \Big(\langle \epsilon_1|\hat{\sigma}|\epsilon_1\rangle  \ldots \langle \epsilon_d|\tilde{\sigma}|\epsilon_d\rangle \Big),\nonumber \\
 \v{\gamma}_{\beta} &=& \Big(\frac{e^{-\beta\epsilon_1}}{Z}  \ldots \frac{e^{-\beta\epsilon_d}}{Z}\Big).\nonumber
 \end{eqnarray}
The stochastic matrix $A_{\Phi}$ associated with thermal operation $\Phi$ is called thermal process. 
\end{lemma}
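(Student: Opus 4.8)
The plan is to prove the stated equivalence by its two implications, treating the direction ``thermal operation $\Rightarrow$ thermal process'' as a direct computation, and importing the harder converse from Ref.~\cite{Horodecki2013}.

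For the forward direction, first I would fix energy eigenbases $\{|\epsilon_i\rangle\}$ of $H_S$ and $\{|E_b\rangle\}$ of $H_B$, write $\hat\tau_{\beta}=\sum_b \frac{e^{-\beta E_b}}{Z_B}|E_b\rangle\langle E_b|$, and expand the defining formula \eqref{eq:DefnTO} for a diagonal input $\hat{\rho}=\sum_j p_j|\epsilon_j\rangle\langle\epsilon_j|$. This gives
\begin{equation}
\langle\epsilon_i|\Phi(\hat\rho)|\epsilon_i\rangle=\sum_j\Big(\sum_{b,b'}\frac{e^{-\beta E_b}}{Z_B}\big|\langle\epsilon_i,E_{b'}|U|\epsilon_j,E_b\rangle\big|^2\Big)p_j,
\end{equation}
which identifies the candidate matrix $(A_{\Phi})_{ij}=\sum_{b,b'}\frac{e^{-\beta E_b}}{Z_B}\big|\langle\epsilon_i,E_{b'}|U|\epsilon_j,E_b\rangle\big|^2$ and yields $A_{\Phi}\v{p}=\v{q}$; since thermal operations are time-translation covariant they map diagonal states to diagonal states, so nothing is lost by tracking only the diagonal. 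Column-stochasticity, $\sum_i (A_{\Phi})_{ij}=1$, then follows from completeness of $\{|\epsilon_i,E_{b'}\rangle\}$ together with $\sum_b e^{-\beta E_b}/Z_B=1$.

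The only idea that matters for $A_{\Phi}\v{\gamma}_{\beta}=\v{\gamma}_{\beta}$ is energy conservation: $[U,H_S+H_B]=0$ makes $\langle\epsilon_i,E_{b'}|U|\epsilon_j,E_b\rangle$ vanish unless $\epsilon_j+E_b=\epsilon_i+E_{b'}$, so in $\sum_j (A_{\Phi})_{ij}\,e^{-\beta\epsilon_j}/Z_S$ one may replace the combined Boltzmann weight $e^{-\beta(\epsilon_j+E_b)}$ by $e^{-\beta(\epsilon_i+E_{b'})}$; factoring out $e^{-\beta\epsilon_i}/Z_S$ and using $\sum_{j,b}\big|\langle\epsilon_i,E_{b'}|U|\epsilon_j,E_b\rangle\big|^2=1$ (unitarity of $U$) collapses the remainder to $1$. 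The rest is bookkeeping, which I would not write out.

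For the converse, given a stochastic $A$ with $A\v{\gamma}_{\beta}=\v{\gamma}_{\beta}$, I have to build a bath $(H_B,\hat\tau_{\beta})$ and an energy-conserving $U$ realizing it on the diagonal. The route I would follow is the standard one of \cite{Horodecki2013}: decompose $A$ into a product of two-level elementary Gibbs-stochastic maps (the thermal analogues of $T$-transforms acting on a single pair of levels), realize each one by adjoining a bath qubit whose gap matches that pair and letting $U$ be a partial swap inside the resulting energy-conserving subspace, and then invoke that a composition of thermal operations is again a thermal operation. I expect this to be the main obstacle: away from the infinite-temperature limit the Birkhoff-type structure is lost (a general Gibbs-stochastic matrix need not be a mixture of Gibbs-permutations), so the decomposition into elementary moves and the control of the bath dimension needed to produce arbitrary transition probabilities are the substantive steps — and since in general arbitrarily large baths may be required, the equality of ``realizable by a thermal operation'' with ``Gibbs-stochastic'' should be read up to closure. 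As this is precisely what \cite{Horodecki2013} establishes, I would cite it rather than reprove it.
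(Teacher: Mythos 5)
Your proposal is correct and matches the paper's treatment: the paper offers no proof of this lemma at all, simply importing it from \cite{Horodecki2013}, while you make the easy direction explicit (the transition matrix $(A_{\Phi})_{ij}=\sum_{b,b'}\frac{e^{-\beta E_b}}{Z_B}|\langle\epsilon_i,E_{b'}|U|\epsilon_j,E_b\rangle|^2$, with stochasticity from unitarity and Gibbs preservation from $[U,H_S+H_B]=0$) and defer the converse to the same reference. Your closing caveat — that realizing an arbitrary Gibbs-stochastic matrix may require arbitrarily large baths, so the equivalence should be read up to closure — is an accurate refinement that the paper's bare statement glosses over.
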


The bound on ergotropy extraction given by \eqref{eq:BOEforGPTransformationthm} is saturated for a thermal operation $\Phi_{\text{max}}$ acting on a ground state of a harmonic oscillator system, $\hat H=\sum_{n} n\omega |n\rangle\langle n|$. Its associated  thermal process $A_{\Phi_{max}}$ is given by 

\begin{eqnarray}\label{map}
 A_{\Phi_{\text{max}}}= \begin{pmatrix}
W & B\\
\Omega & \v{0}
\end{pmatrix},
\end{eqnarray}
where 
\begin{eqnarray}
 W&:=&
\begin{pmatrix}
0 & 0 & 0 & 0 & \ldots &0\\
0 & 1 & 0 & 0 & \ldots &0\\
0 & 0 & 1 & 0 & \ldots &0\\
\vdots & \vdots & \vdots & \vdots & \vdots & \vdots\\
0 & 0 & 0 & 0 & \ldots &1
\end{pmatrix}_{(L-1)\times(L-1)}
\quad\quad B:=
\begin{pmatrix}
1 & 1 & 1 & 1 & \ldots & \ldots\\
0 & 0 & 0 & 0 & \ldots & \ldots\\
0 & 0 & 0 & 0 & \ldots & \ldots\\
\vdots & \vdots & \vdots & \vdots & \vdots & \vdots\\
0 & 0 & 0 & 0 & \ldots & \ldots
\end{pmatrix}_{(L-1)\times\infty},\\
\Omega&:=&\frac{1}{Z}\begin{pmatrix}
1 & 0 & 0 & 0 & \ldots &0\\
e^{-\beta\omega} & 0 & 0 & 0 & \ldots &0\\
e^{-2\beta\omega} & 0 & 0 & 0 & \ldots &0\\
e^{-3\beta\omega} & 0 & 0 & 0 & \ldots &0\\
\vdots  & \vdots & \vdots & \vdots & \ldots &\vdots\\
\end{pmatrix}_{\infty\times(L-1)}
\quad\quad\quad\quad \v{0}:=\begin{pmatrix}
0 & 0 & 0 & 0 & \ldots & \ldots\\
0 & 0 & 0 & 0 & \ldots & \ldots\\
0 & 0 & 0 & 0 & \ldots & \ldots\\
\vdots & \vdots & \vdots & \vdots & \vdots & \vdots\\
0 & 0 & 0 & 0 & \ldots & \ldots
\end{pmatrix}_{\infty\times\infty}
\end{eqnarray}
with the choice
\begin{equation}\label{L}
    L = 1+\frac{1}{\beta\omega}\log Z,
\end{equation}
taking integer values for  $\frac{1}{\beta\omega}\log Z \in \mathbb{N}$, where $Z=\Tr(e^{-\beta \hat H})$. One can immediately see  that $ A_{\Phi_{\text{max}}}$ is stochastic, and that it preserves the probability vector
\begin{equation}
    \v{\gamma}_{\beta} = \Big(\frac{1}{Z}, \quad \frac{e^{-\beta\omega}}{Z}, \quad \frac{e^{-2\beta\omega}}{Z} , \ldots \Big).
\end{equation}
 The action of $A_{\Phi_{\text{max}}}$ on the probability vector $(1 \; 0  \ldots 0)^{T}$ determines the  transformation of the ground state:
\begin{eqnarray}
\Phi_{\text{max}}\Big(|0\rangle\langle 0|\Big)=\hat G = \frac{1}{Z}\sum_{n=L}^{\infty}e^{-\beta\omega (n-L)}|n\rangle\langle n|.
\end{eqnarray}
It can be seen that the state $\hat G$ is the Gibbs state $\hat \gamma_\beta$ with populations shifted by $\omega L$. In order to calculate the ergotropy of the final state, we find the corresponding passive state $\hat G_P$, i.e., the minimal energy state obtained via the permutation of occupation probabilities. For this infinite-dimensional system, the passive state is found in the limit where all "zeros", which occupy the first $L$ energy levels, are "sent to infinity" (as in the Hilbert's Hotel paradox). In this limit, we have $\hat G_P=\hat\gamma_{\beta}$, i.e. the passive state of $\hat G$ is the Gibbs state. The proposed construction of the passive state does not apply in finite dimension, since the Gibbs state $\hat\gamma_{\beta}$ in this case does not have any "zeros" within its spectrum. 

Putting $G_P=\hat\gamma_{\beta}$ turns the LHS of the inequality \eqref{eq:BOEforGPTransformationthm} to
\begin{eqnarray}
R(\hat G)-R(|0\rangle\langle 0|) = E(\hat{G})-E(\hat{G}_P) = E(\hat{G})-E(\hat\gamma_{\beta}) = \frac{1}{\beta}\log Z,
\end{eqnarray}
while the RHS of the inequality is equal to:
\begin{eqnarray}\label{eq:ergotropychange}
\frac{1}{\beta}S(|0\rangle\langle 0|\|\hat \gamma_{\beta})=\frac{1}{\beta}\log Z.
\end{eqnarray}

\begin{figure}
    \centering
    \includegraphics[width = 0.6 \textwidth]{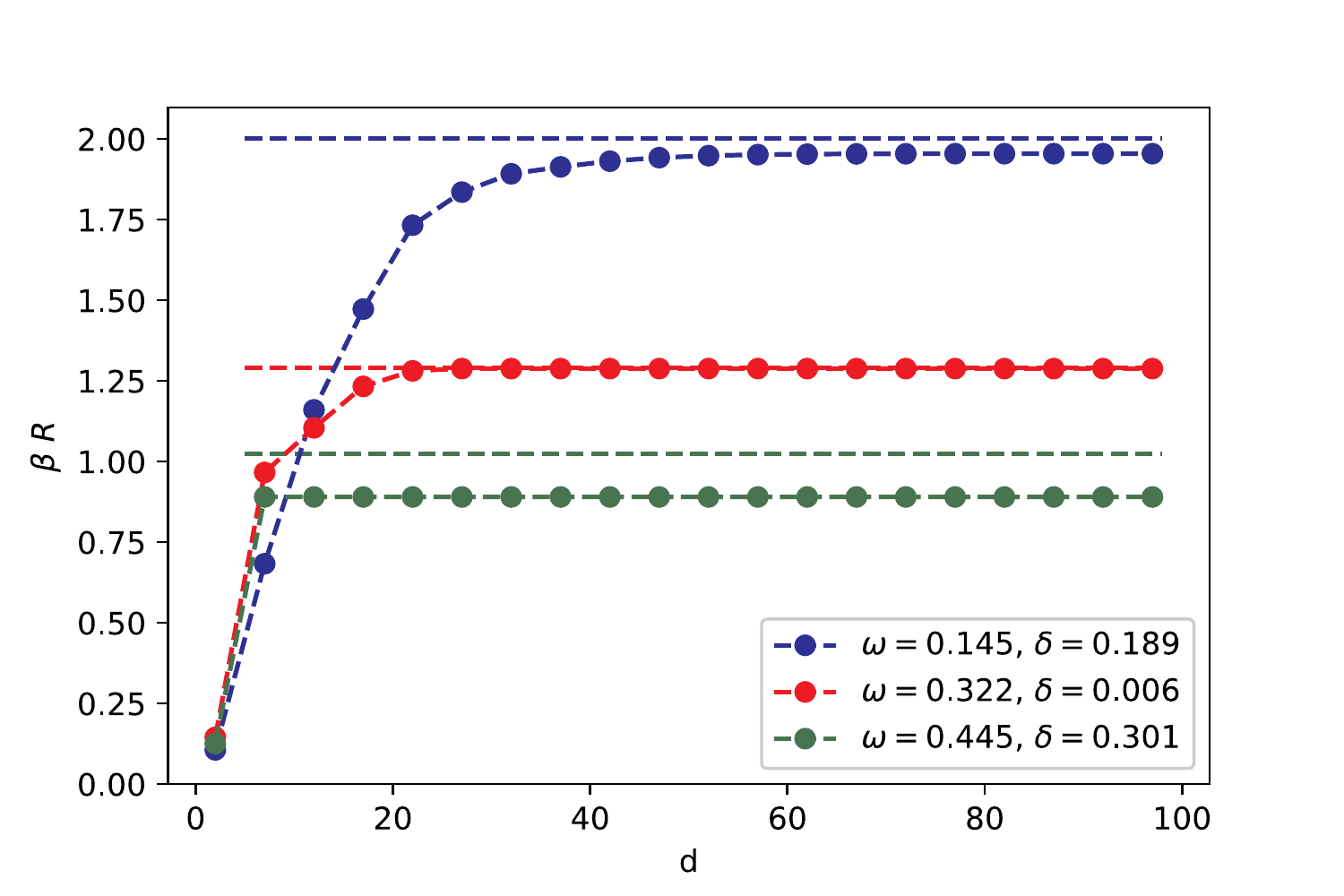}
    \caption{Ergotropy extraction for the $d$-dimensional qudit with frequency $\omega$ prepared in the ground state, for the thermal process at temperature $\beta$ that maximizes the final energy. Horizontal dashed lines correspond to the ergotropy extraction bound \eqref{eq:BOEforGPTransformationthm}. It can be seen that the final ergotropy for the maximal-energy process saturates with growing dimension $d$, and it approaches the optimal ergotropy if  $\delta \to 0$.}
    \label{fig:saturation}
\end{figure}

We see that the harmonic oscillator in the zero-temperature limit (i.e., in the ground state) and with frequency $\omega$ satisfying $\frac{1}{\beta\omega}\log Z \in \mathbb{N}$ saturates the bound of the ergotropy extraction  \eqref{eq:BOEforGPTransformationthm}. The thermal process which lead to the saturation maximizes the final energy of the state. This is visible from the fact that all the action of the process acting on the ground state $|0\rangle\langle 0|$ is described by its first column. It can be easily verified that for the $n$ row of the first column of $\Omega$, the value of the corresponding matrix element is upper bounded by $e^{-\beta\omega (L+n-2)}$ (otherwise the map is not Gibbs-preserving). However, from (\ref{L}) we read the value of this element to be equal to it: $\frac{e^{-\beta\omega(n-1)}}{Z}=e^{-\beta\omega(L+n-2)}$. Therefore, the map leads to maximal possible occupation of high energy states, at the expense of the lowest $L-1$ energy levels.

Below, we extend to a finite dimension the protocol maximazing energy of the final state, and calculate the amount of extracted ergotropy. Our numerical calculations are based on the fact that the state with maximal energy which can be obtained via a thermal operation, can be associated with a so-called thermomajorozation curve which is tightly-thermomajorized by the curve associated with the initial state, and which has the so-called descending $\beta$-order. We refer the reader to Lemma \ref{energy_max} of Supplementary Material for the detailed description of the protocol. 

Let us define the detuning parameter:
\begin{eqnarray}
\delta = \min \left[\frac{1}{\beta\omega}\log Z - \lfloor \frac{1}{\beta\omega}\log Z \rfloor, \lceil \frac{1}{\beta\omega}\log Z \rceil - \frac{1}{\beta\omega}\log Z \right].
\end{eqnarray}
In Fig. \ref{fig:saturation} we present the results of the numerical simulation for the final ergotropy of the $d$-dimensional system with different frequencies $\omega$ and non-zero value of the detuning parameter $\delta$, with the ergotropy extraction protocol maximizing the energy of the final state. In the regime $\delta\rightarrow 0$, extractable ergotropy approaches the bound already for moderate dimension $d$, while the non-zero detuning parameter prohibits the strict saturation even in the limit $d\rightarrow\infty$. Note that in the limit $\beta\rightarrow 0$, thermal processes become permutations, and the protocol maximizing energy of the final state is the one maximizing its ergotropy. Therefore, the numerical results presented in Fig. \ref{fig:saturation} indicate that no thermal operation is able to strictly saturate the bound \eqref{eq:BOEforGPTransformationthm} for non-zero detuning $\delta$ and high temperatures of the bath, with harmonic oscillator prepared in the ground state.

In the next section, we move to the case of small dimensional systems, in which the bound for ergotropy extraction can be far from being saturated. We analyze these cases in the context of utilizing small dimensional system as working bodies in thermal machines which extract ergotropy from the environment. We will show that for finite temperatures of the hot bath, optimal protocols for ergotropy extraction dependents on the spectrum of the Hamiltonian and bath temperatures, even if we relax cyclicity constraints by demanding that the initial state is in thermal equilibrium with the cold bath.

\section{The open-cycle heat engines}\label{Chap4}

In this section, we consider the model of the open-cycle heat engine and describe the role of ergotropy extraction from the bath as work that can be stored in a battery modelled by ideal weight \cite{Skrzypczyk2014}. We shall analyze the optimal performance of such an engine and characterize the states for which optimal work production and efficiency can be achieved. 
We consider open cycle engine within the family of minimal coupling engines  \cite{Lobejko2021}, where  the working body alternately couples to the heat baths and the battery, resembling traditional stroke engines. 

\subsection{Description of the open cycle heat engine}
An open cycle heat engine is a minimal coupling thermal machine introduced in \cite{Lobejko2020}, consisting of two baths in equilibrium associated with two different inverse temperatures $\beta_H$ and $\beta_C$, and a battery. We denote by $\beta_H$ the inverse temperature of the hot heat bath, while  $\beta_C$ stands for the inverse temperature of the cold bath, and $\beta_{H}<\beta_{C}$. The working body is initially in a thermal state with respect to the cold bath. In subsequent stages, it alternately couples to the hot bath and the battery through discrete two-body energy conserving operations referred to as strokes.  \\
\begin{figure}[t]
\includegraphics[width=11cm]{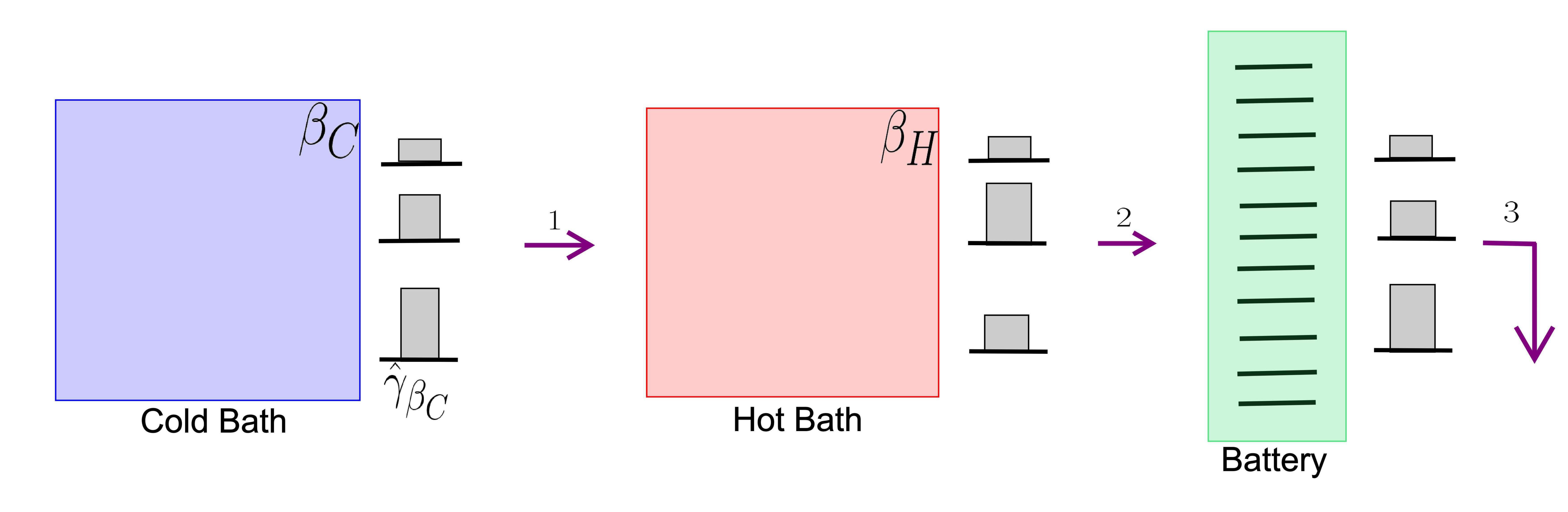}
\centering
\caption{\textbf{A graphical representation of open-cycle heat engine with a qutrit working body:} This engine is composed of a cold and hot bath at inverse temperature $\beta_C$ and $\beta_H$ respectively, and a battery as a work storage. Initially, the working medium is thermalized with respect to cold bath. In stroke 1, working body extract ergotropy from hot bath, that induces  non-passivity in the working body. In stroke 2, ergotropy of the working body is stored in the battery as work. This interaction makes the state of the working body passive. Finally, in stroke 3, working body can be discarded or can be thermalized with the cold bath.}
\label{fig:Open_Cycle_qutrit_Engine}
\end{figure}

The design of the open-cycle engine is schematically presented in Fig.~\ref{fig:Open_Cycle_qutrit_Engine}. As in the minimal coupling quantum heat engines, open-cycle engine in its first stroke utilizes the interaction between the hot bath and the working body to extract ergotropy. In the second stroke, extracted ergotropy is stored as work on the battery. In the final stroke, the working body gets thermalized with the cold bath (or discarded). We proceed with the detailed description of the engine.

\subsubsection{Working body thermalized with respect to cold bath.}\label{cold}

At the beginning of the procedure, we initialize both of the baths in equilibrium i.e.,
\begin{equation}\label{Gibbs} 
    \hat \tau_{\beta_H} = \frac{e^{-\beta_H\hat {H}_H}}{\Tr(e^{-\beta_H\hat {H}_H})} \quad\quad \hat \tau_{\beta_C} = \frac{e^{-\beta_C\hat {H}_C}}{\Tr(e^{-\beta_C\hat {H}_C})} 
\end{equation}
where $\hat{H}_H$ and $\hat{H}_C$ are Hamiltonians of hot and cold reservoir, respectively. We assume that heat capacities of the baths are infinite, thus the temperatures baths remain unchanged through the operation of the engine. The working body is a $d$-dimensional system initially in equilibrium with cold bath. 
The Hamiltonian associated with the working body is given by 
\begin{equation}
    \hat{H}_{S}=\sum_{i=0}^{d}\omega_{i}|\epsilon_i\rangle\langle \epsilon_i|,
\end{equation}
while state can be written as 
\begin{equation}\label{state_of_WB}
    \hat{\rho}_{S}=\frac{e^{-\beta_{C}\hat{H}_{S}}}{\Tr(e^{-\beta_{C}\hat{H}_{S}})}.
\end{equation}  
The initial state of the engine can be written as 
\begin{equation} \label{initial_state}
    \hat \rho = \hat \rho_{S} \otimes \hat \tau_{\beta_{H}} \otimes \hat\rho_{B}, 
\end{equation}
where $\rho_{B}$ is a state of the battery. The free Hamiltonian $\hat{H}_0$ of the combined system is given by 
\begin{eqnarray}
\hat{H}_0 =\hat{H}_S+\hat{H}_H+\hat{H}_B,
\end{eqnarray}
where $\hat{H}_S$, $\hat{H}_H$ , $\hat{H}_B$ are Hamiltonian of the system, hot bath, and battery respectively. If the open-cycle engine runs iteratively, such that after storing the ergotropy of the working body in the battery it has been thermalized again with cold bath, then  the engine at the beginning of each of the runs is affected only by the change of the battery state  $\rho_{B}$. The role of the working medium is to steer the energy flow from hot bath to the battery, as described below.

\subsubsection{Interaction of working body with the hot heat bath and battery via discrete two body energy conserving strokes.}\label{inter}

Following the definition of stroke operations in \cite{Lobejko2020}, we allow only for the  interactions among any two bodies at a given moment. This implies that the unitary evolution of an engine can be decomposed into a product of two unitaries: 
\begin{equation} \label{total_unitary}
    \hat U = \hat U_{B} \hat{U}_{H},
\end{equation}
where $\hat{U}_{H}$ describes the interaction between the hot bath and the working body, $\hat{U}_{B}$ describes the interaction between the system and the battery. 

In order to account for energy in the engine, we demand that the unitaries conserve the total energy. Therefore, we impose the constraint 
\begin{equation}\label{comut}
[\hat{U}_{H},\hat{H}_{S}+\hat{H}_{H}]=0 \;\quad \text{and} \quad\; [\hat U_{B},\hat{H}_{S}+\hat{H}_{B}]=0.
\end{equation}

The purpose of a unitary  $\hat{U}_{H}$ is to extract ergotropy from the hot bath to the working body (ergotropy extraction step) in the first stroke, while $\hat U_{B}$ aims at charging the battery by converting the ergotropy that has been extracted from the hot bath into work in the second stroke. This justifies the ordering of the unitaries given in Eq. \eqref{total_unitary}.

The interaction between the hot bath and the working body generates $\hat{U}_{H}$, which induces a thermal operation $\mathcal{E}_{H}$ given in Eq. (\ref{eq:DefnTO}) which leads to extraction of ergotropy from the hot bath:
\begin{equation}\label{mapTO}
    \hat{\rho}_{S}\rightarrow \Tr_{H}[\hat{U}_{H} (\hat{\rho}_{S}\otimes\hat{\tau}_{H})\hat{U}_{H}^{\dagger}]:=\mathcal{E}_{H}(\hat{\rho}_S).\end{equation}

As the initial state $\hat{\rho}_S$ of the working body is in equilibrium with cold bath, we have $R(\hat{\rho}_S)=0$. Therefore, for a work storage modeled by the ideal weight (see discussion in Supplementary Material \ref{SubWorkStotage}), work stored at it in the second stroke is equal to ergotropy extracted in the first stroke \cite{Skrzypczyk2014,Lobejko2020}:
\begin{equation}\label{eq:Defn_of_Work}
    W(\mathcal{E}_H(\hat{\rho}_S))= R(\mathcal{E}_H(\hat{\rho}_S).
\end{equation}

Since the working body interacts with the hot bath, the heat flow from the hot bath to the working body increases its average energy. The transferred amount of heat $Q$ is defined as the change of average energy of the system due to interaction with the hot bath, i.e., 
\begin{equation}\label{Defn:HeatExchanged}
    Q(\mathcal{E}_{H}(\hat{\rho}_S))= \Tr(\hat{H}_S\big(\mathcal{E}_{H}(\hat{\rho}_S)-\hat{\rho}_S\big)).
\end{equation}
In agreement with Clausius's statement of second law of thermodynamics, in the absence of external work source, heat flows from hot to cold body. The definition (\ref{Defn:HeatExchanged}) abides this, which ensures that the transferred amount of heat is always non-negative:
\begin{lemma}
[Directionality of heat flow from hot bath to a working body that is thermalized with cold bath]
\label{Postive_Heat}
Consider a state $\hat\rho_S$ that is  initially thermalized with cold bath at temperature $\beta_C$, transforming via any thermal operation $\mathcal{E}_H$ with respect to the hot bath at temperature $\beta_H$, then the amount of the exchanged heat, given in Eq. (\ref{Defn:HeatExchanged}), is always non-negative i.e.,
\begin{equation}
    Q(\mathcal{E}_{H}(\hat{\rho}_S))= \Tr(\hat{H}_S\big(\mathcal{E}_{H}(\hat{\rho}_S)-\hat{\rho}_S\big)) \geq 0.
\end{equation}
\end{lemma}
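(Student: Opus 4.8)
The plan is to obtain the inequality by combining two ``second law'' ingredients applied at the two different temperatures: an exact identity rewriting the exchanged heat $Q$ as a relative entropy with respect to the \emph{cold} Gibbs state (this uses crucially that the working body starts in the cold thermal state $\hat\rho_S=\hat\gamma_{\beta_C}$), together with the monotonicity of relative entropy under $\mathcal E_H$, which is Gibbs-preserving with respect to the \emph{hot} bath, $\mathcal E_H(\hat\gamma_{\beta_H})=\hat\gamma_{\beta_H}$.

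First I would establish the identity. Since $\hat\rho_S=\hat\gamma_{\beta_C}$, writing $\log\hat\gamma_{\beta_C}=-\beta_C\hat H_S-\log Z_C$ and expanding $S(\mathcal E_H(\hat\rho_S)\,\|\,\hat\gamma_{\beta_C})$ alongside $S(\hat\gamma_{\beta_C}\,\|\,\hat\gamma_{\beta_C})=0$, the $\log Z_C$ contributions cancel and one is left with
\[
\beta_C\, Q(\mathcal E_H(\hat\rho_S)) \;=\; S\!\big(\mathcal E_H(\hat\rho_S)\,\|\,\hat\gamma_{\beta_C}\big)\;+\;\Delta S,\qquad \Delta S:=S(\mathcal E_H(\hat\rho_S))-S(\hat\rho_S).
\]
Next, since thermal operations are Gibbs-preserving and relative entropy is monotone under CPTP maps, $S(\hat\rho_S\,\|\,\hat\gamma_{\beta_H})\ge S(\mathcal E_H(\hat\rho_S)\,\|\,\hat\gamma_{\beta_H})$; expanding both sides in exactly the same way (now with $\beta_H$ and $Z_H$) and simplifying yields the Clausius-type bound $\Delta S\ge \beta_H\,Q(\mathcal E_H(\hat\rho_S))$, i.e.\ the entropy increase is at least the heat times $\beta_H$.

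Finally I would combine the two: substituting $\Delta S\ge\beta_H Q$ into the identity and using $S(\mathcal E_H(\hat\rho_S)\,\|\,\hat\gamma_{\beta_C})\ge 0$ gives $\beta_C Q\ge \beta_H Q$, i.e.\ $(\beta_C-\beta_H)\,Q(\mathcal E_H(\hat\rho_S))\ge 0$; since $\beta_H<\beta_C$ this forces $Q(\mathcal E_H(\hat\rho_S))\ge 0$, with equality attained when $\mathcal E_H$ acts trivially, consistently with the fact that $\hat\rho_S$ is already passive and thermal.

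I do not expect a serious technical obstacle; the one non-obvious move is the decision to evaluate $Q$ \emph{against the cold Gibbs state} while extracting the inequality \emph{from the hot Gibbs state}, so that the two temperatures enter through different channels — one as an exact identity, one as a monotonicity bound — and their interplay with $\beta_H<\beta_C$ produces the correct sign. An alternative, more geometric route would use that the pairs $(E(\sigma),S(\sigma))$ lie below the concave Gibbs entropy–energy curve $\Sigma$ (whose slope $\Sigma'(E)$ is the inverse temperature of the Gibbs state of energy $E$), and that monotonicity of the $\beta_H$-free energy confines $E(\mathcal E_H(\hat\rho_S))$ to the sublevel set of the convex map $E\mapsto E-\beta_H^{-1}\Sigma(E)$, which lies to the right of $E(\hat\gamma_{\beta_C})$ precisely because $\beta_C>\beta_H$; but the relative-entropy computation above is shorter and uses only tools already introduced in the paper.
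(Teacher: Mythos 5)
Your proof is correct, and it takes a genuinely different route from the paper. The three ingredients all check out: (i) since $\hat\rho_S=\hat\gamma_{\beta_C}$, expanding $S(\mathcal E_H(\hat\rho_S)\,\|\,\hat\gamma_{\beta_C})$ and using $\log Z_C=S(\hat\gamma_{\beta_C})-\beta_C E(\hat\gamma_{\beta_C})$ indeed yields the exact identity $\beta_C\,Q=S(\mathcal E_H(\hat\rho_S)\,\|\,\hat\gamma_{\beta_C})+\Delta S$; (ii) monotonicity of relative entropy under the CPTP map $\mathcal E_H$ together with $\mathcal E_H(\hat\gamma_{\beta_H})=\hat\gamma_{\beta_H}$ gives the Clausius bound $\Delta S\ge\beta_H Q$; and (iii) non-negativity of relative entropy then forces $(\beta_C-\beta_H)Q\ge 0$, whence $Q\ge 0$ because $\beta_H<\beta_C$. (The relative entropies are finite since Gibbs states are full rank, so there is no support issue.) The paper instead argues structurally: it uses linearity of $Q$ to push the minimization to the extremal points of the thermal polytope $\mathcal T(\hat\rho_S)$, and then shows that every extremal state with $\beta$-order different from $(1,2,\dots,d)$ admits an explicit two-level Gibbs-stochastic process that strictly lowers its energy, so the minimizer must be $\hat\rho_S$ itself and the minimum of $Q$ is exactly $0$. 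Your argument is shorter, avoids the thermomajorization and extremal-point machinery entirely, and is strictly more general — it applies to any CPTP map preserving $\hat\gamma_{\beta_H}$ (not just thermal operations) and does not rely on finite dimension or on the polytope structure. What the paper's proof buys in exchange is structural information that fits its surrounding narrative (identifying which extremal points can lower energy and characterizing the minimizer within the polytope), language that is reused in the optimization of work and efficiency over extremal states; but for the statement of the lemma itself, your entropic derivation is a complete and arguably cleaner proof.
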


\begin{proof}[Proof sketch:]
From the linearity of the function $Q(\mathcal{E}_{H}(\hat{\rho}_S))$ we conclude that minimal value of the exchanged amount of heat is obtained at one of the extremal points of the set of states $\mathcal{T}(\hat{\rho}_S)$ - the set of states which can be achieved by applying a thermal operations on $\hat{\rho}_S$. One of those extremal points 
is the state $\hat{\rho}_S$ itself. 
Thus we have to show that minimum of the average energy of extremal points of $\mathcal{T}(\hat{\rho}_S)$  is never smaller than average energy of $\hat{\rho}_S$. 

To this end we shall first note that there is only one extremal element of this set 
with $\beta$-order (for definition of $\beta$-order see section \ref{MPofTO} of appendix) given by $(1,2,\ldots, d)$. Next we show, that for any state with different  $\beta$-order, there exists a thermal operation $\mathcal{A}$ which decreases its average energy. This implies that for extremal states of $\mathcal{T}(\hat{\rho}_S)$ which are different than $\hat{\rho}_S$, their average energy is strictly higher than average energy of some other states from $\mathcal{T}(\hat{\rho}_S)$
(i.e. those resulting from action of $A$). Therefore the minimum must be achieved on $\hat{\rho}_S$, which gives the minimal value 
of $Q$ equal to zero. 
The complete proof can be found in section \ref{PROOF_positive_heat_flow} of the Supplementary Material.
\end{proof}

Finally, we define the efficiency of open cycle heat engine as the ratio work stored in the battery with the amount of heat exchanged from the hot bath i.e.,
\begin{equation}\label{eq:Defn_efficiency}
    \eta(\mathcal{E}_H(\hat{\rho}_S)) = \frac{W(\mathcal{E}_H(\hat{\rho}_S)}{Q(\mathcal{E}_H(\hat{\rho}_S)}=\frac{R(\mathcal{E}_H(\hat{\rho}_S)}{Q(\mathcal{E}_H(\hat{\rho}_S)}.
\end{equation}
In the next section we shall discuss in detail the optimal performance of the open cycle heat engine. Precisely, how one can deposit the maximum amount of work in the battery and perform it the most efficiently? This question is equivalent to extraction of maximum ergotropy from the hot heat bath as one can see from Eq. \eqref{eq:Defn_of_Work}.

\subsection{Optimal performance for the open cycle heat engine} \label{optimal_performance_section}
Below, we optimize the work and efficiency defined in Eq. \eqref{eq:Defn_of_Work} and Eq. \eqref{eq:Defn_efficiency} over the set of states that can be achieved from the state of working body $\hat{\rho}_S$ via a thermal operation. As a set of thermal processes forms a polytope, it immediately follows from lemma \ref{HO_IMP_LEM} that the set of states achievable via thermal operations from any given diagonal state also forms a polytope. We denote by $\mathcal{T}(\hat{\rho}_S)$ the set of states which can be  achieved from $\hat{\rho}_S$ via a thermal operation. This set is called thermal polytope of the state $\hat{\rho}_S$. 
Therefore, we are interested in maximizing work and efficiency over all the states in $\mathcal{T}(\hat{\rho}_S)$ i.e.,
\begin{equation}\label{eq:Max_Erg_and_Max_eff}
    W_{\text{max}} = \max_{\hat\sigma\in\mathcal{T}(\hat{\rho}_S)} R(\hat\sigma), \quad \quad \eta_{\max} = \max_{\hat\sigma\in\mathcal{T}(\hat{\rho}_S)} \eta(\hat\sigma).
\end{equation}
 The theorem below characterises the states in the $\mathcal{T}(\hat{\rho}_S)$ which lead to optimal work and efficiency.    

\begin{thm}[Optimal performance of open cycle engine]\label{Thm_of_extremal_pts}
 The maximum value of work and efficiency for an open cycle quantum heat engine achieved at extremal points of the set of states $\mathcal{T}(\hat{\rho}_S)$.
\end{thm}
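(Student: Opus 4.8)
The plan is to treat the two optimizations separately, using the fact established above (via Lemma~\ref{HO_IMP_LEM}) that $\mathcal{T}(\hat\rho_S)$ is a polytope, hence a compact convex set with finitely many extreme points. For the work $W(\hat\sigma)=R(\hat\sigma)$ I would show that $R$ is a \emph{convex} function on $\mathcal{T}(\hat\rho_S)$; for the efficiency $\eta(\hat\sigma)=R(\hat\sigma)/Q(\hat\sigma)$ I would show it is \emph{quasi-convex}. In both cases the maximum of such a function over a polytope is attained at an extreme point, which is exactly the claim.

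For the work part: write $R(\hat\sigma)=E(\hat\sigma)-E(\hat\sigma_P)$. The map $\hat\sigma\mapsto E(\hat\sigma)=\Tr(\hat H_S\hat\sigma)$ is linear, while the passive energy $E(\hat\sigma_P)=\min_{\hat U}\Tr(\hat H_S\hat U\hat\sigma\hat U^\dagger)$ is a minimum of linear functionals of $\hat\sigma$, hence concave (for the energy-diagonal states relevant here, by Lemma~\ref{HO_IMP_LEM}, this is the familiar minimum over permutations of the occupation vector, $E(\hat\sigma_P)=\min_{\pi}\sum_i\omega_i\,p_{\pi(i)}$, manifestly concave in $\v p$). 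Therefore $R=E-E(\cdot_P)$ is convex and continuous on the polytope $\mathcal{T}(\hat\rho_S)$, so it attains its maximum at one of the extreme points; this gives the statement for $W_{\max}$.

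For the efficiency part: by Eq.~\eqref{Defn:HeatExchanged}, $Q(\hat\sigma)=E(\hat\sigma)-E(\hat\rho_S)$ is affine in $\hat\sigma$ and, by Lemma~\ref{Postive_Heat}, non-negative on $\mathcal{T}(\hat\rho_S)$. Fix $\hat\sigma\in\mathcal{T}(\hat\rho_S)$ with $Q(\hat\sigma)>0$ and decompose it into extreme points, $\hat\sigma=\sum_j\lambda_j\hat v_j$ with $\lambda_j\ge 0$, $\sum_j\lambda_j=1$. Convexity of $R$ and affinity of $Q$ give
\begin{equation}
\eta(\hat\sigma)=\frac{R(\hat\sigma)}{Q(\hat\sigma)}\le\frac{\sum_j\lambda_j R(\hat v_j)}{\sum_j\lambda_j Q(\hat v_j)}\le\max_{j:\,Q(\hat v_j)>0}\frac{R(\hat v_j)}{Q(\hat v_j)}=\max_j\eta(\hat v_j),
\end{equation}
where the last inequality is the mediant inequality $\big(\sum_j a_j\big)/\big(\sum_j b_j\big)\le\max_j a_j/b_j$, valid for $a_j\ge 0$, $b_j>0$; extreme points with $Q(\hat v_j)=0$ are harmless because, by the argument in the proof of Lemma~\ref{Postive_Heat} (its uniqueness part), the only extreme point of $\mathcal{T}(\hat\rho_S)$ with vanishing heat is $\hat\rho_S$ itself, which is passive and thus has $R(\hat\rho_S)=0$, so it contributes $0$ to both numerator and denominator and may be dropped. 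Modulo this degenerate point, one can equivalently package the argument as quasi-convexity: the sublevel set $\{\hat\sigma:\eta(\hat\sigma)\le\alpha\}=\{\hat\sigma:R(\hat\sigma)-\alpha Q(\hat\sigma)\le 0\}$ is convex for every $\alpha\ge 0$ since $R-\alpha Q$ is convex, and a quasi-convex function on a polytope attains its maximum at an extreme point.

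The main obstacle I anticipate is not the convexity bookkeeping but pinning down the behaviour of $\eta$ near $Q\to 0$: one must ensure the supremum of the efficiency is not approached only along a sequence collapsing onto $\hat\rho_S$, where $\eta$ is a $0/0$ form, which is precisely why the uniqueness statement in Lemma~\ref{Postive_Heat} — $\hat\rho_S$ as the unique zero-heat extreme point — is indispensable here. A secondary point requiring care is the identification of $\mathcal{T}(\hat\rho_S)$ with the polytope of reachable \emph{diagonal} states that is implicit when using the permutation form of the passive energy; since energy-dephasing is itself a thermal (Gibbs-preserving) operation, maximizing $R$ over reachable states coincides with maximizing over reachable diagonal states, so no generality is lost. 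Finally one should record continuity and compactness explicitly, so that all the maxima above are genuinely attained rather than merely supremal.
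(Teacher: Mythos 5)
Your proposal is correct and follows essentially the same route as the paper: convexity of $R$ (as a max of linear functionals, equivalently linear minus concave passive energy) gives the work claim, and the decomposition into extremal points together with linearity of $Q$ and the mediant/weighted-average inequality, backed by Lemma~\ref{Postive_Heat}, gives the efficiency claim. Your explicit handling of the degenerate $Q=0$ extreme point (identifying $\hat\rho_S$ as the unique such point, with $R(\hat\rho_S)=0$) is a slightly more careful treatment of an edge case the paper glosses over, but it does not change the argument.
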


\begin{proof} Optimal work is obtained by maximizing the ergotropy function over the polytope $\mathcal{T}(\hat{\rho}_S)$ (see Eq. \eqref{eq:Max_Erg_and_Max_eff}). From the definition of ergotropy given in Eq. \eqref{Defn:ergotropy}, we can prove the convexity for ergotropy as follows,
\begin{eqnarray}
R(\alpha \hat\rho+(1-\alpha)\hat\sigma)&=&\max_{U\in \;\mathcal{U}(d)}\Tr \Big(H\big(\alpha \hat\rho+(1-\alpha) \hat\sigma-U(\alpha \hat\rho+(1-\alpha) \hat\sigma)U^{\dagger}\big)\Big)\nonumber\\
&=& \max_{U\in \;\mathcal{U}(d)}\Tr\Big(\big(\alpha H\big(\hat\rho-U\hat\rho U^{\dagger}\Big)+(1-\alpha)H\big(\hat\sigma- U\hat\sigma U^{\dagger}\big)\Big)
\nonumber\\
&\leq& \alpha\max_{U\in \;\mathcal{U}(d)} \Tr\big( H(\hat\rho- U\hat\rho U^{\dagger})\big)+(1-\alpha)\max_{U\in \;\mathcal{U}(d)}\Tr\big( H(\hat\sigma- U\hat\sigma U^{\dagger})\big)\nonumber\\
&=&\alpha R(\hat\rho)+(1-\alpha)R(\hat\sigma),
\end{eqnarray}
where $\hat\rho$ and $\hat\sigma$ are any arbitary states and $\alpha\in [0,1]$. We exploit the fact that if we maximize a convex function over a polytope, the function takes maximal values at the extremal points. This implies that the maximum value of work production is achieved at an extremal point of $\mathcal{T}(\hat{\rho}_S)$. 

For maximizing efficiency, we notice that an arbitrary state $\hat\sigma\in\mathcal{T}(\hat\rho_S)$ can be written as $\hat\sigma=\sum_i\lambda_i\hat\sigma_i$, where $\{\hat\sigma_i\}$ are extremal points of $\mathcal{T}(\hat{\rho}_S)$ and $\lambda_i\in[0,1]$ for each $i$. Thus,  we can write from Eq. \eqref{eq:Defn_of_Work} and \eqref{eq:Defn_efficiency} the following 
\begin{eqnarray}
\eta(\hat\sigma) = \frac{R(\hat\sigma)}{Q(\hat\sigma)} = \frac{R(\sum_{i}\lambda_i\hat\sigma_i)}{Q(\sum_{i}\lambda_i\hat\sigma_i)} &\leq& \frac{\sum_i\lambda_i R(\hat\sigma_i)}{\sum_i\lambda_i Q(\hat\sigma_i)}\nonumber\\&=&\sum_j\Big(\frac{\lambda_j Q(\hat{\sigma}_j)}{\sum_i\lambda_i Q(\hat\sigma_i)}\Big) \frac{R(\hat{\sigma}_j)}{Q(\hat{\sigma}_j)}\leq\max_j \frac{R(\hat{\sigma}_j)}{Q(\hat{\sigma}_j)} = \max_j \eta(\hat\sigma_j),
\end{eqnarray}
where the first inequality exploits convexity of ergotropy and linearity of the function $Q(\cdot)$, and the second inequality uses lemma \ref{Postive_Heat} ($Q(\hat{\sigma}_j)$ is non-negative for all extremal states, and therefore $\Big(\frac{\lambda_j Q(\hat{\sigma}_j)}{\sum_i\lambda_i Q(\hat\sigma_i)}\Big) \in [0,1]$. 
\end{proof}

 Therefore, maximization of work and  efficiency of the open-cycle heat engine is tantamount to optimisation over all extremal points of $\mathcal{T}(\hat{\rho}_S)$. Note that these two may not be achieved simultaneously. Now we shall discuss the open cycle heat engine with qubit and qutrit working body and analyze their optimal performance, in this way addresing the ergotropy extraction process for low dimensional systems. We use the results of \cite{Mazurek_2018} to characterize the extremal points of the set $\mathcal{T}(\hat{\rho}_S)$.

\subsubsection{Open cycle heat engine with qubit working body}
Consider a qubit working body $\hat{\rho}_S$ with Hamiltonian
\begin{equation}
     \omega |1\rangle\langle 1|
\end{equation}
in thermal state at inverse temperature $\beta_C$  given by 
\begin{eqnarray}\label{eq:WBQubit}
\hat{\rho}_S=\frac{1}{1+e^{-\beta_C\omega}}\begin{pmatrix}
1 & 0\\
0 & e^{-\beta_C\omega}
\end{pmatrix}.
\end{eqnarray}
 From \cite{Horodecki2013}, it is straightforward to characterize $\mathcal{T}(\hat{\rho}_S)$ as convex hull with two extremal points as follows
\begin{equation}\label{eq:Convhull2}
    \mathcal{T}(\hat{\rho}_S) = \text{conv }\Bigg\{\hat{\rho}_S \;,\;\; \frac{1}{1+e^{-\beta_C\omega}}\begin{pmatrix}
1+(e^{-\beta_C\omega}-e^{-\beta_H\omega}) & 0\\
0 & e^{-\beta_H\omega}
\end{pmatrix}\Bigg\},
\end{equation}
where $\hat{\rho}_S$ is given in Eq. \eqref{eq:WBQubit}. From Theorem \ref{Thm_of_extremal_pts}, we optimise the performance of open cycle heat engine over extremal points of $\mathcal{T}(\hat{\rho}_S)$ as given in Eq. \eqref{eq:Convhull2}. Therefore, we evaluate work production  and efficiency of the engine at
\begin{equation}
    \frac{1}{1+e^{-\beta_C\omega}}\begin{pmatrix}
1+(e^{-\beta_C\omega}-e^{-\beta_H\omega}) & 0\\
0 & e^{-\beta_H\omega}
\end{pmatrix},
\end{equation}
which is the only extremal point that gives non-trivial work production and efficiency. We calculate the optimal work output defined in Eq. \eqref{eq:Defn_of_Work} as 
\begin{equation}\label{W}
    W = \omega\Big(\frac{2e^{-\beta_H\omega}}{1+e^{-\beta_C\omega}}-1\Big).
\end{equation}
Therefore, the engine produces non-zero work iff $2e^{-\beta_H\omega}-1>e^{-\beta_C\omega}$. To calculate the efficiency, we proceed by calculating the amount of heat exchanged given in Eq. \eqref{Defn:HeatExchanged} as 
\begin{equation}
    Q= \frac{\omega}{1+e^{-\beta_C\omega}}(e^{-\beta_H\omega}-e^{-\beta_C\omega}).
\end{equation}
Therefore, the optimal efficiency is given by
\begin{equation}
    \eta =  1-\frac{(1-e^{-\beta_H\omega})}{(e^{-\beta_H\omega}-e^{-\beta_C\omega})}.
\end{equation}
Let us now compare with the more general case of minimal coupling quantum heat engines \cite{Lobejko2020}, for which we have
\begin{eqnarray}
    W_{\text{mc}} &=& \omega\Big(\frac{2e^{-\beta_H\omega}}{1+e^{-(\beta_C+\beta_H)\omega}}-1\Big),\label{Wgen}\\
    \eta_{\text{mc}} &=& 1-\frac{(1-e^{-\beta_H\omega})}{\big(e^{-\beta_H\omega}-e^{-(\beta_C+\beta_H)\omega}\big)}.
\end{eqnarray}
Thus $W_{\text{mc}}\geq W$ and $\eta_{\text{mc}}\geq\eta$, with the equality holding in the limit $\beta_C\rightarrow \infty$. This relation is in agreement with the fact that open cycle engines are minimal coupling engines where working body goes to its initial state (after storing work) by  thermalization with the cold bath. From
(\ref{W}) and (\ref{Wgen}) it is also visible that the range of temperatures in which the machine can work (non-zero work is produced) is larger for minimal coupling heat engines. Now we move to the qutrit case of open-cycle heat engines, which, to the best of our knowledge, for minimal coupling heat engines remains uncharacterized.

\subsubsection{Open cycle heat engine with a qutrit working body}
As a working body we select a qutrit working body with Hamiltonian 
\begin{equation}
    \omega_1|1\rangle\langle 1|+ \omega_2|2\rangle\langle 2|
\end{equation}
that is initialized in thermal state given by
\begin{equation}\label{eq:WBQutrit}
    \hat{\rho}_S = \frac{1}{1+e^{-\beta_C\omega_1}+e^{-\beta_C\omega_2}}\begin{pmatrix}
1 & 0 & 0\\
0 & e^{-\beta_C\omega_1} & 0\\
0 & 0 & e^{-\beta_C\omega_2}
\end{pmatrix}.
\end{equation}
 Like before, by employing Theorem \ref{Thm_of_extremal_pts} we optimise work output and efficiency over all extremal points of $\mathcal{T}(\hat{\rho}_S)$. To do so, we employ a result from \cite{Mazurek_2018} which characterizes the set $\mathcal{T}(\hat{\rho}_S)$ as a convex hull of its extremal points. We are going to state the lemma by using the following notation
\begin{eqnarray}\label{eq:Notation}
q^H_{ij} &=& e^{-\beta_H(\omega_i-\omega_j)},\\
q^C_{ij} &=& e^{-\beta_C(\omega_i-\omega_j)},
\end{eqnarray}
where $\omega_0=0$. We rewrite $\hat{\rho}_S$ given in Eq. \eqref{eq:WBQutrit} as
\begin{equation}\label{eq:WBQutrit2}
    \hat{\rho}_S=\frac{1}{\mathcal{Z}_C}\Big(|0\rangle\langle 0|+q^C_{10}|1\rangle\langle 1|+q^C_{20}|2\rangle\langle 2|\Big),
\end{equation}
where $\mathcal{Z}_C= 1+ q^C_{10}+q^C_{20}$. The application of \cite{Mazurek_2018} (see Supplementary Material, Lemma \ref{extremal} for details) results in the following:
\begin{lemma}
[Extremal points of the qutrit thermal polytope] \label{QTC} The set of achievable states via thermal operation in presence of bath at inverse temperature $\beta_H$, from the initial state $\hat{\rho}_S$ given in Eq. \eqref{eq:WBQutrit2}, is given by the convex hull of the extremal points:
 \begin{eqnarray}
     \mathcal{T}(\hat{\rho}_S)= \begin{cases}\text{ conv }\{\hat{\rho}_S\;,\; \hat{\rho}^{1}_S \;,\; \hat{\rho}^{2}_S \;,\;\hat{\rho}^{3}_S \;,\; \hat{\rho}^{4}_S  \} &\quad \mathrm{for~} \beta_H\geq\beta_0 ,\\\text{ conv }\{\hat{\rho}_S\;,\; \hat{\rho}^{1}_S \;,\; \hat{\rho}^{2}_S \;,\;\hat{\rho}^{3}_S \;,\; \hat{\rho}^{5}_S \;,\; \hat{\rho}^{6}_S \} &\quad \mathrm{for~} \beta_H<\beta_0 ,
\end{cases}
\end{eqnarray}
 where $\beta_0$ is specified by the relation 
 \begin{eqnarray}
 e^{-\beta_0\omega_1}+e^{-\beta_0\omega_2}=1,
 \end{eqnarray}
 and extremal points are defined as
 \begin{eqnarray}
\hat{\rho}^{1}_S &=& \frac{1}{\mathcal{Z}_C}\Big((1-q^{H}_{10}+q^C_{10})|0\rangle\langle 0|+q^H_{10}|1\rangle\langle 1|+q^C_{20}|2\rangle\langle 2|\Big),\label{eq:1}\\
\hat{\rho}^{2}_S &=& \frac{1}{\mathcal{Z}_C}\Big((1|0\rangle\langle 0|+((1-q^H_{21})q^C_{10}+q^C_{20})|1\rangle\langle 1|+q^H_{21}q^C_{10}|2\rangle\langle 2|\Big),\label{eq:2}\\ 
\hat{\rho}^{3}_S &=& \frac{1}{\mathcal{Z}_C}\Big((1-q^{H}_{20}+q^{H}_{21}q^{C}_{10})|0\rangle\langle 0|+((1-q^H_{21})q^C_{10}+q^C_{20})|1\rangle\langle 1|+q^H_{20}|2\rangle\langle 2|\Big),\label{eq:3}\\
\hat{\rho}^{4}_S &=& \frac{1}{\mathcal{Z}_C}\Big((1+q^{C}_{10}+q^{C}_{20}-q^{H}_{10}-q^{H}_{20})|0\rangle\langle 0|+q^H_{10}|1\rangle\langle 1|+q^H_{20}|2\rangle\langle 2|\Big),\label{eq:4}\\
\hat{\rho}^{5}_S &=& \frac{1}{\mathcal{Z}_C}\Big(((q^{H}_{01}-q^{H}_{21})q^{C}_{10}+q^{C}_{20})|0\rangle\langle 0|+q^H_{10}|1\rangle\langle 1|+(1-q^{H}_{10}+(1-q^{H}_{01}+q^{H}_{21})q^C_{10})|2\rangle\langle 2|\Big),\label{eq:5}\\
\hat{\rho}^{6}_S &=& \frac{1}{\mathcal{Z}_C}\Big((q^{C}_{10}(q^H_{01}-q^H_{21})+q^{C}_{20})|0\rangle\langle 0|+(1-q^H_{20}+q^C_{10}(1-q^H_{01}+q^H_{21}))|1\rangle\langle 1|+q^H_{20}|2\rangle\langle 2|\Big),\label{eq:6}
 \end{eqnarray}
where $\mathcal{Z}_C= 1+ q^C_{10}+q^C_{20}$.
\end{lemma}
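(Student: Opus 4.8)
The plan is to reduce the claim to a classical statement about occupation vectors and then invoke the explicit description of the vertices of a thermomajorization polytope from \cite{Mazurek_2018} (restated as Lemma \ref{extremal} in the Supplementary Material), specializing its output to dimension three.

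First I would apply Lemma \ref{HO_IMP_LEM}: since $\hat{\rho}_S$ is diagonal in the energy eigenbasis of $\hat{H}_S$, the thermal polytope $\mathcal{T}(\hat{\rho}_S)$ is determined at the level of occupation vectors, namely it is the set of probability vectors $\v{q}$ that are thermomajorized, relative to $\v{\gamma}_{\beta_H}$, by $\v{p}=\frac{1}{\mathcal{Z}_C}(1,\,q^{C}_{10},\,q^{C}_{20})$. The next step is to fix the $\beta_H$-order of $\v{p}$; write $g_i$ for the Gibbs weights of $\hat{\gamma}_{\beta_H}$. The ratios $p_i/g_i$ are proportional to $e^{-(\beta_C-\beta_H)\omega_i}$, which, because $\beta_C>\beta_H$ and $0=\omega_0<\omega_1<\omega_2$, are strictly decreasing in $i$; hence $\v{p}$ carries the trivial (descending) $\beta_H$-order $(0,1,2)$, its thermomajorization curve is the concave broken line through $(0,0)$, $(g_0,p_0)$, $(g_0+g_1,p_0+p_1)$, $(1,1)$, and $\hat{\rho}_S$ itself --- the only achievable state with this $\beta_H$-order --- is always a vertex.

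With the initial curve fixed, I would run the vertex-generating procedure of \cite{Mazurek_2018}: every vertex of $\mathcal{T}(\hat{\rho}_S)$ is the occupation vector whose thermomajorization curve is the highest broken line, of some prescribed $\beta_H$-order, lying weakly below the curve of $\v{p}$, and the combinatorial type of the polytope is governed by the relative order of the $x$-coordinates of the elbows of these candidate curves. In dimension three those $x$-coordinates lie among $\{g_0,g_1,g_2,g_0+g_1,g_0+g_2,g_1+g_2\}$, and, since $0<\omega_1<\omega_2$ forces $g_0>g_1>g_2$, the only pair among them whose order can change with $\beta_H$ is $g_0$ versus $g_1+g_2$, which coincide exactly when $e^{-\beta_H\omega_1}+e^{-\beta_H\omega_2}=1$, i.e. $\beta_H=\beta_0$. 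This single threshold splits the analysis into the two regimes of the statement: for $\beta_H\ge\beta_0$ the procedure returns the five vectors $\hat{\rho}_S,\hat{\rho}^{1}_S,\dots,\hat{\rho}^{4}_S$, and for $\beta_H<\beta_0$ it returns $\hat{\rho}_S,\hat{\rho}^{1}_S,\hat{\rho}^{2}_S,\hat{\rho}^{3}_S,\hat{\rho}^{5}_S,\hat{\rho}^{6}_S$; reading off the heights of the corresponding maximal broken lines yields the explicit coordinates \eqref{eq:1}--\eqref{eq:6}.

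Finally I would complete the argument with the two routine checks accompanying the procedure: feasibility, that each $\hat{\rho}^{i}_S$ genuinely lies in $\mathcal{T}(\hat{\rho}_S)$ (equivalently, that its curve lies weakly below that of $\v{p}$, or that the $3\times 3$ column-stochastic, $\v{\gamma}_{\beta_H}$-preserving matrix realizing it exists), and extremality, that at each $\hat{\rho}^{i}_S$ enough constraints are active --- contact points with the initial curve together with vanishing coordinates --- to pin the point down uniquely in the two-dimensional probability simplex; completeness then follows because the admissible $\beta_H$-orders in dimension three are exhausted by those treated. I expect the only real obstacle to be organizing this case analysis cleanly --- deciding which output $\beta_H$-orders are admissible, ordering the elbow $x$-coordinates correctly on each side of $\beta_0$, and checking that the listed vertices are distinct and that their convex hull is exactly $\mathcal{T}(\hat{\rho}_S)$ --- while everything else is bookkeeping already packaged in \cite{Mazurek_2018}.
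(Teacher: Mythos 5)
Your proposal is correct and follows essentially the same route as the paper: reduce to occupation vectors via Lemma \ref{HO_IMP_LEM}, observe that $\beta_C>\beta_H$ forces the initial state to have $\beta_H$-order $(123)$, and then import the classification of extremal points of the thermal polytope from \cite{Mazurek_2018}, with the regime split at $e^{-\beta_0\omega_1}+e^{-\beta_0\omega_2}=1$. The only (cosmetic) difference is that you phrase the vertex enumeration through tightly-thermomajorized curves and the ordering of elbow $x$-coordinates (your $g_0$ versus $g_1+g_2$ observation, which correctly explains why the $(231)$ and $(321)$ orders merge into the single vertex $\hat\rho^4_S$ above $\beta_0$), whereas the paper applies the explicit extremal stochastic matrices $A_1,A_2,A_5,A_9,A_{12},A_{13}$ from Table 1 of \cite{Mazurek_2018} directly to $\v{p}_S$; these are equivalent characterizations of the same result.
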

The maximization over these extremal points turns out to be difficult as it dependents on the values of $\beta_H$, $\omega_1$ and $\omega_2$. We start from evaluating work produced by the engine as the amount of ergotropy extracted from the hot bath, according to \eqref{eq:Defn_of_Work}.  We enlist the values in Table \ref{tab:1}. Values of heat exchanged with the hot bath for each of the extremal states are given in Table \ref{tab:2}, and  efficiency simply follows from taking the ratio of the amount of ergotropy extracted with corresponding amount of heat exchanged.

\begin{table}[]
    \begin{center}
    \begin{tabular}{|c|c|}
    \hline
    Extremal states & Amount of ergotropy extracted   \\
    \hline
    $\hat{\rho}^{1}_S$ & $\max\Big\{0,\frac{\omega_1}{Z_C}\Big(2q^{H}_{10}-1-q^C_{10}\Big)\Big\}$\\
    \hline
    $\hat{\rho}^{2}_S$ & $\max\Big\{0,\frac{1}{Z_C}(\omega_2-\omega_1)\Big(2q^{H}_{21}q^{C}_{10}-(q^{C}_{10}+q^{C}_{20}\Big)\Big\}$\\       
    \hline
    $\hat{\rho}^{3}_S$ & $\max\Big\{0, \frac{(\omega_{2}-\omega_{1})}{Z_C}\Big(q^{H}_{20}-q^{C}_{20}-(1-q^{H}_{21})q^{C}_{10}\Big),\;\frac{\omega_{2}}{Z_C}q^{H}_{20}-\frac{\omega_{1}}{Z_C}\Big((1-q^{H}_{20})+q^{H}_{21}q^{C}_{10}\Big)$\\& $-\frac{(\omega_{2}-\omega_{1})}{Z_C}\Big((1-q^{H}_{21})q^{C}_{10}+q^{C}_{20}\Big)\Big\}$\\
    \hline
     $\hat{\rho}^{4}_S$ & $\max\Big\{0,\;\frac{\omega_{1}}{Z_C}\Big(2q^{H}_{10}+q^{H}_{20}-(1+q^{C}_{10}+q^{C}_{20})\Big),\; \frac{\omega_1}{Z_C}(q^{H}_{10}-q^{H}_{20})+\frac{\omega_{2}}{Z_C}\Big(2q^{H}_{20}+q^{H}_{10}$\\
    & $-(1+q^{C}_{10}+q^{C}_{20})\Big)\Big\}$\\
    \hline
    $\hat{\rho}^{5}_S$ & $\max\Big\{0, \frac{\omega_1}{Z_C}\Big(q^{H}_{10}-(q^{H}_{01}-q^{H}_{21})q^{C}_{10}-q^{C}_{20})\Big),\; \frac{\omega_1}{Z_C}\Big(q^{H}_{10}-(1-q^{H}_{10})$\\
     &$-(1-q^{H}_{01}-q^{H}_{21})q^C_{10}\Big)+\frac{\omega_2}{Z_C}\Big((1-q^{H}_{10})+(1-q^{H}_{01}-q^{H}_{21})q^C_{10}$\\
     &$-(q^{H}_{01}-q^{H}_{21})q^{C}_{10}-q^{C}_{20}\Big)\Big\} $\\
    \hline
    $\hat{\rho}^{6}_S$ & $\max\Big\{0,\frac{\omega _2-\omega_1}{Z_C} \Big(2 q^H_{20}-1-q^C_{10} (1+q^H_{21}-q^H_{01})\Big),\;\frac{\omega _1}{{Z_C}} \Big(1-q^C_{20}-q^H_{20}$\\
    &$+q^C_{10}(1+2q^H_{21}-2q^H_{01})\Big),\;\frac{\omega_1}{Z_C}\Big(q^C_{10} (1+q^H_{21}-q^H_{01})+1-2q^H_{20}\Big)$ \\
    & $+\frac{\omega_2}{Z_C}\Big(q^H_{20}-q^C_{20}+q^C_{10}q^H_{21}-q^C_{10}q^H_{01}\Big),\frac{\omega_1}{Z_C}\Big(1 - q^C_{20} - q^H_{20}$\\
    &$+q^C_{10}(1 + 2 q^H_{21} - 2 q^H_{01})\Big) -\frac{\omega_2}{Z_C}\Big(1-2q^H_{20}$\\
   & $+q^C_{10} (1 + q^H_{21} - q^H_{01})\Big),$
\;$ \frac{\omega_2}{Z_C}\Big(q^H_{20}-q^C_{20}
   +q^C_{10}(q^H_{21}-q^H_{01})\Big)\Big\}$\\
    \hline
    \end{tabular}
\end{center}
\caption{Amount of ergotropy extracted for different extremal states, under different permutations.}
    \label{tab:1}
\end{table}

\begin{figure}[http]
\centering
\includegraphics[scale=0.7]{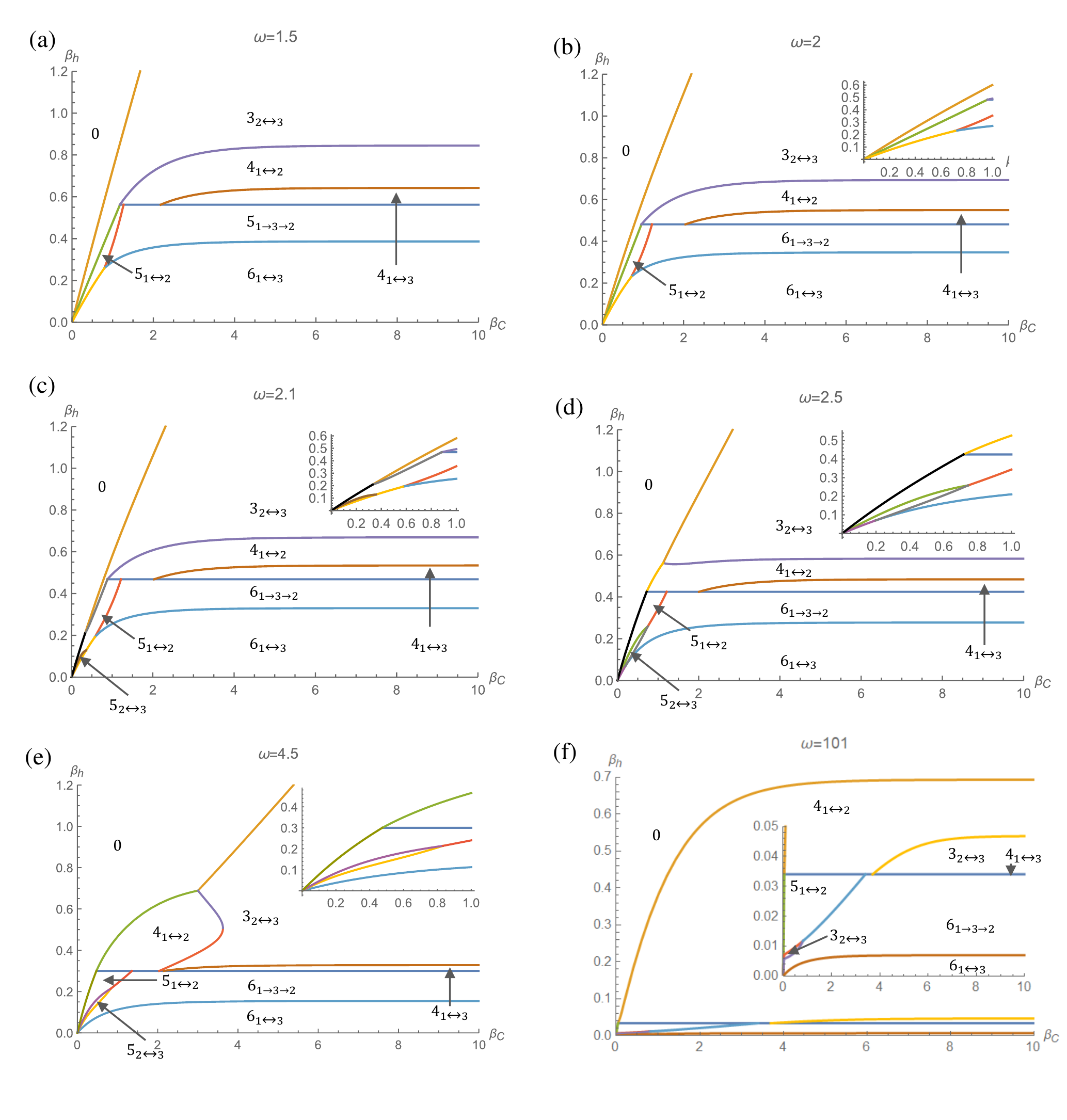}
\caption{\textbf{Protocols for optimal ergotropy extraction} from a  qutrit Gibbs state with  inverse temperature $\beta_{C}$, subjected to interaction with thermal bath with inverse temperature $\beta_{H}$. Hamiltonian has spectrum $(0, 1, \omega)$. (a) $\omega=1.5$
(b) $\omega=2$. (c) $\omega=2.1$. (d) $\omega=2.5$. (e) $\omega=4.5$. (f) $\omega=101$. 0 marks the region in which ergotropy extraction is not possible. Other regions are marked by symbols of the corresponding optimal final states (as described by eq. (\ref{eq:1})-(\ref{eq:6})). Subscripts describe the permutation leading to optimal ergotropy extraction. A transition between protocols $4$ and $5$ or $4$ and $6$ happens at $1=e^{-\beta_{H}}+e^{-\beta_{H}\omega}$.\label{rys}}
\end{figure}

\begin{figure}[http]
\centering
\includegraphics[scale=0.7]{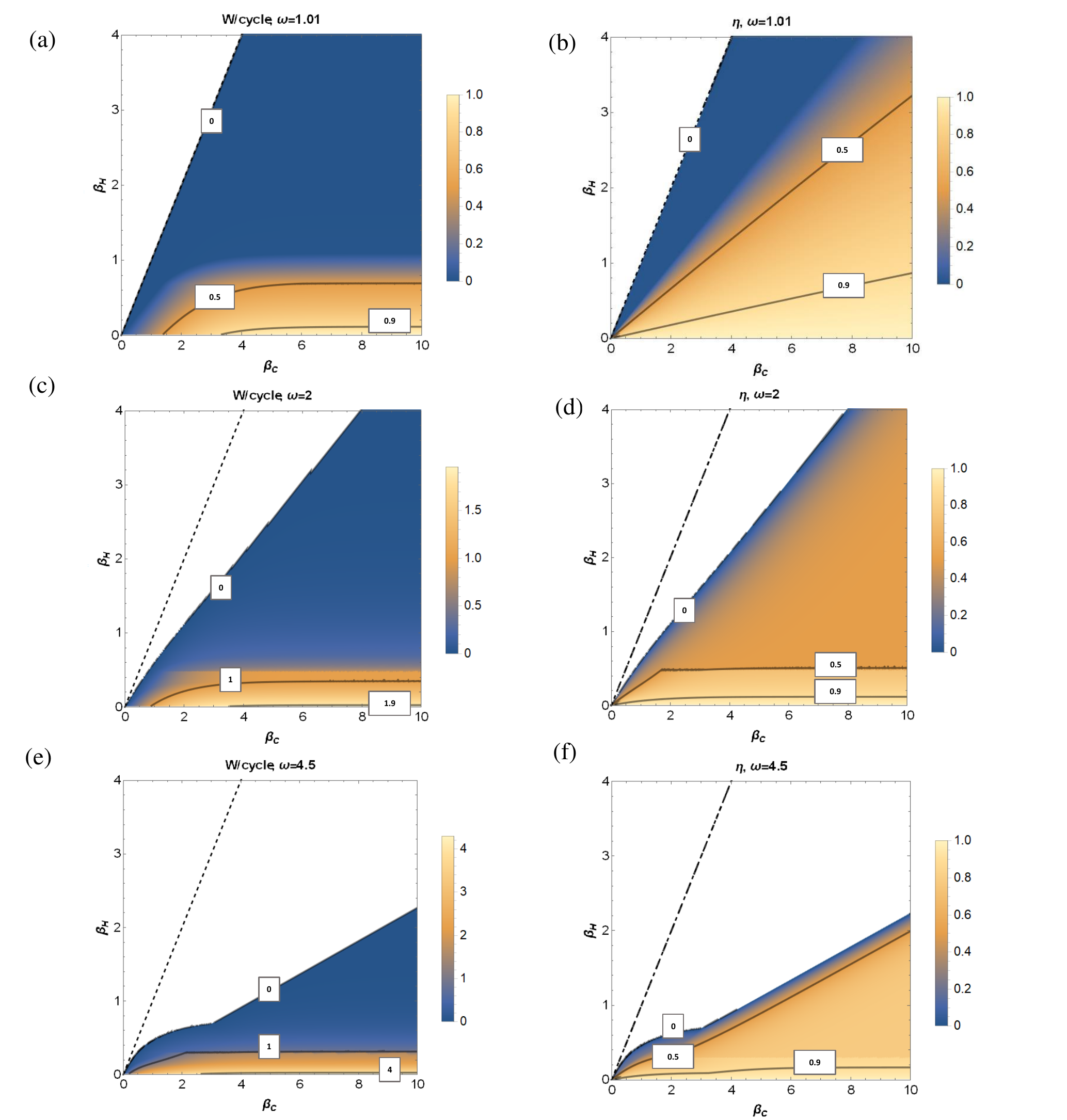}
\caption{Work production per cycle and efficiencies of the open-cycle engine for different qutrit spectrum: (a), (b): $\omega=1.01$ (c), (d): $\omega=2$; (e), (f): $\omega=4.5$. Dashed line constraints the operational region in the limit $\omega\rightarrow 1$. \label{Fig4}
}
\end{figure}

\begin{table}[]
\begin{center}
\begin{tabular}{ |c|c| } 
\hline
Extremal states & Amount of  Heat Exchanged  \\
\hline
$\hat{\rho}^{1}_S$ &  $\frac{1}{Z_C}\omega_1\Big(q^H_{10}-q^C_{10}\Big)$  \\
\hline
$\hat{\rho}^{2}_S$ &  $\frac{1}{Z_C}\Big(\omega_2-\omega_1\Big)\Big(q^C_{10}(q^H_{21}-q^C_{21})\Big)$\\
\hline
$\hat{\rho}^{3}_S$ &  $\frac{1}{Z_C}\Big(q^H_{20}\omega_2-q^C_{20}(\omega_2-\omega_1)-q^C_{10}q^H_{21}\omega_1)\Big)$\\
\hline
$\hat{\rho}^{4}_S$ &  $\frac{1}{Z_C}\Big((q^H_{10}-q^C_{10})\omega_1+(q^H_{20}-q^C_{20})\omega_2\Big)$\\
\hline
$\hat{\rho}^{5}_S$ &  $\frac{1}{Z_C}\Bigg(q^H_{10}(1-q^C_{10}q^H_{01})\omega_1+\Big(q^C_{10}q^H_{20}q^H_{01}-q^C_{20}+(1-q^C_{10}q^H_{01})(1-q^H_{10})\Big)\omega_2\Bigg)$\\
\hline
$\hat{\rho}^{6}_S$ & $\frac{1}{Z_C}\Big(1-q^C_{10}q^H_{01})(1-q^H_{20})\omega_1+(q^H_{20}-q^C_{20})\omega_2\Big)$\\
\hline
\end{tabular}
\caption{Amount of heat exchanged for different extremal states.}
    \label{tab:2}
\end{center}
\end{table}
As optimal values of ergotropy and efficiency dependent on the temperature regime and system Hamiltonian, we analyze the work production and efficiency for a qutrit Hamiltonian given by
\begin{equation}
    H=\dyad{1}+\omega\dyad{2}.
\end{equation}

We identify optimal protocols for work production. These can be characterized by different states of the working body (\ref{eq:1})-(\ref{eq:6})) obtained by application of a thermal operation to the initial state (see Fig. \ref{rys}). It should be stressed that, in contrast to the infinite dimensional case investigated before, for low dimensional systems the optimal protocols for ergotropy extraction may not maximize energy of the working body.   
In Fig. \ref{Fig4} we show the values of extractable ergotropy and efficiency of the engine. We see that the region in which the ergotropy is extracted shrinks with increasing $\omega$. In the limit $\omega\rightarrow\infty$ and $\beta_{C}\rightarrow\infty$ we recover the condition $\beta_{H}<\log{2}$, bounding the operational regime of a qubit minimal step engine \cite{Lobejko2020}. Indeed, for high values of $\omega$, extremal thermal processes acting on a qutrit Gibbs state cannot lead to higher population inversion than the ones acting on a qubit system (see Supplementary Material Sec. \ref{SubDet} eq. (\ref{A6}-\ref{A9})).

Maximization of energy (provided by a process leading to the state $\hat\rho^{6}_{S}$) is the optimal strategy in the limit $\beta_{H}\rightarrow 0$, but otherwise the landscape of optimal protocols 
is very complex. In particular, close to critical values of $\beta_{H}$ satisfying $1= e^{-\beta_{H}}+e^{-\beta_{H}\omega}$, efficiency and extracted work may strongly depend on $\beta_{H}$. This should be attributed to different structures of the set of extremal thermal processes above and below this limit \cite{Mazurek_2018}. We expect that diversity of optimal strategies increases with the dimension of the working body. 

\subsubsection{Open cycle heat engine with qudit working body}
As it was seen in the previous section, the structure of non-passive states becomes more complicated while going from the two-level to  three-level systems. Thus, in order to characterize the higher $d$-dimensional qudits, we compute optimal work production and efficiency via the numerical simulation. The method of computing the optimal quantities is based on the thermomajorization diagrams (which we discuss in Supplementary Material \ref{MPofTO}): For a particular initial state $\hat{\rho}_S$, we numerically search for all extremal points in the set $\mathcal{T}(\hat{\rho}_S)$.  

The results are presented in Fig. \ref{fig:qudit}. Due to the fact that number of extremal points in a thermal polytope grows with factorial of the dimension, we were not able to perform these calculations for moderate $d>8$.  
Nevertheless, optimizing over all possible protocols for ergotropy extraction suggests that work production and efficiency increase with the growing dimension of the working body, while the bound given in (\ref{eq:BOEforGPTransformationthm}) cannot be saturated for small dimensions. These conclusions are similar to those valid for a restricted class of protocols aiming at optimizing energy (Fig. \ref{fig:saturation}), and match the intuition that low dimensionality of the working body  enforces separation (in terms of relative entropy) between the Gibbs state and the state emerging from ergotropy extraction, prohibiting saturation of the bound. Therefore, using low dimensional working body may require careful optimization.   

\begin{figure}
    \centering
    \includegraphics[width = 0.45 \textwidth]{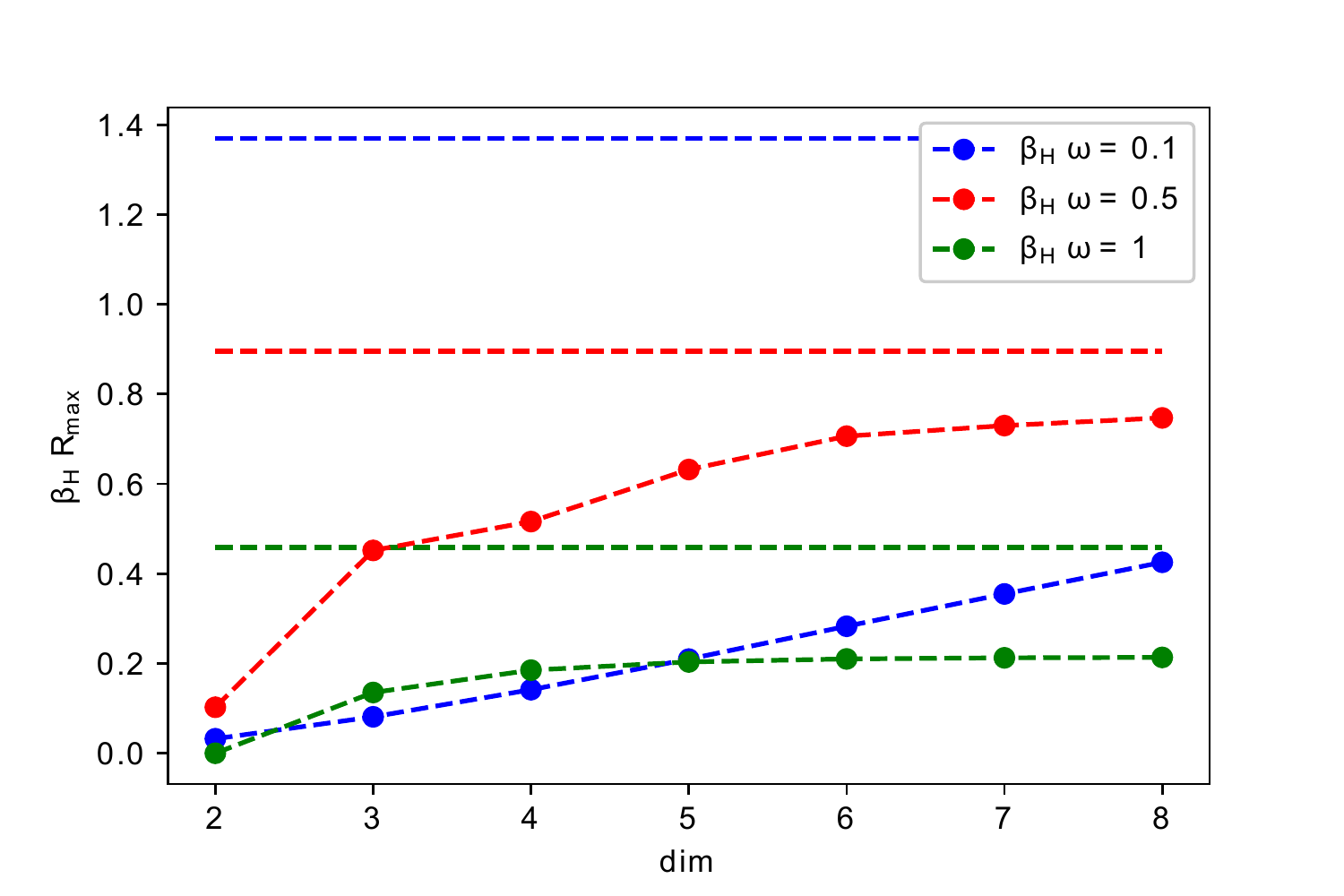}
    \includegraphics[width = 0.45 \textwidth]{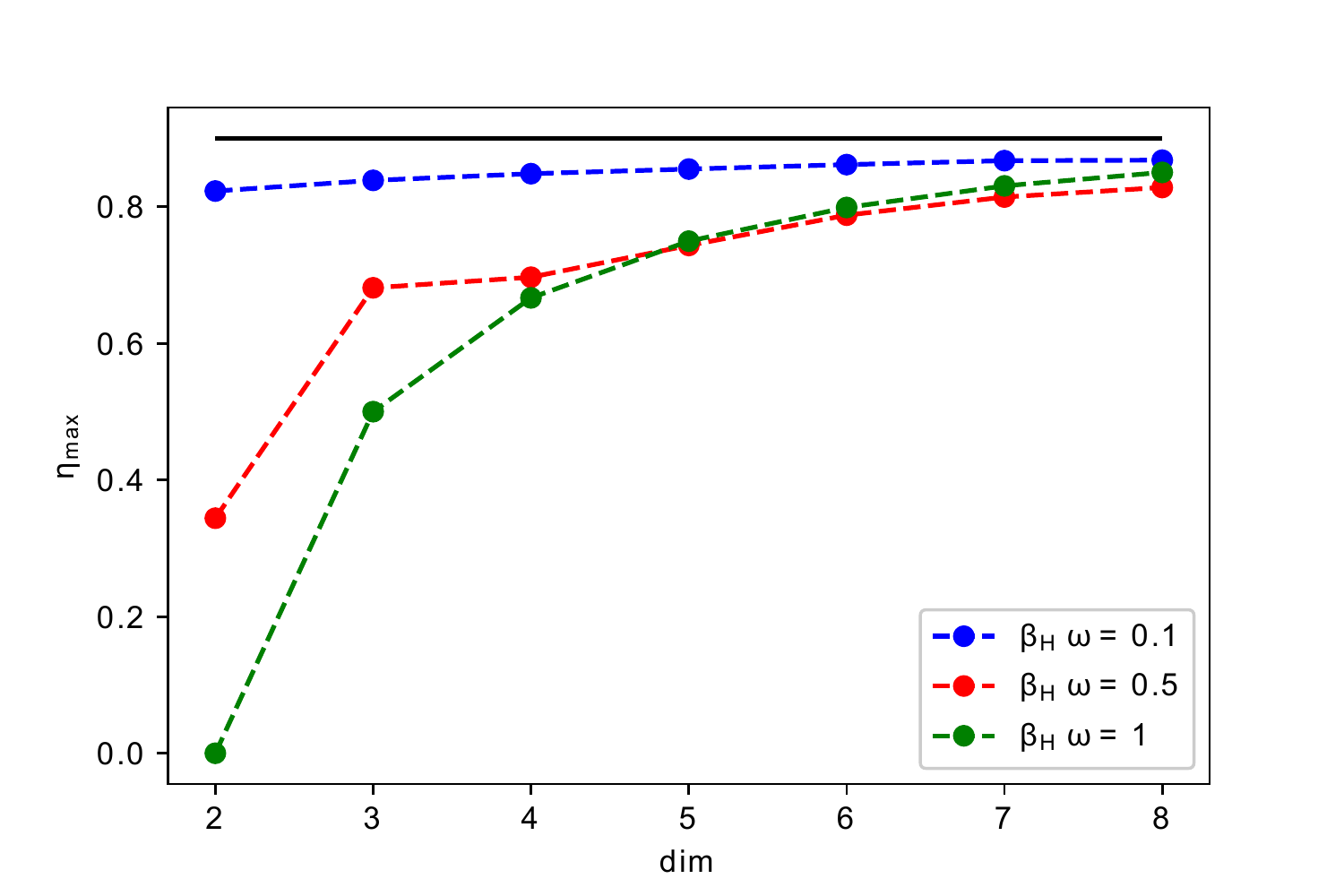}
    \caption{Work production and efficiency for the open-cycle engine with $d$-dimensional working body (with constant energy gap $\omega$). Plots are computed for a fixed initial Gibbs state in a cold (inverse) temperature $\beta_C \omega = 10$ and for different values of $\beta_H \omega$. \textit{Left panel:} Work production given by the amount of the optimal extracted ergotropy $\beta_H R_{max}$. The dashed horizontal lines correspond to the free energy bound \eqref{eq:BOEforGPTransformationthm}. \textit{Right panel:} The efficiency of the ergotropy extraction process \eqref{eq:Defn_efficiency}. The black solid line corresponds to the Carnot efficiency for $\beta_C/\beta_H = 10$. }
    \label{fig:qudit}
\end{figure}

\section{Outlook}\label{sec:5}


We have proposed a bound on ergotropy extraction, showed that it can be saturated for infinite dimensional systems, and,  for finite dimensional systems, we have investigated the gap between the bound and the value of extractable ergotropy. We use the idea of ergotropy extraction to design a stroke engine in the quantum regime, called open cycle engine. We prove that optimal work production and efficiency for these engines are achieved for extremal states of the thermal polytope of the initial state. We have  analytically characterized the optimal protocols for work production and efficiency for such engines in two and three dimensions and analyze higher dimensional cases numerically.

In the construction of open cycle engines we have assumed that the state of the working body is initialised in the Gibbs state, hence performance of the engine was not affected by coherences in the local Hamiltonian basis. However, understanding the role of coherence in ergotropy extraction in general remains an open problem. We may formalise it as follow: consider a working body in an initial state $\hat{\rho}$ (with coherences). Is it possible to achieve a state $\hat{\sigma}$ such that $R(\hat{\sigma})-R(\hat{\rho})>R(\mathcal{D}(\hat{\sigma}))-R(\mathcal{D}(\hat{\rho}))$ where $\mathcal{D}(\cdot)$ denotes dephasing\pmaz{?} If it was so, it would constitute a quantum advantage for ergotropy extraction, and may lead to constructions of heat engines with higher work yield per cycle than their classical counterpart.  

Secondly, it remains an open problem to investigate the relation between extractable ergotropy (optimized over all energy conserving operations) and the dimension of the working body. We have numerically attested this in Fig.~\ref{fig:saturation} for a qudit system initialized in the ground state being transformed via a thermal operation that exchanges maximum energy from bath.

On the general note, the bound (\ref{Alicki}) attests for the ergotropy of a composed system in the asymptotic limit of copies, singling out the Gibbs state as a completely passive state. Analogously, one could ask about extractable ergotropy for composed systems coupled with a heat bath, investigating the effect in which the emerging entanglement within the system influences the extractable ergotropy. Finally, to complete the characterization of the closed cycle engines \cite{Lobejko2020}, protocols for higher dimensional working body systems would need to be developed by optimizing ergotropy extraction, with cyclicity constraints taken into account.

Finally, a problem of finding physical systems saturating bounds derived with the help of thermal operations boils down in our case to the question if strong dependence of optimal values of efficiency and work production on system Hamiltonian and bath temperatures is present within a fixed physical implementation of an open-cycle heat engine. We leave as a subject of future research to check if this dependency results from identifying different energy preserving interactions as optimal in different temperature regimes.      

\subsection*{Acknowledgements}
TB acknowledges Ralph Silva, Matteo Lostaglio and A. de. Oliveira  for insightful discussions and comments during quantum thermodynamics summer school organized by ETH Zurich. We acknowledge support from the Foundation for Polish Science through IRAP project co-financed by EU within the Smart Growth Operational Programme (contract no.2018/MAB/5) and the National Science Centre, Poland, through grant SONATINA 2 2018/28/C/ST2/00364.

\appendix

\section{Supplementary information}
\subsection{Mathematical preliminaries of Thermal operation}\label{MPofTO}
In this section, we shall discuss the required mathematical properties of thermal operations and Gibbs preserving operations which have been used to analyse the ergotropy extraction protocols and the performance of open-cycle heat engines.  

\begin{defn}
\textbf{Thermal operation:} A complete positive trace-preserving (CPTP) map $\Phi$ is a thermal operation on the system in the state $\hat\rho$ if it transforms the product state of system and bath at equilibrium via an energy conserving unitary, followed by tracing out the bath: 
\begin{equation}\label{eq:TO_Again}
    \Phi(\hat\rho)=\Tr_{B}[{U\left(\hat\rho\otimes\hat\tau_{\beta}\right)U^{\dagger}}],
\end{equation}
where $U$ is a joint unitary commuting with the total Hamiltonian of the system and bath $[U, \hat{H}_S+ \hat{H}_B] = 0$, and $\hat\tau_{\beta}$ is a thermal state of the bath at some fixed inverse temperature $\beta$.
\end{defn}

Conservation of energy by the unitary can be understood as the first law of thermodynamics. The operation preserves the Gibbs state: $\Phi(\hat\gamma_{\beta})=\hat\gamma_{\beta}$.  
Next, we shall state the theorem which establishes the equivalence between thermal operation and Gibbs-preserving operation when they acts on a state that is diagonal in energy eigenbasis.
\begin{thm} 
[Equivalence between thermal operation and Gibbs-preserving operation \newline for a state that is diagonal in energy eigenbasis (\cite{Horodecki2013}] Consider a state $\hat\rho$ is diagonal in eigen basis of Hamiltonian $\hat H_S$ i.e., $\hat{\rho}=\sum_{i=1}^d\langle\epsilon_i|\hat \rho|\epsilon_i\rangle\dyad{\epsilon_i}$ where $\{|\epsilon\rangle\}_{i=1}^d$ are eigenvector of $\hat{H}_S$ with corresponding eigenvalues $\epsilon_i$. Then the set of achievable states from a given state $\hat\rho$ via thermal operation in presence of a bath at inverse temperature $\beta$, and Gibbs-preserving map that preserves $\hat\gamma_{\beta}$ are identical.
\end{thm}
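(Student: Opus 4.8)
The plan is to prove the two set inclusions separately; one is immediate and the other carries all the content. For ``thermal operations yield no more than Gibbs-preserving maps'', I would observe that any $\Phi$ of the form \eqref{eq:TO_Again} fixes the Gibbs state: since $[U,\hat H_S+\hat H_B]=0$, the unitary commutes with every function of $\hat H_S+\hat H_B$, in particular with $\hat\gamma_\beta\otimes\hat\tau_\beta\propto e^{-\beta(\hat H_S+\hat H_B)}$, so $U(\hat\gamma_\beta\otimes\hat\tau_\beta)U^\dagger=\hat\gamma_\beta\otimes\hat\tau_\beta$ and tracing out the bath returns $\hat\gamma_\beta$. Hence any state reachable from $\hat\rho$ by a thermal operation is reachable by a Gibbs-preserving map, and this holds for arbitrary $\hat\rho$, diagonal or not.

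For the converse I would reduce the problem to the level of populations, which is the natural reading here: for diagonal $\hat\rho$ one compares the achievable \emph{diagonals} (a Gibbs-preserving map can in principle create coherence from a diagonal input, but that is invisible to the ergotropy questions of this paper and to Lemma~\ref{HO_IMP_LEM}). Lemma~\ref{HO_IMP_LEM} already supplies one half: a thermal operation sends the population vector $\v p$ of $\hat\rho$ to $\v q$ exactly when some Gibbs-stochastic matrix $A$ (i.e.\ $A\v\gamma_\beta=\v\gamma_\beta$) does. For the other half, given a Gibbs-preserving $\Phi$ with $\Phi(\hat\rho)=\hat\sigma$, I would compose with the energy-basis dephasing channel $\mathcal D$ (itself Gibbs-preserving, since $\hat\gamma_\beta$ is already diagonal) and read off the stochastic matrix $A_{ji}=\langle\epsilon_j|\Phi(|\epsilon_i\rangle\langle\epsilon_i|)|\epsilon_j\rangle$; one checks that $A\v p$ is the diagonal of $\hat\sigma$ and $A\v\gamma_\beta=\v\gamma_\beta$ because both $\Phi$ and $\mathcal D$ fix $\hat\gamma_\beta$. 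Thus in both settings the achievable diagonals are precisely $\{A\v p:\ A\ \text{Gibbs-stochastic}\}$, and the theorem reduces to the claim that every Gibbs-stochastic matrix is induced on the diagonal by some thermal operation.

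This last reduction is the main obstacle. The route I would take is the classical one: show that $A\v p=\v q$ for some Gibbs-stochastic $A$ is equivalent to $\v p$ thermomajorizing $\v q$, and then realize any thermomajorizing transformation by an explicit thermal operation --- decomposing it into a finite sequence of elementary moves (level relabellings and two-level $\beta$-swaps, i.e.\ partial thermalizations), each implemented by an energy-conserving unitary on the system together with a finite thermal ancilla whose spectrum is chosen so that the required swap conserves energy. The delicate point is that a generic Gibbs-stochastic matrix is not merely a Gibbs-rescaled permutation, so the ancilla must be enlarged and a limiting argument invoked; this bookkeeping is exactly the content of the Horodecki--Oppenheim construction \cite{Horodecki2013}, which we import. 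Combining it with the two inclusions above and Lemma~\ref{HO_IMP_LEM} then yields equality of the two sets of achievable states.
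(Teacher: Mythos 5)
The paper does not actually prove this theorem: it is stated as an imported result with the citation \cite{Horodecki2013}, and no argument is given. Your sketch is consistent with that treatment and is essentially correct: the easy inclusion (every thermal operation fixes $\hat\gamma_\beta$ because $U$ commutes with $e^{-\beta(\hat H_S+\hat H_B)}$) is right, the reduction of a Gibbs-preserving map to a Gibbs-stochastic matrix on the populations via dephasing is right, and you correctly identify that all the content lives in realizing the resulting population transitions by explicit energy-conserving unitaries, which you defer to the same reference the paper cites. One caveat: your intermediate sentence that "the theorem reduces to the claim that every Gibbs-stochastic matrix is induced on the diagonal by some thermal operation" overstates what is needed and what is known --- for $d\ge 3$ the set of stochastic matrices exactly induced by thermal operations is not known to exhaust the Gibbs-stochastic ones. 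What the theorem requires, and what \cite{Horodecki2013} provides, is only the state-level statement: whenever some Gibbs-stochastic $A$ maps $\v{p}$ to $\v{q}$ (equivalently, $\v{p}$ thermomajorizes $\v{q}$), some thermal operation also realizes $\v{p}\to\v{q}$, possibly via a different induced matrix. Since your subsequent thermomajorization route uses only this weaker claim, the argument goes through; just phrase the reduction at the level of achievable states rather than achievable matrices.
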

Since in this work we want to analyze the extraction of ergotropy and optimal performance of open cycle engines, we focus on transformation of states which are diagonal in the energy eigenbasis. Setting up this framework, the central question to ask is what are the conditions on two states $\hat\rho$ and $\hat\sigma$ that are diagonal in the eigenbasis of Hamiltonian $\hat H_S$, such that they are related via a thermal operation $\Phi$:
\begin{equation}
    \Phi(\hat\rho)=\hat\sigma.
\end{equation}
By Lemma \ref{HO_IMP_LEM} from the main text of the paper, the transformation can be described by a stochastic matrix $\mathcal{A}_{\Phi}$ such that probability vectors $\v{p}$ and $\v{q}$ generated from the diagonal entries
\begin{eqnarray}
\v{p} = \Big(\langle \epsilon_1|\hat{\rho}|\epsilon_1\rangle  \ldots \langle \epsilon_d|\hat{\rho}|\epsilon_d\rangle \Big),\quad\quad
 \v{q} =  \Big(\langle \epsilon_1|\hat{\sigma}|\epsilon_1\rangle  \ldots \langle \epsilon_d|\tilde{\sigma}|\epsilon_d\rangle \Big),\label{p}
\end{eqnarray}
 satisfy
 $\mathcal{A}_{\Phi}(\v{p})=\v{q}$ and $\mathcal{A}_{\Phi}(\v{\gamma}_{\beta})=\v{\gamma}_{\beta}$, with 
\begin{equation}
     \v{\gamma}_{\beta} = \Big(\frac{e^{-\beta\epsilon_1}}{Z}  \ldots \frac{e^{-\beta\epsilon_d}}{Z}\Big).
\end{equation}
To describe the necessary and sufficient conditions for the existence of a stochastic matrix satisfying $\mathcal{A}_{\Phi}\v{p}=\v{q}$ and $\mathcal{A}_{\Phi}\v{\gamma}_{\beta}=\v{\gamma}_{\beta}$,  we proceed by introducing the concept of $\beta$-ordering of a population vector and thermo-majorization curves. These concepts are illustrated in Fig. \ref{Fig6}.

\begin{defn}
\textbf{Thermomajorization curves:} Define a vector $\v{s}=(1\;e^{-\beta\epsilon_{1}}\;e^{-\beta\epsilon_{2}}\;\ldots \;e^{-\beta\epsilon_{d-1}})$. Choose a permutation $\pi$ on \v{p} and \v{s} such that it leads to a non-increasing order of elements in the vector $\v{d}$ where $d_{k}=\frac{(\pi \v{p})_k}{ (\pi \v{s})_i }$, $k=0,\dots,d-1$. The set of points $\{\sum_{i=0}^{k}(\pi\v{p})_i,\sum_{i=0}^{k}(\pi\v{s})_k\}_{k=0}^{d-1}\cup\{0,0\}$, connected by straight lines, defines a curve associated with the state $\hat\rho$. We denote it by $\beta(\v{p})$ and call thermomajorization curve of the state $\hat\rho$ represented by $\v{p}$. 
\end{defn}

The points $\{\sum_{i=0}^{k}(\pi\v{p})_i,\sum_{i=0}^{k}(\pi\v{s})_i\}_{k=0}^{d-1}$ will be called elbows of the curve $\beta(\v{p})$. 
The curve is concave due to the non-increasing order of elements in $\v{d}$. Let us note that there might be more than one permutation leading to a concave curve $\beta(\v{p})$. 

\begin{defn}\textbf{$\beta$-order:} $\beta$-order of $\rho$ is defined as a vector $\pi(1\;\dots\;d)$.
\end{defn}\label{Defn:betaorder}
 $\beta$-order shows the order of segments assuring convexity of $\beta(\v{p})$. See Fig. \ref{termow} for examples of curves with different $\beta$-orders for a $d=3$ case. 

Thermomajorization curves are used to characterize possible transitions between states under Thermal Operations \cite{Horodecki2013} :

\begin{defn}\textbf{Thermomajorization:} A curve $\beta(\v{p})$ thermomajorizes $\beta(\v{q})$ iff all elbows of $\beta(\v{q})$ lie on $\beta(\v{p})$ or below it. 
\end{defn}

\begin{lemma}[Necessary and sufficient conditions for the existence of a thermal operation (\cite{Horodecki2013}]\label{betaorderingnecsuff}
A transition from a state $\hat\rho$ with diagonal entries $\v{p}$ to a state $\hat\sigma$ with diagonal entries $\v{q}$ under thermal operations is possible only if $\beta(\v{p})$ thermomajorizes $\beta(\v{q})$. If $\hat\rho$ is diagonal in the basis of local Hamiltonian, then the above condition is also sufficient.
\end{lemma}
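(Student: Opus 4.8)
The plan is to reduce the statement to a purely classical question about Gibbs-stochastic matrices and then treat the two implications separately. By Lemma~\ref{HO_IMP_LEM}, for diagonal $\hat\rho$ a thermal operation $\Phi$ taking $\hat\rho$ to $\hat\sigma$ exists if and only if there is a stochastic matrix $A$ with $A\v{p}=\v{q}$ and $A\v{\gamma}_{\beta}=\v{\gamma}_{\beta}$; and for a general (possibly coherent) $\hat\rho$, time-translation covariance of thermal operations (which follows from $[U,\hat H_S+\hat H_B]=0$ together with $[\hat\tau_\beta,\hat H_B]=0$) gives $\mathcal{D}\circ\Phi=\Phi\circ\mathcal{D}$ for the energy dephasing $\mathcal{D}$, hence $\mathcal{D}(\hat\sigma)=\Phi(\mathcal{D}(\hat\rho))$, and $\mathcal{D}(\hat\rho),\mathcal{D}(\hat\sigma)$ are the diagonal states with entries $\v{p},\v{q}$. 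So it is enough to show: a Gibbs-stochastic $A$ with $A\v{p}=\v{q}$ exists \emph{iff} $\beta(\v{p})$ thermomajorizes $\beta(\v{q})$; the ``if'' direction, specialised back to diagonal $\hat\rho$, then yields an honest thermal operation via Lemma~\ref{HO_IMP_LEM}.

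For necessity I would use the variational form of the thermomajorization curve: its height over abscissa $x$ is $f_{\v{p}}(x)=\max\{\langle\v{p},\v{t}\rangle:\ \v{0}\le\v{t}\le\v{1},\ \langle\v{\gamma}_{\beta},\v{t}\rangle\le x\}$, the optimal $\v{t}$ being, up to one fractional entry, the indicator of the top cells taken in descending $\beta$-order, which reproduces precisely the elbows of $\beta(\v{p})$. If $A\v{p}=\v{q}$ with $A\ge0$, $A^{T}\v{1}=\v{1}$ and $A\v{\gamma}_{\beta}=\v{\gamma}_{\beta}$, then for any feasible $\v{t}$ the vector $\v{t}':=A^{T}\v{t}$ satisfies $\v{0}\le\v{t}'\le\v{1}$ and $\langle\v{\gamma}_{\beta},\v{t}'\rangle=\langle A\v{\gamma}_{\beta},\v{t}\rangle=\langle\v{\gamma}_{\beta},\v{t}\rangle\le x$, while $\langle\v{q},\v{t}\rangle=\langle\v{p},\v{t}'\rangle\le f_{\v{p}}(x)$. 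Maximising over $\v{t}$ gives $f_{\v{q}}(x)\le f_{\v{p}}(x)$ for all $x$, which is exactly thermomajorization.

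For sufficiency I would reduce to ordinary majorization by Gibbs-rescaling. When the Gibbs weights are rational with common denominator $N$, splitting level $i$ into $Ne^{-\beta\epsilon_{i}}/Z$ equal sublevels turns $\v{\gamma}_{\beta}$ into the uniform distribution on $N$ points and converts thermomajorization of the curves into ordinary majorization of the embedded vectors; Hardy--Littlewood--P\'olya then supplies a doubly stochastic matrix implementing it, and averaging over the within-level sublevel permutations projects it to a Gibbs-stochastic $A$ with $A\v{p}=\v{q}$. Equivalently, one can construct $A$ directly as a product of elementary two-level Gibbs-preserving ``$\beta$-swaps'' (the thermodynamic analogue of $T$-transforms) that slide $\v{p}$ monotonically down its own curve, finishing with a small admixture of $\v{\gamma}_{\beta}$ so as to land exactly on $\v{q}$. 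Irrational Gibbs factors are handled by rational approximation together with compactness of the set of Gibbs-stochastic matrices. Since $\hat\rho$ is diagonal, Lemma~\ref{HO_IMP_LEM} then upgrades $A$ to a thermal operation sending $\hat\rho$ to $\hat\sigma$.

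The main obstacle is the sufficiency step: one must not merely reach \emph{some} state whose curve equals $\beta(\v{q})$, but hit the exact vector $\v{q}$, i.e.\ every point on or below $\beta(\v{p})$, which is what forces the final mixing-toward-Gibbs adjustment into the construction. The rational-to-irrational limit must also be taken with care, verifying that the limiting matrix remains nonnegative, fixes $\v{\gamma}_{\beta}$, and still maps $\v{p}$ to $\v{q}$; and the covariance argument underlying necessity for non-diagonal $\hat\rho$, though standard, should be stated precisely.
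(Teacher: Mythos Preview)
The paper does not supply its own proof of this lemma: it is stated as a known result imported from \cite{Horodecki2013}, with no argument given in either the main text or the supplementary material. There is therefore nothing in the paper to compare your proposal against.

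That said, your sketch is correct and is essentially the argument of the cited reference. The variational description of the thermomajorization (Lorenz) curve and the pullback $\v{t}\mapsto A^{T}\v{t}$ is the clean way to get necessity, and the embedding into a fine-grained system with uniform Gibbs weights, followed by Hardy--Littlewood--P\'olya and projection back, is exactly the construction used in \cite{Horodecki2013} for sufficiency. Your use of time-translation covariance of thermal operations to handle the necessity direction for non-diagonal $\hat\rho$ is also the standard and correct route. One minor remark: the ``obstacle'' you flag in the last paragraph is slightly overstated, since for a fixed Hamiltonian the curve $\beta(\v{q})$ determines $\v{q}$ uniquely, so matching the curve already means hitting the exact vector; the genuine work lies in the rational-approximation/compactness step, which you have identified correctly.
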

Note that the set of all stochastic matrices which preserves the probability vector $\v{\gamma}_{\beta}$ forms a polytope (see \cite{MazurekPRA} for its complete characterization), thus the set of all achievable states also forms a polytope. We call this polytope a thermal polytope of $\hat\rho$, and denoted by $\mathcal{T}(\hat\rho)$. Its extremal points are defined with the help of the notion of \textit{tight thermomajorization}:

\begin{defn}\textbf{Tight thermomajorization:}
If the curve $\beta(\v{q})$ has all elbows on the curve $\beta(\v{p})$, then $\beta(\v{p})$ tightly thermomajorizes $\beta(\v{q})$.
\end{defn}

\begin{lemma}
[Characterizing the extremal points of $\mathcal{T}(\hat\rho)$ (Theorem 4 of \cite{MazurekPRA})]\label{ExtremalofTC} For a state $\hat\rho$ with probability vector $\v{p}$, a state $\hat\sigma$ with probability vector $\v{q}$ belongs to the set of extremal points of $\mathcal{T}(\hat\rho)$ if an only if $\beta(\v{p})$ tightly thermomajorizes $\beta(\v{q})$. 
\end{lemma}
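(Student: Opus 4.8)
The plan is to prove both implications directly from the thermomajorization description of $\mathcal T(\hat\rho)$ (Lemma~\ref{betaorderingnecsuff}), using two elementary facts about the curves. First, since $\beta(\v p)$ is concave and runs from $(0,0)$ to $(Z,1)$ with $Z=\sum_i e^{-\beta\epsilon_i}$, the condition that every elbow of $\beta(\v q)$ lies on or below $\beta(\v p)$ is equivalent to $\beta(\v q)(x)\le\beta(\v p)(x)$ for \emph{all} $x$ (a chord joining two points under a concave curve stays under it). Second, for any permutation $\pi$ of the levels and any $k$, the partial sum $\sum_{i\le k}q_{\pi(i)}$ is at most $\beta(\v q)\bigl(\sum_{i\le k}e^{-\beta\epsilon_{\pi(i)}}\bigr)$, with equality iff $\{\pi(1),\dots,\pi(k)\}$ is a prefix of some $\beta$-order of $\v q$; this is the greedy/LP characterisation of the curve. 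As noted above, $\mathcal T(\hat\rho)$ is a compact convex polytope, so ``extremal point'' means ``not a proper convex combination of two distinct points of $\mathcal T(\hat\rho)$''.

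\emph{Tight $\Rightarrow$ extremal.} Suppose $\beta(\v p)$ tightly thermomajorizes $\beta(\v q)$, fix a $\beta$-order $\pi^*$ of $\v q$, and set $x_k=\sum_{i\le k}e^{-\beta\epsilon_{\pi^*(i)}}$, $y_k=\sum_{i\le k}q_{\pi^*(i)}$, so tightness reads $y_k=\beta(\v p)(x_k)$ for all $k$. If $\v q=\lambda\v q^{(1)}+(1-\lambda)\v q^{(2)}$ with $\v q^{(j)}\in\mathcal T(\hat\rho)$ and $\lambda\in(0,1)$, then the two facts give $\sum_{i\le k}q^{(j)}_{\pi^*(i)}\le\beta(\v q^{(j)})(x_k)\le\beta(\v p)(x_k)=y_k$ for each $j$; since a convex combination of the two partial sums equals $y_k$, both equal $y_k$ for every $k$, and taking successive differences yields $\v q^{(1)}=\v q^{(2)}=\v q$. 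Hence $\v q$ is extremal.

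\emph{Not tight $\Rightarrow$ not extremal.} Suppose some elbow of $\beta(\v q)$ lies \emph{strictly} below $\beta(\v p)$; take the $\beta$-order of $\v q$ to be the identity and pick $k^*\in\{1,\dots,d-1\}$ with $y_{k^*}<\beta(\v p)(x_{k^*})$. Since $\beta(\v p)(x_{k^*})\le 1$ we get $y_{k^*}<1$, so not all of $q_{k^*+1},\dots,q_d$ vanish; as these entries are listed in non-increasing ratio order this forces $q_{k^*+1}>0$, hence also $q_{k^*}>0$. For small $\delta>0$ set $\v q^{\pm}=\v q\pm\delta(e_{k^*}-e_{k^*+1})$; these are probability vectors, and in the ordering $\pi^*=\mathrm{id}$ — still a $\beta$-order of $\v q^{\pm}$ for small $\delta$ — the only partial sum that changes is $y_{k^*}\mapsto y_{k^*}\pm\delta$. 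Choosing $\delta<\beta(\v p)(x_{k^*})-y_{k^*}$, all elbows of $\beta(\v q^{\pm})$ still lie on or below $\beta(\v p)$, so by the first fact and Lemma~\ref{betaorderingnecsuff} we have $\v q^{\pm}\in\mathcal T(\hat\rho)$; but $\v q=\tfrac12(\v q^++\v q^-)$ with $\v q^+\ne\v q^-$, so $\v q$ is not extremal.

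The step I expect to need the most care is the handling of \emph{ties in the $\beta$-order}: when several ratios $q_i/e^{-\beta\epsilon_i}$ coincide the $\beta$-order is not unique, the perturbation $\v q^{\pm}$ can change which permutation realises the concave envelope, and one must verify that the strict slack at $x_{k^*}$ is inherited by the (possibly relocated) breakpoint. This is settled by the observation that a chord of a concave curve which dips strictly below it at one interior point stays strictly below it on the whole open segment, so a small enough perturbation still lands inside $\mathcal T(\hat\rho)$; alternatively one reduces to the generic non-degenerate case by continuity. Everything else is routine bookkeeping.
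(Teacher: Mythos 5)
The paper does not actually prove this lemma — it is imported verbatim as Theorem 4 of \cite{MazurekPRA} — so there is no in-paper argument to match; your proposal supplies a self-contained proof, and it is correct. It is also arguably more elementary than the cited source, which works with the polytope of Gibbs-stochastic matrices, whereas you argue directly on the polytope of achievable populations, using only (i) the LP/greedy fact that $\sum_{i\in S}q_i\le\beta(\v q)(x_S)$ with equality iff $S$ is an ``upper set'' of the ratios, and (ii) concavity of $\beta(\v p)$. The forward direction (tight $\Rightarrow$ extremal) is clean as written. In the converse direction, the assertion that the identity remains a $\beta$-order of $\v q^{\pm}$ and that ``the only partial sum that changes is $y_{k^*}$'' is literally false when there are ties adjacent to the block containing $k^*$, but you flag exactly this, and your proposed fix is the right one; to make it airtight, note that membership in $\mathcal T(\hat\rho)$ is equivalent to $\sum_{i\in S}q'_i\le\beta(\v p)(x_S)$ for \emph{all} subsets $S$, that the only constraints moved by $\pm\delta(e_{k^*}-e_{k^*+1})$ are those with $S$ separating $k^*$ from $k^*+1$, that such an $S$ can be active only if it is an upper set of $\v q$, and that any such upper set has $x_S$ either equal to $x_{k^*}$ or interior to the linearity segment of $\beta(\v q)$ containing $x_{k^*}$ — where the nonnegative concave function $\beta(\v p)-\ell$ ($\ell$ the linear piece) would have to vanish at an interior point and hence vanish identically, contradicting the strict slack at $x_{k^*}$. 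Hence all moved constraints have strictly positive slack, finiteness gives a uniform admissible $\delta$, and $\v q=\tfrac12(\v q^++\v q^-)$ is not extremal. With that bookkeeping made explicit, the proof is complete.
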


\begin{figure}[ht]
\centering
\includegraphics[width = 0.5\linewidth]{./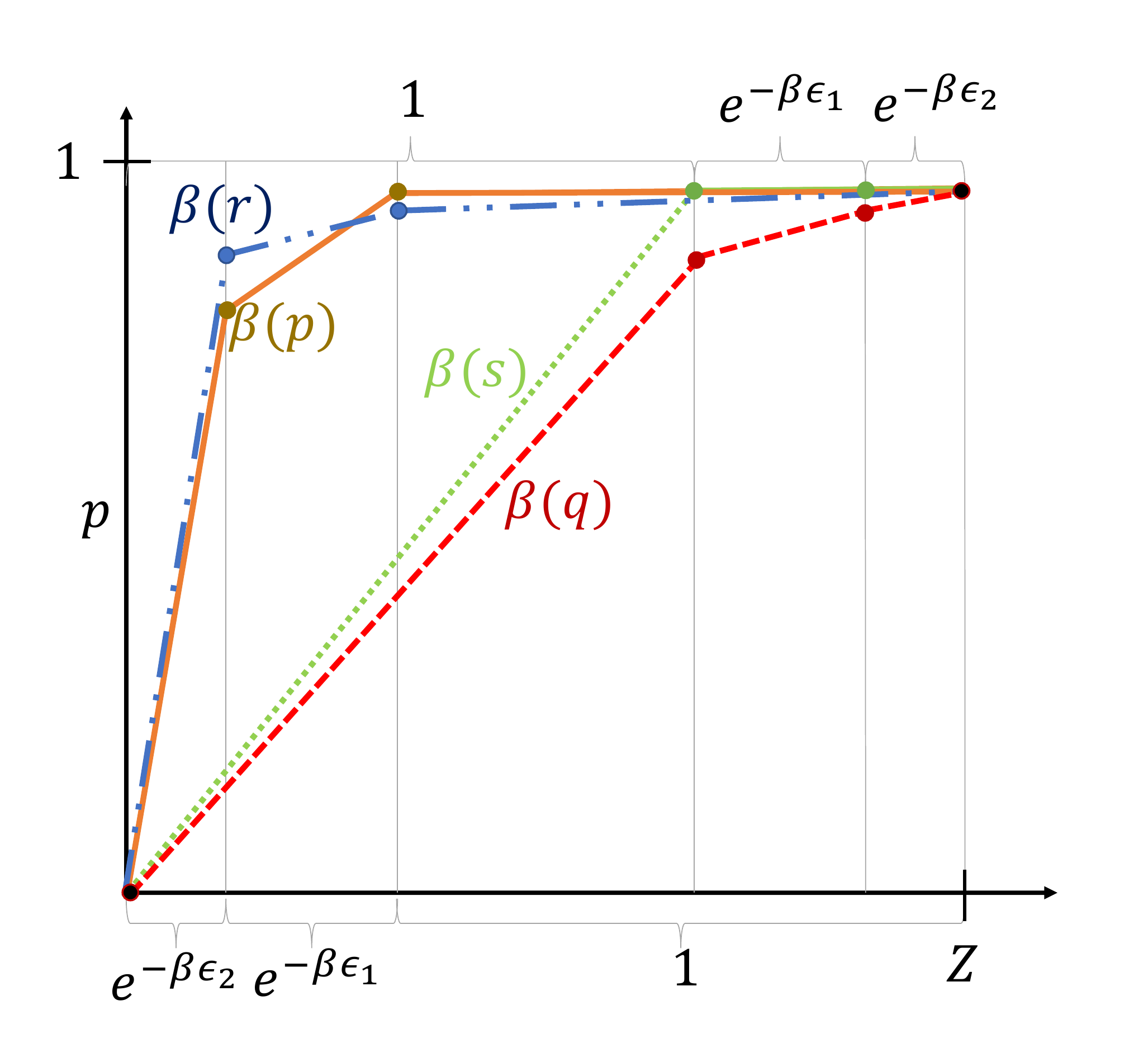}
\caption{\label{termow} 
 Thermomajorization diagram for a three dimensional system with Hamiltonian $\hat H=\epsilon_{1}|\epsilon_{1}\rangle\langle \epsilon_{1}|+\epsilon_{2}|\epsilon_{2}\rangle\langle \epsilon_{2}|$, $\epsilon_{2}>\epsilon_{1}$. Elbows of curves are indicated by circles. Curves $\beta(\v{s})$ and $\beta(\v{q})$ are \textit{thermomajorized} by curve $\beta(\v{p})$. Curve $\beta(\v{r})$ does not thermomajorize $\beta(\v{p})$, nor $\beta(\v{p})$  thermomajorizes $\beta(\v{r})$. Curve $\beta(\v{s})$ is \textit{tightly thermomajorized} by curve $\beta(\v{p})$. $\v{p}$ and $\v{r}$ have $\beta$-order $(3,2,1)$, while $\v{s}$, and $\v{q}$ have $\beta$-order $(1,2,3)$.\label{Fig6}}
\end{figure}

Finally, we will show that a thermal process which leads to descending $\beta$-order of the final state, i.e. $(d,d-1,\dots,1)$, maximizes the final energy, provided $\hat\sigma$ belongs to the set of extremal states of $\mathcal{T}(\hat\rho)$: 

\begin{lemma}\label{energy_max}
For every initial state $\hat\rho$ with probability vector $\v{p}$ in dimension $d$, among all states $\hat\sigma$, with probability vector $\v{q}$, which can be obtained from it under thermal operations, the state with the highest energy corresponds to thermomajorization curve $\beta(\v{q})$ which is tightly thermomajorized by $\beta(\v{p})$ and has $\beta$-order $(d,d-1,\dots,1)$.
\end{lemma}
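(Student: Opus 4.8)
The plan is to reduce the statement to a two-level exchange argument, the mirror image of the one behind Lemma~\ref{Postive_Heat}. Index the energy eigenstates so that $\epsilon_1\le\dots\le\epsilon_d$, and write $s_i\propto e^{-\beta\epsilon_i}$. First I would note that the average energy $E(\hat\sigma)=\sum_i\epsilon_i q_i$ is a \emph{linear} functional of the population vector $\v{q}$; since $\mathcal{T}(\hat\rho)$ is a compact polytope, the maximum is attained, and it is attained at an extremal point, which by Lemma~\ref{ExtremalofTC} is exactly a state whose thermomajorization curve $\beta(\v{q})$ is tightly thermomajorized by $\beta(\v{p})$. This already gives the ``tightly thermomajorized'' half of the claim and reduces the problem to showing that the maximizer can be taken to have $\beta$-order $(d,d-1,\dots,1)$.

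Next comes the exchange step. Let $\hat\sigma$ be an extremal state, and — after permuting, within each maximal block of indices sharing a common value of $q_i/s_i$, so as to list larger indices first, an operation that changes neither $\hat\sigma$ nor $\beta(\v{q})$ — let $\pi$ denote its $\beta$-order. If $\pi$ is not the reversal $(d,\dots,1)$ it has an adjacent ascent $a=\pi(k)<b=\pi(k+1)$, necessarily with $q_a/s_a>q_b/s_b$ (equality having been eliminated by the reordering) and hence $\epsilon_a<\epsilon_b$. Consider the stochastic map that acts as the identity on all levels except $\{a,b\}$ and on $\{a,b\}$ moves a small weight from $a$ to $b$ while fixing $(s_a,s_b)$: it sends $(q_a,q_b)\mapsto\bigl((1-x)q_a+x\tfrac{s_a}{s_b}q_b,\;xq_a+(1-x\tfrac{s_a}{s_b})q_b\bigr)$, it is stochastic for $0<x\le s_b/s_a$, it preserves the whole Gibbs vector, and it changes the excited population by $\Delta q_b=x\,s_a\,(q_a/s_a-q_b/s_b)>0$, so the energy strictly increases by $(\epsilon_b-\epsilon_a)\,\Delta q_b>0$. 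Composing this map with the thermal operation producing $\hat\sigma$ yields a Gibbs-preserving map, hence — for diagonal states — an attainable point of $\mathcal{T}(\hat\rho)$ of strictly larger energy, so $\hat\sigma$ cannot be the maximizer; consequently a maximizer has $\beta$-order $(d,d-1,\dots,1)$.

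Finally I would pin down this state. Once the $\beta$-order is fixed to the reversal, the Gibbs-coordinates of the elbows of $\beta(\v{q})$ are forced to the partial sums $s_d,\,s_d+s_{d-1},\,\dots$ of the Gibbs weights listed in decreasing index order, and tight thermomajorization forces the corresponding population-coordinates onto $\beta(\v{p})$, which determines $\hat\sigma$ uniquely; concavity of $\beta(\v{p})$ makes the resulting ratios $q_d/s_d\ge q_{d-1}/s_{d-1}\ge\dots$ come out non-increasing, so this state genuinely has $\beta$-order $(d,\dots,1)$ and the construction is consistent. Hence the highest-energy state in $\mathcal{T}(\hat\rho)$ is exactly the one whose curve is tightly thermomajorized by $\beta(\v{p})$ and has $\beta$-order $(d,\dots,1)$, as claimed. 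I expect the only real friction to be bookkeeping — treating the ties $q_a/s_a=q_b/s_b$ carefully, and checking that the small exchange map is simultaneously stochastic, Gibbs-preserving, and (after composition) realizable by a thermal operation on diagonal states — rather than any conceptual ingredient beyond what already appears in the proof of Lemma~\ref{Postive_Heat}.
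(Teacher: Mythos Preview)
Your argument is correct and shares the paper's skeleton: linearity of the energy reduces the maximization to extremal (tightly-thermomajorized) states, and an adjacent-swap argument on the $\beta$-order then forces the reversal $(d,\dots,1)$. The exchange step is executed differently, though. The paper compares two extremal states directly: given an adjacent ascent $\alpha_k<\alpha_{k+1}$ in the $\beta$-order, it swaps these two segments and reads off the new tightly-thermomajorized state, observing from concavity of $\beta(\v{p})$ that the occupation on level $\alpha_{k+1}$ cannot decrease (its segment of fixed Gibbs-width has moved leftward on a concave curve) while that on $\alpha_k$ cannot increase, so the energy is non-decreasing; iterating reaches $(d,\dots,1)$. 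You instead apply a two-level Gibbs-preserving stochastic map to $\hat\sigma$ itself --- precisely the mirror of the proof of Lemma~\ref{Postive_Heat}, as you say --- and obtain a strict energy increase, hence a one-shot contradiction. Your route makes the parallel with Lemma~\ref{Postive_Heat} explicit and avoids the iteration; the paper's route stays inside the extremal set and needs only a weak inequality, which sidesteps the tie-handling you flag (in particular the degenerate case $\epsilon_a=\epsilon_b$, where your ``hence $\epsilon_a<\epsilon_b$'' does not follow from $a<b$ alone and requires a separate relabeling of equal-energy levels). Your final paragraph pinning down uniqueness of the $(d,\dots,1)$ extremal state is an addition the paper does not spell out.
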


\begin{proof}
As energy of the state depends solely on its diagonal, it can be deduced from the thermomajorization curve of the state (even in a case when some non-diagonal elements are initially present). A set of states achievable by thermal operations from a given initial state is a convex set. As energy is linear with respect to mixture of states, it is maximized for an extremal state in this set. From Lemma \ref{energy_max} and Lemma \ref{betaorderingnecsuff} we see that thermomajorozation curves of these extremal states are tightly thermomajorized by the curve of the initial state. Therefore, what remains to be shown is that the $\beta$-order of this curve should be $(d,d-1,\dots,1)$. Assume that the $\beta$-order of the final state is different, i.e. that in $(\alpha_{1},\dots,\alpha_{k},\alpha_{k+1},\dots,\alpha_{d})$ there exist $\alpha_{k}$ and $\alpha_{k+1}$ such that $\alpha_{k}<\alpha_{k+1}$. It is straightforward to see that swapping $\alpha_{k}$ with $\alpha_{k+1}$ and transitioning instead to a final state with $\beta$-order $(\alpha_{1},\dots,\alpha_{k+1},\alpha_{k},\dots,\alpha_{d})$ leads to a state with non-lower energy, as the modified $\beta$-order implies non-decreased occupation on level $\alpha_{k+1}$, and non-increased occupation on the level $\alpha_{k}$, compared to the previous final state, while level $\alpha_{k+1}$ is associated with higher energy than $\alpha_{k}$, according to the assumption $\alpha_{k+1}>\alpha_{k}$. We can use the above argument to transit in steps from an arbitrary $\beta$-order to $(d,d-1,\dots,1)$, while having the energy of the final state unchanged or increasing in each step.   
\end{proof}

We use the above lemma to construct states of maximal energy in the thermal cone of the ground state of finite dimensional system, and analyze their ergotropy in Fig. \ref{fig:saturation}.

\subsection{Proof of Lemma \ref{Postive_Heat}}\label{PROOF_positive_heat_flow}
We begin by re-writing the definition of the amount of heat that is exchanged with  hot reservoir given in Eq. \eqref{Defn:HeatExchanged} 
\begin{equation}\label{Defn:HeatExchangedAgain}
    Q(\mathcal{E}_{H}(\hat{\rho}_S))= \Tr(\hat{H}_S\big(\mathcal{E}_{H}(\hat{\rho}_S)-\hat{\rho}_S\big)),
\end{equation}
where $\hat{\rho}_S$ is the state of the working body  that has been thermalized with cold reservoir at inverse temperature $\beta_C$ as given in Eq. \eqref{state_of_WB}. We see that function $Q(\cdot)$ is linear.  Therefore, when we minimise the function $Q(\cdot)$ over all the states that can be achieved from $\hat\rho_S$ via thermal operation i.e.,
\begin{equation}
    \min_{\hat\sigma\in\mathcal{T}(\hat\rho_S)}Q(\hat\sigma_S) := Q(\hat\sigma_S^{*}),
\end{equation}
where
the minimum is achieved at a extremal point labelled by $\hat\sigma_S^{*}$ of the set $\mathcal{T}(\hat\rho_S)$. Therefore, it is enough to prove $Q(\hat\sigma_S^{*})\geq 0$ which we shall do by showing $\hat\sigma_S^{*}=\hat\rho_S$. From lemma \ref{HO_IMP_LEM}, we see that applying thermal operation on the state $\hat\rho_S$ in presence of a hot bath at inverse temperature $\beta_H$, is equivalent to applying a stochastic matrix on the probability vector   
\begin{equation}
    \v{p}_S = \Big(\langle \epsilon_1|\hat{\rho}_S|\epsilon_1\rangle  \ldots \langle \epsilon_d|\hat{\rho}_S|\epsilon_d\rangle \Big),
\end{equation}
such that the stochastic matrix preserves
\begin{equation}\label{eq:HotGibbs}
     \v{\gamma}_{\beta_H} = \Big(\frac{e^{-\beta_H\epsilon_1}}{Z}  \ldots \frac{e^{-\beta_H\epsilon_d}}{Z}\Big).
\end{equation}
Note that the probability vector $\v{p}_S$ has $\beta$-order $(12\ldots d)$. From lemma \ref{betaorderingnecsuff}, we see that $\v{p}_S\succ_{\beta_H}\v{s}^{*}_S$ where
\begin{equation}
    \v{s}^{*}_S = \Big(\langle \epsilon_1|\hat\sigma_S^{*}|\epsilon_1\rangle  \ldots \langle \epsilon_d|\hat\sigma_S^{*}|\epsilon_d\rangle \Big).
\end{equation}
If $\v{s}^{*}_S$ has $\beta$-order $(12\ldots d)$ then using lemma \ref{ExtremalofTC}, immediately gives $\v{s}^{*}_S=\v{p}_S$.\\
Now consider the situation when $\v{s}^{*}_S$ has $\beta$-order $(\alpha_1\ldots\alpha_d)\neq(12\ldots d)$. Then, there will be at least two conseceutive entries $\alpha_{k}$ and $\alpha_l$ in the $\beta$-order $(\alpha_1\ldots\alpha_d)$ such that $\alpha_{k}>\alpha_l$. Let's call $\alpha_{k}=i$ and $\alpha_l=j$ where $i\;,\; j\in\{1,\ldots,d\}$. Since $i>j$, therefore $\epsilon_i>\epsilon_j$ since conventionally we take $\epsilon_1\leq\epsilon_2\leq\ldots\epsilon_d$. Now we shall construct the following stochastic matrix 
\begin{equation}
    \mathcal{A}=\begin{pmatrix}
    1-e^{-\beta_H(\epsilon_i-\epsilon_j)}& 1\\
    e^{-\beta_H(\epsilon_i-\epsilon_j)} & 0
    \end{pmatrix} \oplus \mathbb{I}_{\text{Rest}},
\end{equation}
where $\mathbb{I}_{\text{Rest}}$ is the identity matrix on the subspace orthogonal to the subspace spanned by $\{|\epsilon_i\rangle,|\epsilon_j\rangle\}$. It is easy to see that $\mathcal{A}(\v{\gamma}_{\beta_H})=\v{\gamma}_{\beta_H}$ where $\v{\gamma}_{\beta_H}$ is defined in Eq. \eqref{eq:HotGibbs}.
Next, from the definition of $\beta$-order given in \ref{Defn:betaorder}, we can write 
\begin{equation}\label{Eq:betaineq}
    \frac{\langle \epsilon_i|\hat{\sigma}^{*}_S|\epsilon_i\rangle}{e^{-\beta_H\epsilon_i}} > \frac{\langle \epsilon_j|\hat{\sigma}^{*}_S|\epsilon_j\rangle}{e^{-\beta_H\epsilon_j}},
\end{equation}
since $i=\alpha_k$ appears before $j=\alpha_l$ in the $\beta$-order $(\alpha_1\ldots\alpha_d)$. Note that we can always choose $\alpha_k$ and $\alpha_l$ in such a way that Eq. \eqref{Eq:betaineq} is a strict inequality. If it is not true, this will lead us to conclude that $\v{s}_S^*=\v{\gamma}_{\beta_H}$ which is not the case, because $\sigma_S^*$ has to be the extremal point of $\mathcal{T}(\hat\rho_S)$. From Eq. \eqref{Eq:betaineq} we have,
\begin{equation}\label{eq:Population_reduction}
    \langle \epsilon_i|\hat{\sigma}^{*}_S|\epsilon_i\rangle > \langle \epsilon_j|\hat{\sigma}^{*}_S|\epsilon_j\rangle e^{-\beta_H(\epsilon_i-\epsilon_j)},
\end{equation}
Now, applying the stochastic matrix $\mathcal{A}$ on $\v{s}_S^{*}$ transforms the probability on the energy level $|\epsilon_j\rangle$ and $|\epsilon_i\rangle$ as 
\begin{eqnarray}
(\mathcal{A}\v{s}_S^{*})_{j} &=& (\langle \epsilon_j|\hat{\sigma}^{*}_S|\epsilon_j\rangle + \langle \epsilon_i|\hat{\sigma}^{*}_S|\epsilon_i\rangle)-\langle \epsilon_j|\hat{\sigma}^{*}_S|\epsilon_j\rangle e^{-\beta_H(\epsilon_i-\epsilon_j)},\\
(\mathcal{A}\v{s}_S^{*})_{i} &=& \langle \epsilon_j|\hat{\sigma}^{*}_S|\epsilon_j\rangle e^{-\beta_H(\epsilon_i-\epsilon_j)},
\end{eqnarray}
and leaves the probability on the other level unchanged. As $\epsilon_i > \epsilon_j$ , employing \eqref{eq:Population_reduction}, we immediately have a state with diagonal entries given by $\mathcal{A}\v{s}_S^{*}$ has strictly less average energy than $\hat\sigma_S^{*}$, that can be via a thermal operation $\hat\sigma_S^{*}$. To put it simply , using thermal process $\mathcal{A}$ that acts non-trivially only on the energy levels $|\epsilon_i\rangle$ and $|\epsilon_j\rangle$, we have decreased the population on the higher energy level  $|\epsilon_i\rangle$ , and increased the population in the lower energy level $|\epsilon_j\rangle$. Therefore, the state with diagonal entries given by $\mathcal{A}\v{s}_S^{*}$, has strictly less energy than $\hat\sigma_S^{*}$ which is contradicting with the assumption that $\hat\sigma^*$ is the sate with lower energy in the set $\mathcal{T}(\hat\rho_S)$. Therefore, we conclude that $\hat\sigma_S^{*}$ has to be equals to $\hat{\rho}_S$. 

\subsection{Explicit work storage battery modelled by translational invariant weight}\label{SubWorkStotage}

In this section, we introduce a battery model that can be use to store the ergotropy present in a system as work. This battery is modelled by an ideal weight \cite{Aberg2014, Skrzypczyk2014} with translational invariant symmetry.  It is  described by the Hamiltonian
\begin{equation} \label{weight_hamiltonian}
    \hat{H}_B = \sum_{n=-\infty}^{\infty}n\omega |n\rangle\langle n|,
\end{equation}
while the unitaries $\hat{U}_{B}$ are restricted to these satisfying translational invariant symmetry given by 
\begin{equation} \label{translation_invariance}
    [\hat U_{B}, \hat \Gamma_\epsilon] = 0,
\end{equation}
where $\hat \Gamma_\epsilon$ is a shift operator which displaces the energy spectrum of the weight, i.e. $\hat \Gamma_\epsilon^\dag \hat H_B \hat \Gamma_\epsilon = \hat H_B + \epsilon$, and $\epsilon$ is an arbitrary real constant. The above assumption implies that the resulting transformation on the battery does not depend on its initial state, a welcomed property for its iterative use. 

It was shown that work defined as the change of average energy of the battery, resulting from the unitaries with the above constraint, satisfies the Second Law of Thermodynamics \cite{Skrzypczyk2014}. Moreover, for the case of the battery state having no coherences in the eigenbasis of $\hat{H}_{B}$, it was shown in  in \cite{Aberg2013, Alhambra2016, Lobejko2021} that the average energy gain corresponds to ergotropy of the working body.

\subsection{Detailed analysis of open cycle heat engine with qutrit working body.}\label{SubDet}
In this section we provide the detailed calculation of the amount of extractable ergotropy from the hot bath for the initial state 
that is given by 
\begin{equation} \hat{\rho}_S=\dyad{0}{0}+q^C_{10}\dyad{1}{1}+q^C_{20}\dyad{2}{2}),
\end{equation}
where $Z_C=1+q^C_{10}+q^C_{20}$ where and $q^C_{ij}$ is introduced in Eq. \eqref{eq:Notation} as $q^C_{ij} = e^{-\beta_C(\omega_i-\omega_j)}$. We write the probability vector generated from the diagonal of $\hat{\rho}_S$ as
\begin{equation}\label{eq:Initialstate2}
    \v{p}_S = \frac{1}{Z_C}(1 \quad q^C_{10} \quad q^C_{20}),
\end{equation}
We can immediately see the $\beta$-order of the of the probability vector at inverse temperature $\beta_H$ is $\v{p}_S$ is (123) since
\begin{equation}
    \frac{1}{Z_C}\geq \frac{q^C_{10}}{e^{-\beta_H\omega_1}} \geq \frac{q^C_{20}}{e^{-\beta_H\omega_2}},
\end{equation}

Now we shall employ the lemma from \cite{Mazurek_2018}, that characterizes the extremal thermal processes that act on any probability vector $\v{p}$ with $\beta$-order $(123)$ and give the spectrum of the extremal states of $\mathcal{T}(\hat{\rho}_S)$.  
\begin{lemma}\label{extremal}
[Based on Table 1 of \cite{Mazurek_2018}] For any probability vector $\v{p}$ with $\beta$-order $(123)$, thermal processes at the inverse temperature $\beta_H$ that gives the spectrum of non-trivial extremal states in $\mathcal{T}(\hat{\rho}_S)$ are given by 
\begin{eqnarray}\label{eq:ProcessesThm}
\begin{cases} \label{eq:ExtremalTP}
       A_1, A_2, A_5, A_9 &\quad \mathrm{for~} \beta_H\geq\beta_0 ,\\
       A_1, A_2, A_5, A_{12}, A_{13} &\quad\mathrm{for~}\beta_H<\beta_0,\\
     \end{cases}
\end{eqnarray}  
 where $\beta_0$ is defined by the following relation 
 \begin{eqnarray}
 e^{-\beta_0\omega_1}+e^{-\beta_0\omega_2}=1,
 \end{eqnarray}
 and 
\begin{eqnarray}
\label{A6}
A_{1} &=& \begin{pmatrix}
1-q^H_{10}& 1 & 0 \\
q^H_{10} & 0 & 0 \\
0 & 0 & 1 
\end{pmatrix}\quad \quad \quad \quad \quad\quad\quad\quad
A_{2} = \begin{pmatrix}
1 & 0 & 0 \\
0 & 1-q^H_{21} & 1 \\
0 & q^H_{21} & 0 
\end{pmatrix},\\
A_{5} &=& \begin{pmatrix}
1-q^H_{20}& q^H_{21} & 0 \\
0 & 1-q^H_{21} & 1 \\
q^H_{20} & 0 & 0 
\end{pmatrix}\quad \quad \quad\quad\quad\;\;
A_{9} =\begin{pmatrix}
1-q^H_{10}-q^H_{20}& 1 & 1 \\
q^H_{10} & 0 & 0 \\
q^H_{20} & 0 & 0 
\end{pmatrix},\label{A9}\\
A_{12} &=& 
\begin{pmatrix}
0 & q^H_{01}-q^H_{21} & 1 \\
q^H_{10} & 0 & 0 \\
1-q^H_{10} & 1-q^H_{01}+q^H_{21} & 0 
\end{pmatrix}\quad\quad\quad
A_{13} = 
\begin{pmatrix}
0 & q^H_{01}-q^H_{21} & 1 \\
1-q^H_{20} & 1-q^H_{01}+q^H_{21} & 0 \\
q^H_{20} & 0 & 0 
\end{pmatrix}.\label{A13}
\end{eqnarray}
 Moreover, for the probability vector with $\beta$- order $(123)$, the final $\beta$-order for the thermal process given in Eq. \eqref{eq:ExtremalTP} is given by 
 \begin{center}\label{Tableofbetaorder}
\begin{tabular}{ |c|c|c|c| } 
\hline
Thermal processes & $\beta$-order \\
\hline
 $A_1$ & (213)  \\ 
$A_2$& (132)  \\ 
$A_5$& (312)  \\ 
$A_9$& (231) \text{  or  } (321)  \\ 
\hline
$A_{12}$& (231) \\
$A_{13}$& (321) \\
\hline
\end{tabular}
\end{center}
\end{lemma}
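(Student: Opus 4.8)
The plan is to obtain the statement as the $d=3$ specialization of the classification of extremal $\v{\gamma}_{\beta_H}$-preserving stochastic matrices from \cite{Mazurek_2018}, phrased in the thermomajorization language of Section~\ref{MPofTO}. By Lemma~\ref{HO_IMP_LEM}, a thermal operation at inverse temperature $\beta_H$ acting on the diagonal state $\hat\rho_S$ is the same as a column-stochastic matrix $\mathcal{A}$ with $\mathcal{A}\v{\gamma}_{\beta_H}=\v{\gamma}_{\beta_H}$, and by Lemma~\ref{ExtremalofTC} the output $\mathcal{A}\v{p}_S$ is an extremal point of $\mathcal{T}(\hat\rho_S)$ precisely when $\beta(\v{p}_S)$ tightly thermomajorizes its thermomajorization curve, i.e.\ when every elbow of the output curve lies exactly on $\beta(\v{p}_S)$. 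Since $\v{p}_S$ has the fixed $\beta$-order $(123)$ at $\beta_H$ (as established at the start of Section~\ref{SubDet}), the curve $\beta(\v{p}_S)$ is rigid, and the extremal thermal processes of interest are exactly the vertices of the polytope of $\v{\gamma}_{\beta_H}$-preserving stochastic matrices that send an input of $\beta$-order $(123)$ to such a tightly-thermomajorized, non-trivial (i.e.\ $\neq\hat\rho_S$) output. The classification of \cite{Mazurek_2018} (its Table~1) enumerates these vertices together with their output $\beta$-orders; what is left is to read off from that list the ones compatible with input $\beta$-order $(123)$, confirm their explicit matrix forms, and determine for which $\beta_H$ each is genuinely stochastic.

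I would then carry out the verification half directly. Matching the general parametrization of \cite{Mazurek_2018} to input $\beta$-order $(123)$ singles out the matrices $A_1,A_2,A_5$, whose outputs have $\beta$-orders $(213),(132),(312)$, together with the ``population-reversing'' candidates $A_9$ (output $\beta$-order $(231)$ or $(321)$, depending on the precise input populations) and $A_{12},A_{13}$ (outputs $(231)$ and $(321)$). A one-line check per matrix confirms that each is column-stochastic and fixes $\v{\gamma}_{\beta_H}$ --- for instance the top row of $A_9$ applied to $\v{\gamma}_{\beta_H}$ yields $\tfrac{1}{Z_H}\big(1-e^{-\beta_H\omega_1}-e^{-\beta_H\omega_2}+e^{-\beta_H\omega_1}+e^{-\beta_H\omega_2}\big)=\tfrac{1}{Z_H}$, while the remaining rows merely permute entries --- and applying $A_i$ to $\v{p}_S$ and sorting the ratios $(A_i\v{p}_S)_k\,e^{\beta_H\omega_k}$ in non-increasing order reproduces the final $\beta$-order table.

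The case split is forced by positivity of the matrix entries. The only entry of $A_9$ that can turn negative is its $(1,1)$ entry $1-q^H_{10}-q^H_{20}=1-e^{-\beta_H\omega_1}-e^{-\beta_H\omega_2}$, and since $x\mapsto e^{-x\omega_1}+e^{-x\omega_2}$ is strictly decreasing this is non-negative iff $\beta_H\ge\beta_0$, with $\beta_0$ the solution of $e^{-\beta_0\omega_1}+e^{-\beta_0\omega_2}=1$. Dually, the entries $1-q^H_{01}+q^H_{21}$ appearing in $A_{12}$ and $A_{13}$ are non-negative iff $\beta_H\le\beta_0$ (the same inequality after rewriting $q^H_{01}=e^{\beta_H\omega_1}$ and $q^H_{21}=e^{-\beta_H(\omega_2-\omega_1)}$), every other entry of $A_{12},A_{13}$ being non-negative for all $\beta_H$. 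Hence for $\beta_H\ge\beta_0$ the non-trivial extremal processes are $A_1,A_2,A_5,A_9$, and for $\beta_H<\beta_0$ they are $A_1,A_2,A_5,A_{12},A_{13}$. A consistency check rounds this off: at $\beta_H=\beta_0$ one has $q^H_{10}+q^H_{20}=1$ and $q^H_{01}-q^H_{21}=1$, so $A_9$, $A_{12}$ and $A_{13}$ all coincide --- the single reversing vertex of the $\beta_H\ge\beta_0$ regime bifurcates into the pair $A_{12},A_{13}$ as $\beta_H$ drops below $\beta_0$.

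The main obstacle is the completeness direction: proving that $A_1,A_2,A_5$ together with $A_9$ (respectively $A_{12},A_{13}$) exhaust \emph{all} non-trivial vertices for input $\beta$-order $(123)$, and that no further vertex materializes near $\beta_H=\beta_0$. This is exactly the content imported from \cite{Mazurek_2018}: a vertex of the $\v{\gamma}_{\beta_H}$-preserving stochastic polytope, acting on a fixed input $\beta$-order, must send each elbow of $\beta(\v{p}_S)$ onto an elbow of the output curve (tightness, Lemma~\ref{ExtremalofTC}) and admit no one-parameter stochasticity- and Gibbs-preserving deformation (rigidity, i.e.\ enough entries forced to vanish). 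In $d=3$ these two conditions leave only the finitely many combinatorial patterns tabulated there, with $\beta_0$ marking the unique value at which the pattern realizing the two population-reversing output orders changes. Everything else in the proof is the bookkeeping described above.
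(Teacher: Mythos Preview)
Your proposal is correct and matches the paper's approach: the paper does not supply an independent proof of this lemma but simply imports it from Table~1 of \cite{Mazurek_2018}, and your sketch does exactly that, adding the explicit verification that each $A_i$ is column-stochastic and Gibbs-preserving together with the positivity analysis pinning down the threshold $\beta_0$. Your observation that the $(1,1)$ entry of $A_9$ and the entry $1-q^H_{01}+q^H_{21}$ of $A_{12},A_{13}$ change sign precisely at $\beta_0$ (and that all three matrices coincide there) is correct and makes explicit what the paper leaves implicit in the citation.
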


Using this lemma, we shall analyze the optimal performance of open cycle heat engine with initial state having spectrum $\v{p}_S$ i.e., calculate the optimal ergotropy that can be extracted from the hot bath via thermal process $A_i$ and stored it as work. We denote it as $R_{A_i}(\v{p}_S)$. Thus we can calculate the efficiency of such heat engine.\\

\textbf{Calculation of the amount of transferred heat from Hot bath  and extractable ergotropy from $\v{p}_S$ using thermal process $A_1$: }

Note that thermal process $A_1$ transforms the initial probability vector given in Eq. \eqref{eq:Initialstate2} in the following way
\begin{eqnarray}\label{eq:TransitionA1}
&&\frac{1}{Z_C}\begin{pmatrix}
1-q^H_{10}& 1 & 0 \\
q^H_{10} & 0 & 0 \\
0 & 0 & 1 
\end{pmatrix}
\begin{pmatrix}
1 \\
q^C_{10} \\
q^C_{20} 
\end{pmatrix}=
\frac{1}{Z_C}\begin{pmatrix}
(1-q^{H}_{10})+q^C_{10} \\
q^H_{10} \\
q^C_{20} 
\end{pmatrix} = \text{Spec}(\hat{\rho}^1_{S}),
\end{eqnarray}
From the table \ref{Tableofbetaorder}, we have the $\beta$- order of the final probability vector in Eq. \eqref{eq:TransitionA1}  is (213). That means 
\begin{eqnarray}\label{eq:betaorder_A1}
&&\frac{q^H_{10}}{Z_Cq^H_{10}}\geq\frac{(1-q^{H}_{10})+q^C_{10}}{Z_C} \geq \frac{q^C_{20}}{Z_Cq^H_{20}}\nonumber\\
&\Rightarrow& (1-q^{H}_{10})+q^C_{10} \geq \frac{q^C_{20}}{q^H_{20}} \geq q^C_{20},
\end{eqnarray}
where the last inequality follows from the fact $0\leq q^H_{20}\leq1$. Therefore, we have the following order of population that is compatible with Eq. \eqref{eq:betaorder_A1} is given by 
\begin{eqnarray}
&&\frac{(1-q^{H}_{10})+q^C_{10}}{Z_C} \geq \frac{q^H_{10}}{Z_C} \geq \frac{q^C_{20}}{Z_C}\label{eq:TrivialOrderA1},\\
&& \frac{q^H_{10}}{Z_C} \geq \frac{(1-q^{H}_{10})+q^C_{10}}{Z_C} \geq \frac{q^C_{20}}{Z_C}\label{eq:NonTrivialOrderA1}.
\end{eqnarray}
Note that if the order of the population given by Eq. \eqref{eq:TrivialOrderA1}, then no ergotropy extraction is possible from the final state because it is passive. Finally, if the order of the population of final probability vector in Eq. \eqref{eq:TransitionA1} is given by Eq. \eqref{eq:NonTrivialOrderA1} then the amount ergotropy that can be extracted is given by 
\begin{eqnarray}
R_{A_1}(\v{p}_S)&=&\max\Big\{0,\frac{1}{Z_C}\Big(q^H_{10}\omega_1+q^C_{20}\omega_2-q^C_{20}\omega_2-((1-q^{H}_{10})+q^C_{10})\omega_1\Big)\Big\}\nonumber\\
&=&\max\Big\{0,\frac{\omega_1}{Z_C}\Big(2q^{H}_{10}-1-q^C_{10}\Big)\Big\}.
\end{eqnarray}
Thus, a condition for positive ergotropy extraction is given by 
\begin{equation}
    2q^H_{10}\geq 1+ q^{C}_{10}.
\end{equation}
\textbf{Calculation of the amount of transferred heat from Hot bath  and extractable ergotropy from $\v{p}_S$ using thermal process $A_2$:} Thermal process $A_2$ transforms the initial probability vector given in Eq. \eqref{eq:Initialstate2} in the following way
\begin{eqnarray}\label{eq:TransitionA2}
\frac{1}{Z_C}\begin{pmatrix}
1& 0 & 0 \\
0 & 1-q^H_{21} & 1 \\
0 & q^H_{21} & 0 
\end{pmatrix}
\begin{pmatrix}
1 \\
q^C_{10} \\
q^C_{20} 
\end{pmatrix}=
\frac{1}{Z_C}\begin{pmatrix}
1 \\
(1-q^H_{21})q^C_{10}+q^C_{20} \\
q^H_{21}q^C_{10} 
\end{pmatrix}=\text{Spec}(\hat{\rho}^2_{S}).
\end{eqnarray}
From the table \ref{Tableofbetaorder}, we have the $\beta$- order of the final probability vector in Eq. \eqref{eq:TransitionA2}  is (132). That means 
\begin{eqnarray}\label{eq:betaorder_A2}
&&\frac{1}{Z_C}\geq\frac{q^H_{21}q^C_{10}}{Z_Cq^H_{20}}\geq\frac{(1-q^{H}_{21})q^C_{10}+q^C_{20}}{Z_Cq^{H}_{10}}, \\
&\Rightarrow& 1 \geq q^H_{21}q^C_{10} \quad  \text{and} \quad 1 \geq (1-q^{H}_{21})q^C_{10}+q^C_{20},\nonumber
\end{eqnarray}
where the last inequalities follows from $0\leq q^H_{10} \leq q^H_{20}\leq1$. Thus, order of population that is compatible with Eq. \eqref{eq:betaorder_A2} is given by
\begin{eqnarray}
&&\frac{1}{Z_C}\geq \frac{(1-q^{H}_{21})q^C_{10}+q^C_{20}}{Z_C} \geq \frac{q^{H}_{21}q^{C}_{10}}{Z_C},\label{eq:TrivialOrderA2}\\
&&\frac{1}{Z_C}\geq \frac{q^{H}_{21}q^{C}_{10}}{Z_C} \geq \frac{(1-q^{H}_{21})q^C_{10}+q^C_{20}}{Z_C},\label{eq:NonTrivialOrderA2}
\end{eqnarray}
If the order of the population of the final probability vector in Eq. \eqref{eq:TransitionA2} is given by Eq. \eqref{eq:TrivialOrderA2}, then non-zero ergotropy extraction is not possible due to passivity of final probability vector. Thus, amount of ergotropy that can be extracted from the final probability vector in Eq. \eqref{eq:TransitionA2} using thermal operation $A_2$ is given by 
\begin{eqnarray}
R_{A_2}(\v{p}_S)&=& \max\Big\{0,\frac{1}{Z_C}(\omega_2-\omega_1)\Big(2q^{H}_{21}q^{C}_{10}-(q^{C}_{10}+q^{C}_{20}\Big)\Big\}.\nonumber
\end{eqnarray}
Therefore, a condition for positive ergotropy extraction can be expressed as
\begin{equation}
    2q^{H}_{21}\geq 1+q^{C}_{21}.
\end{equation}
\textbf{Calculation of the amount of transferred heat from Hot bath  and extractable ergotropy from $\v{p}_S$ using thermal process $A_5$:}
Thermal process $A_5$ transforms the initial probability vector given in Eq. \eqref{eq:Initialstate2} as follows 
\begin{eqnarray}
\label{eq:TransitionA5}
&&\frac{1}{Z_C}\begin{pmatrix}
1-q^{H}_{20}& q^{H}_{21} & 0 \\
0 & 1-q^{H}_{21} & 1 \\
q^{H}_{20} & 0 & 0 
\end{pmatrix}
\begin{pmatrix}
1 \\
q^C_{10} \\
q^C_{20} 
\end{pmatrix}=
\frac{1}{Z_C}\begin{pmatrix}
1-q^{H}_{20}+q^{H}_{21}q^{C}_{10} \\
(1-q^H_{21})q^C_{10}+q^C_{20} \\
q^H_{20} 
\end{pmatrix}=\text{Spec}(\hat{\rho}^3_{S}).
\end{eqnarray}
From the table \ref{Tableofbetaorder}, we have the $\beta$- order of the final probability vector in Eq. \eqref{eq:TransitionA5}  is (312). That means 
\begin{eqnarray}\label{eq:betaorder_A5}
&&\frac{q^H_{20}}{Z_C q^H_{20}}\geq \frac{1-q^{H}_{20}+q^{H}_{21}q^{C}_{10}}{Z_C}\geq \frac{(1-q^H_{21})q^C_{10}+q^C_{20}}{Z_C q^H_{10}},\nonumber\\
&\Rightarrow& (1-q^H_{21})q^C_{10}+q^C_{20} \geq (1-q^H_{21})q^C_{10}+q^C_{20},
\end{eqnarray}
where we write the last inequality using the fact $0\leq q^H_{10}\leq 1$. Therefore, the order of populations that are compatible with Eq. \eqref{eq:betaorder_A5} is given by 
\begin{eqnarray}
&&\frac{1-q^{H}_{20}+q^{H}_{21}q^{C}_{10}}{Z_C}\geq \frac{(1-q^H_{21})q^C_{10}+q^C_{20}}{Z_C}\geq \frac{q^H_{20}}{Z_C},\label{eq:TrivialOrderA5}\\
&&\frac{1-q^{H}_{20}+q^{H}_{21}q^{C}_{10}}{Z_C} \geq \frac{q^H_{20}}{Z_C} \geq \frac{(1-q^H_{21})q^C_{10}+q^C_{20}}{Z_C},\label{eq:1NonTrivialOrderA5}\\
&&\frac{q^H_{20}}{Z_C} \geq \frac{1-q^{H}_{20}+q^{H}_{21}q^{C}_{10}}{Z_C} \geq \frac{(1-q^H_{21})q^C_{10}+q^C_{20}}{Z_C},\label{eq:2NonTrivialOrderA5}
\end{eqnarray}
If the order of the population for the final probability vector in Eq. \eqref{eq:TransitionA5} is given by Eq. \eqref{eq:TrivialOrderA5}, then no ergotropy extraction is possible since final probability vector is already passive. A non-trivial ergotropy extraction is possible if the ordering is given by Eq.~\eqref{eq:1NonTrivialOrderA5} and Eq. \eqref{eq:2NonTrivialOrderA5}. Therefore, we calculate the amount of extractable ergotropy as 
\begin{eqnarray}\label{eq:ErgA5}
R_{A_5}(\v{p}_S)&=&\max\Big\{0, \frac{(\omega_{2}-\omega_{1})}{Z_C}\Big(q^{H}_{20}-q^{C}_{20}-(1-q^{H}_{21})q^{C}_{10}\Big), \;\frac{\omega_{2}}{Z_C}q^{H}_{20}-\frac{\omega_{1}}{Z_C}\Big((1-q^{H}_{20})\nonumber\\&+&q^{H}_{21}q^{C}_{10}\Big)-\frac{(\omega_{2}-\omega_{1})}{Z_C}\Big((1-q^{H}_{21})q^{C}_{10}+q^{C}_{20}\Big)\Big\}.\nonumber\\
\end{eqnarray}

\textbf{Calculation of the amount of transferred heat from Hot bath  and extractable ergotropy from $\v{p}_S$ using thermal process $A_9$:}
Thermal process $A_9$ transforms the initial probability vector given in Eq. \eqref{eq:Initialstate2} as follows 
\begin{eqnarray}
\label{eq:TransitionA9}
&&\frac{1}{Z_C}\begin{pmatrix}
1-q^{H}_{10}-q^{H}_{20}& 1 & 1 \\
q^{H}_{10} & 0 & 0 \\
q^{H}_{20} & 0 & 0 
\end{pmatrix}
\begin{pmatrix}
1 \\
q^C_{10} \\
q^C_{20} 
\end{pmatrix}=
\frac{1}{Z_C}\begin{pmatrix}
1+q^{C}_{10}+q^{C}_{20}-q^{H}_{10}-q^{H}_{20} \\
q^H_{10} \\
q^H_{20} 
\end{pmatrix}=\text{Spec}(\hat{\rho}^4_{S}).
\end{eqnarray}
Since $q^H_{10}\geq q^H_{20}$ is true for any plausible values of $\beta_H$, therefore the possible orders of population is given by
\begin{eqnarray}
&&\frac{1+q^{C}_{10}+q^{C}_{20}-q^{H}_{10}-q^{H}_{20}}{Z_C}\geq \frac{q^H_{10}}{Z_C} \geq \frac{q^H_{20}}{Z_C}\label{eq:TrivialOrderA9},\\
&&\frac{q^H_{10}}{Z_C}\geq \frac{1+q^{C}_{10}+q^{C}_{20}-q^{H}_{10}-q^{H}_{20}}{Z_C}\geq \frac{q^H_{20}}{Z_C}\label{eq:1NonTrivialOrderA9},\\
&&\frac{q^H_{10}}{Z_C}\geq \frac{q^H_{20}}{Z_C}\geq \frac{1+q^{C}_{10}+q^{C}_{20}-q^{H}_{10}-q^{H}_{20}}{Z_C}.\label{eq:2NonTrivialOrderA9}
\end{eqnarray}
If the ordering of the populations in the final probability vectors are given by Eq. \eqref{eq:TrivialOrderA9}, then no ergotropy extraction is possible since initial state is already passive. Thus non-trivial ergotropy extraction is possible if the ordering of the population in the final probability vector in Eq. \eqref{eq:TransitionA9}  is given by Eq. \eqref{eq:1NonTrivialOrderA9} and Eq. \eqref{eq:2NonTrivialOrderA9}. Thus, the amount of extractable ergotropy is given by
\begin{eqnarray}\label{eq:ErgA9}
R_{A_9}(\v{p}_S) &=& \max\Big\{0,\;\frac{\omega_{1}}{Z_C}\Big(2q^{H}_{10}+q^{H}_{20}-(1+q^{C}_{10}+q^{C}_{20})\Big),\nonumber\\ &&\frac{\omega_1}{Z_C}(q^{H}_{10}-q^{H}_{20})+\frac{\omega_{2}}{Z_C}\Big(2q^{H}_{20}+q^{H}_{10}-(1+q^{C}_{10}+q^{C}_{20})\Big)\Big\}.
\end{eqnarray}

\textbf{Calculation of the amount of transferred heat from Hot bath  and extractable ergotropy from $\v{p}_S$ using thermal process $A_{12}$:} Thermal process $A_{12}$ transforms the initial probability vector given in Eq. \eqref{eq:Initialstate2} as follows 
\begin{eqnarray}
\label{eq:TransitionA12}
&&\frac{1}{Z_C}\begin{pmatrix}
0 & q^H_{01}-q^H_{21} & 1 \\
q^H_{10} & 0 & 0 \\
1-q^H_{10} & 1-q^H_{01}+q^H_{21} & 0 
\end{pmatrix}
\begin{pmatrix}
1 \\
q^C_{10} \\
q^C_{20} 
\end{pmatrix}=
\frac{1}{Z_C}\begin{pmatrix}
(q^{H}_{01}-q^{H}_{21})q^{C}_{10}+q^{C}_{20} \\
q^H_{10} \\
(1-q^{H}_{10})+(1-q^{H}_{01}+q^{H}_{21})q^C_{10} 
\end{pmatrix}=\text{Spec}(\hat{\rho}^5_{S}),\nonumber\\
\end{eqnarray}
From the table \ref{Tableofbetaorder}, we have the $\beta$- order of the final probability vector in Eq. \eqref{eq:TransitionA12}  is (231). That implies
\begin{eqnarray}\label{eq:betaorder_A12}
\frac{q^H_{10}}{Z_Cq^H_{10}}&\geq& \frac{(1-q^{H}_{10})+(1-q^{H}_{01}+q^{H}_{21})q^C_{10}}{Z_Cq^H_{20}}\nonumber\\&\geq& \frac{(q^{H}_{01}-q^{H}_{21})q^{C}_{10}+q^{C}_{20}}{Z_C}\nonumber\\
&\Rightarrow& q^{H}_{10}\geq \frac{q^{H}_{10}}{q^{H}_{20}}(1-q^{H}_{10})+(1-q^{H}_{01}+q^{H}_{21})q^C_{10} \nonumber\\&\geq& (1-q^{H}_{10})+(1-q^{H}_{01}+q^{H}_{21})q^C_{10},
\end{eqnarray}
where the last inequality follows from the fact $q^{H}_{10}\geq q^{H}_{20}$. Hence the ordering of the populations of the final probability vector Eq. \eqref{eq:TransitionA12} that are compatible with Eq. \eqref{eq:betaorder_A12}  is given by 
\begin{eqnarray}
&&\frac{1}{Z_C}\Big((q^{H}_{01}-q^{H}_{21})q^{C}_{10}+q^{C}_{20}\Big)\geq \frac{q^{H}_{10}}{Z_C}\geq \frac{1}{Z_C}\Big((1-q^{H}_{10})+(1-q^{H}_{01}+q^{H}_{21})q^C_{10}\Big),\label{eq:TrivialorderA12}\\
&&\frac{q^{H}_{10}}{Z_C}\geq \frac{1}{Z_C}\Big((q^{H}_{01}-q^{H}_{21})q^{C}_{10}+q^{C}_{20}\Big)\geq\frac{1}{Z_C}\Big((1-q^{H}_{10})+(1-q^{H}_{01}+q^{H}_{21})q^C_{10},\label{eq:1NonTrivialorderA12}\\
&& \frac{q^{H}_{10}}{Z_C}\geq \frac{1}{Z_C}\Big((1-q^{H}_{10})+(1-q^{H}_{01}+q^{H}_{21})q^C_{10}\Big)\geq\frac{1}{Z_C}\Big((q^{H}_{01}-q^{H}_{21})q^{C}_{10}+q^{C}_{20}\Big),\label{eq:2NonTrivialorderA12}
\end{eqnarray}
If the ordering of population of the final probability vector in Eq. \eqref{eq:TransitionA12} is given by Eq. \eqref{eq:TrivialorderA12}, then it is not possible to extract ergotropy from final probability vector due to passivity. Therefore, non-trivial ergotropy extraction is possible only if the ordering of populations in final probability vector is given by Eq. \eqref{eq:1NonTrivialorderA12} or Eq. \eqref{eq:2NonTrivialorderA12}. Thus, the amount of ergotropy that can be extracted using thermal operation $A_{12}$ is
\begin{eqnarray}\label{eq:ErgA12}
R_{A_{12}}(\v{p}_S)&=&\max\Big\{0, \frac{\omega_1}{Z_C}\Big(q^{H}_{10}-(q^{H}_{01}-q^{H}_{21})q^{C}_{10}-q^{C}_{20})\Big), 
\frac{\omega_1}{Z_C}\Big(q^{H}_{10}-(1-q^{H}_{10})-(1-q^{H}_{01}-q^{H}_{21})q^C_{10}\Big)\nonumber\\&+&  \frac{\omega_2}{Z_C}\Big((1-q^{H}_{10})+(1-q^{H}_{01}-q^{H}_{21})q^C_{10}-(q^{H}_{01}-q^{H}_{21})q^{C}_{10}-q^{C}_{20}\Big)\Big\},
\end{eqnarray}

\textbf{Calculation of the amount of transferred heat from Hot bath  and extractable ergotropy from $\v{p}_S$ using thermal process  $A_{13}$:} Thermal process $A_{13}$ transforms the initial probability vector given in Eq. \eqref{eq:Initialstate2} as follows 
\begin{eqnarray}
\label{eq:TransitionA13}
&&\frac{1}{Z_C}\begin{pmatrix}
0 & q^H_{01}-q^H_{21} & 1 \\
1-q^H_{20} & 1-q^H_{01}+q^H_{21} & 0 \\
q^H_{20} & 0 & 0 
\end{pmatrix}
\begin{pmatrix}
1 \\
q^C_{10} \\
q^C_{20} 
\end{pmatrix}=
\frac{1}{Z_C}\begin{pmatrix}
q^{C}_{10}(q^H_{01}-q^H_{21})+q^{C}_{20} \\
1-q^H_{20}+q^C_{10}(1-q^H_{01}+q^H_{21}) \\
q^H_{20} 
\end{pmatrix}=\text{Spec}(\hat{\rho}^6_{S}),\nonumber\\
\end{eqnarray}
From the table \ref{Tableofbetaorder}, we see that $\beta$-ordering of the final probability vector is given by (321).
It is straightforward to see that all ordering of the populations in the final probability vector given in Eq. \eqref{eq:TransitionA13} is compatible with this $\beta$- order. Thus ergotropy can be calculated as 
\begin{eqnarray}\label{eq:ErgA13}
R_{A_{13}}(\v{p}_S)&=&\max\Big\{0,\frac{\omega _2-\omega_1}{Z_C} \Big(2 q^H_{20}-1-q^C_{10} \Big(1+q^H_{21}-q^H_{01}\Big)\Big),\;\frac{\omega _1}{{Z_C}} \Big(1-q^C_{20}-q^H_{20}\nonumber\\&+&q^C_{10}\Big(1+2q^H_{21}-2q^H_{01}\Big)\Big),\; \frac{\omega_1}{Z_C}\Big(q^C_{10} \Big(1+q^H_{21}-q^H_{01}\Big)+1-2q^H_{20}\Big)+\frac{\omega_2}{Z_C}\Big(q^H_{20}-q^C_{20}\nonumber\\ &+&q^C_{10}q^H_{21}-q^C_{10}q^H_{01}\Bigg),\;\frac{\omega_1}{Z_C}\Big(1 - q^C_{20} - q^H_{20}+ q^C_{10}\Big(1 + 2 q^H_{21} - 2 q^H_{01}\Big)\Big) -\frac{\omega_2}{Z_C}\Big(1-2q^H_{20}\nonumber\\
&+&q^C_{10} \Big(1 + q^H_{21} - q^H_{01}\Big)\Big),\; \frac{\omega_2}{Z_C}\Big(q^H_{20}-q^C_{20}+q^C_{10}\Big(q^H_{21}-q^H_{01}\Big)\Big)\Big\}.
\end{eqnarray}

\bibliographystyle{vancouver}
\bibliography{sample}

\end{document}